\newtheorem{theorem}{Theorem}
\theoremstyle{definition}
\newtheorem{lemma}{Lemma}
\theoremstyle{definition}
\newtheorem{definition}{Definition}
\theoremstyle{definition}
\theoremstyle{definition}
\newtheorem{corollary}{Corollary}
\newenvironment{sketch}{\textsc{Proof Sketch.}}{\hfill$\square$\smallskip}
\theoremstyle{definition}
\theoremstyle{definition}
\crefname{section}{\S}{\S\S}
\Crefname{section}{\S}{\S\S}
\Crefname{line}{Line}{line}
\crefname{line}{Line}{line}
\Crefname{assumption}{Assumption}{assumption}
\crefname{table}{Table}{Tables}
\newcommand{\pstatezero}[1]{0_{#1}}
\newcommand{\pstateone}[1]{1_{#1}}
\newcommand{\fstate}[2]{\mathsf{state}(#1, #2)}
\newcommand{\fmsgs}[2]{\mathsf{sent}(#1, #2)}
\newcommand{\fmsgso}[2]{\mathsf{send\_omitted}(#1, #2)}
\newcommand{\fmsgr}[2]{\mathsf{received}(#1, #2)}
\newcommand{\fmsgro}[2]{\mathsf{receive\_omitted}(#1, #2)}
\newcommand{\fswitcho}[2]{{#1}^{#2 \rightleftarrows}}
\definecolor{lightgray}{gray}{0.90}
\renewenvironment{leftbar}[1][\hsize]
{%
\MakeFramed{\hsize#1\advance\hsize-\width\FrameRestore}%
}
{\endMakeFramed}
\algnewcommand{\BlueComment}[1]{\textcolor{blue}{\hfill\(\triangleright\) #1}}
\algnewcommand{\LineComment}[1]{\State \(\triangleright\) #1}
\crefname{lstlisting}{listing}{listings}
\Crefname{lstlisting}{Listing}{Listings}
\crefname{code}{line}{lines}
\Crefname{code}{Line}{Lines}
\definecolor{mygreen}{rgb}{0.254,0.572,0.294}
\definecolor{mygray}{rgb}{0.5,0.5,0.5}
\definecolor{myorange}{rgb}{1,0.35,0}
\definecolor{mymauve}{rgb}{0.58,0,0.82}
\definecolor{myblue}{rgb}{0.2,0.4,0.6}
\definecolor{rakos4orange}{RGB}{255,165,0}
\definecolor{rakos4blue}{RGB}{14,48,173}
\definecolor{rakos4lblue}{RGB}{92,172,238}
\definecolor{rakos4dgray}{RGB}{77,77,77}
\definecolor{plainred}{RGB}{211,63,63}
\definecolor{plainorange}{RGB}{221,105,41}
\lstdefinelanguage{Golang}%
  {morekeywords=[1]{package,import,struct,defer,panic,%
     recover,select,var,const,iota,},%
   morekeywords=[2]{string,uint,uint8,uint16,uint32,uint64,int,int8,int16,%
     int32,int64,bool,float32,float64,complex64,complex128,byte,rune,uintptr,%
     error,interface,message,node},%
   morekeywords=[3]{map,slice,make,new,nil,len,cap,copy,close,true,false,%
     delete,append,real,imag,complex,chan,},%
   morekeywords=[4]{break,continue,goto,switch,case,fallthrough,%
    default,},%
   morekeywords=[5]{Println,Printf,Error,Send},%
   sensitive=true,%
   morecomment=[l]{//},%
   morecomment=[s]{/*}{*/},%
   morestring=[b]",%
   morestring=[s]{`}{`},%
   }
\small\color{mygray}\textnormal,
\scriptsize\color{mygray}, 
\newcommand{\pierre}[1]{{\color{blue}{[PC: #1}]}}
\newcommand{\remove}[1]{}
\newif\ifcomments
\newcommand{\add}[1]{\ifcomments {\color{purple}{#1}}\else #1\fi}
\newcommand{\rem}[1]{\ifcomments {\cancel{#1}}\fi}
\newcommand{\replace}[2]{\ifcomments \rem{#1}\add{#2}\else #2\fi}
\newcommand{\ms}[1]{%
	    \relax\ifmmode
	        \mathord{\mathcode`\-="702D\it #1\mathcode`\-="2200}%
	    \else
	        {\it #1}%
	    \fi
}
\newcommand{\tup}[1]{%
	    \relax\ifmmode
	      \langle #1 \rangle%
	    \else
	        $\langle$ #1 $\rangle$%
	    \fi
}
\date{}
\begin{document}

\normalem

\algsetblockdefx[LocalState]{LocalState}{EndLocalState}{}{}{\textbf{Local state $\alpha_p$:}}{}
\algsetblockdefx[Round]{Round}{EndRound}{}{}[1]{\textbf{Round:} $#1$}{}
\algsetblockdefx[SendStep]{SendStep}{EndSendStep}{}{}{$S_p^r$:}{}
\algsetblockdefx[TransitionStep]{TransitionStep}{EndTransitionStep}{}{}{$T_p^r$:}{}
\algsetblockdefx[Upon]{Upon}{EndUpon}{}{}[2]{\textbf{upon} $\mathit{deliver}(#1)$ from $#2$ \textbf{do}}{}
\algsetblockdefx[UponEvent]{UponEvent}{EndUponEvent}{}{}[1]{\textbf{upon event} $#1$ \textbf{do}}{}
\algsetblockdefx[UponReceipt]{UponReceipt}{EndUponReceipt}{}{}[2]{\textbf{upon receipt of} $#1$ from $#2$ \textbf{do}}{}







\author{Pierre Civit}
\affiliation{
\institution{École Polytechnique Fédérale de Lausanne (EPFL)} 
\country{Switzerland}
}

\author{Seth Gilbert}
\affiliation{
\institution{NUS Singapore}
\country{Singapore}
}

\author{Rachid Guerraoui}
\affiliation{
\institution{École Polytechnique Fédérale de Lausanne (EPFL)}
\country{Switzerland}
}

\author{Jovan Komatovic}
\affiliation{
\institution{École Polytechnique Fédérale de Lausanne (EPFL)}
\country{Switzerland}
}

\author{Anton Paramonov}
\affiliation{
\institution{École Polytechnique Fédérale de Lausanne (EPFL)}
\country{Switzerland}
}

\author{Manuel Vidigueira}
\affiliation{
\institution{École Polytechnique Fédérale de Lausanne (EPFL)}
\country{Switzerland}
}

\title{All Byzantine Agreement Problems are Expensive}

\maketitle

Byzantine agreement, arguably the most fundamental problem in distributed computing, operates among $n$ processes, out of which $t < n$ can exhibit arbitrary failures. 
The problem states that all correct (non-faulty) processes must eventually decide (termination) the same value (agreement) from a set of admissible values defined by the proposals of the processes (validity).
Depending on the exact version of the validity property, Byzantine agreement comes in different forms, from Byzantine broadcast to strong and weak consensus, to modern variants of the problem introduced in today's blockchain systems.
Regardless of the specific flavor of the agreement problem, its communication cost is a fundamental metric whose improvement has been the focus of decades of research. 
The Dolev-Reischuk bound, one of the most celebrated results in distributed computing, proved 40 years ago that, at least for Byzantine broadcast, no deterministic solution can do better than $\Omega(t^2)$ exchanged messages in the worst case.
Since then, it remained unknown whether the quadratic lower bound extends to seemingly weaker variants of Byzantine agreement.
This paper answers the question in the affirmative, closing this long-standing open problem.
Namely, we prove that \emph{any} non-trivial agreement problem requires $\Omega(t^2)$ messages to be exchanged in the worst case.
To prove the general lower bound, we determine the weakest Byzantine agreement problem and show, via a novel indistinguishability argument, that it incurs $\Omega(t^2)$ exchanged messages.
\section{Introduction}

Byzantine agreement~\cite{LSP82} is a foundational problem of distributed computing.
Its importance stems from the fact that Byzantine agreement lies at the heart of state machine replication~\cite{CL02,adya2002farsite,abd2005fault,kotla2004high,veronese2011efficient,amir2006scaling,kotla2007zyzzyva,malkhi2019flexible,momose2021multi}, distributed key generation~\cite{AbrahamJMMST21,ShresthaBKN21,Kokoris-KogiasM20,DasYXMK022}, secure multi-party computation~\cite{DBLP:conf/tcc/DeligiosHL21,DBLP:conf/eurocrypt/FitziGMR02,DBLP:conf/crypto/GennaroIKR02}, as well as various distributed services~\cite{galil1987cryptographic, gilbert2010rambo}.
Recent years have witnessed a renewed interest in Byzantine agreement due to the emergence of blockchain systems~\cite{abraham2016solida,chen2016algorand,abraham2016solidus,luu2015scp,correia2019byzantine,CGL18,buchman2016tendermint}.
Formally, the agreement problem is defined in a distributed system of $n$ processes; up to $t < n$ processes can be \emph{faulty}, whereas the rest are \emph{correct}.
Correct processes behave according to the prescribed deterministic protocol; faulty processes can deviate arbitrarily from it.
Byzantine agreement exposes the following interface:
\begin{compactitem}
    \item \textbf{input} $\mathsf{propose}(v \in \mathcal{V}_I)$: a process proposes a value $v$ from a (potentially infinite) set $\mathcal{V}_I$.

    \item \textbf{output} $\mathsf{decide}(v' \in \mathcal{V}_O)$: a process decides a value $v'$ from a (potentially infinite) set $\mathcal{V}_O$.
\end{compactitem}
Byzantine agreement ensures the following properties:
\begin{compactitem}
    \item \emph{Termination:} Every correct process eventually decides.


    \item \emph{Agreement:} No two correct processes decide different values.
\end{compactitem}
To preclude a trivial solution in which processes agree on a predetermined value, Byzantine agreement requires an additional property -- \emph{validity} -- that specifies which decisions are admissible.

The exact definition of the validity property yields a specific agreement problem. 
For example, Byzantine broadcast~\cite{Wan2020,Wan2023a,abraham2021good,Nayak2020a} ensures \emph{Sender Validity}, i.e., if the predetermined sender is correct, then its proposed value must be decided by a correct process.
Weak consensus~\cite{yin2019hotstuff,lewis2022quadratic,civit2022byzantine,BKM19} guarantees only \emph{Weak Validity}, i.e., if all processes are correct and they all propose the same value, that value is the sole admissible decision.
Other notable Byzantine agreement problems include (1) strong consensus~\cite{LSP82,civit2022byzantine,CGL18}, ensuring that, if all correct processes propose the same value, that value must be decided, (2) interactive consistency~\cite{LSP82,fischer1981lower,ben2003resilient}, where correct processes agree on the proposals of all $n$ processes, and (3) agreement problems employed in today's blockchain systems~\cite{Cachin2001,BKM19,yin2019hotstuff}, which require the decided value to satisfy a globally verifiable condition (e.g., the value is a  transaction correctly signed by the issuing client).

\paragraph{The worst-case communication cost of Byzantine agreement}
Motivated by practical implications, one of the most studied aspects of Byzantine agreement is its communication cost.
Since the inception of Byzantine agreement, research has been focused on minimizing the number of exchanged bits of information~\cite{dolev1985bounds,validity_podc,lewis2022quadratic,wan2023amortized,civit2022byzantine,everyBitCounts,DBLP:journals/iandc/CoanW92,berman1992bit,Chen2021,Nayak2020a,Abraham2023a}.
However, there are intrinsic limits.
The seminal Dolev-Reischuk bound~\cite{dolev1985bounds} proves that Byzantine broadcast cannot be solved unless $\Omega(t^2)$ messages are exchanged in the worst case.
(This naturally applies to any problem to which Byzantine broadcast can be reduced with $o(t^2)$ messages.)
The result of~\cite{dolev1985bounds} is shown for  any Byzantine broadcast algorithm that operates in \emph{synchrony}, where the message delays are known.
Inherently, this lower bound applies to weaker network models as well. 
Concretely, it extends to \emph{partial synchrony}~\cite{DLS88}, in which the communication is asynchronous (with arbitrary message delays) until some unknown point in time, after which it becomes synchronous.
(Byzantine agreement is known to be unsolvable in full asynchrony~\cite{fischer1985impossibility}.)

While the Dolev-Reishcuk bound answers the question of what the necessary message cost is for Byzantine broadcast, it is not general, i.e., it does not hold for \emph{any} specific non-trivial agreement problem.
(An agreement problem is trivial if there exists an always-admissible value that can be decided immediately, i.e., without any communication.)
For instance, the Dolev-Reischuk bound does not apply to weak consensus.
Thus, whether all non-trivial agreement problems require a quadratic number of messages remains unknown.
In this paper, we answer this long-standing question in the affirmative.

\begin{theorem} \label{theorem:general_lower_bound_main}
No (non-trivial) Byzantine agreement problem can be solved with fewer than $\Omega(t^2)$ exchanged messages in the worst case even in synchrony.
\end{theorem}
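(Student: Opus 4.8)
The plan is to reduce the general lower bound to a single, canonical ``weakest'' agreement problem and then prove the quadratic bound for that one problem by an indistinguishability argument in the style of Dolev--Reischuk. The first step is to make precise what ``non-trivial'' means and to identify the weakest non-trivial agreement problem, call it $\Pi^\star$: intuitively its validity predicate is as permissive as possible (it admits almost every value in almost every configuration), yet it still rules out the degenerate case where some fixed value is always admissible and can be decided with no communication. I would formalize validity as a relation between input configurations and admissible output values, and show (using the triviality/compatibility machinery announced in the restated lemmas \texttt{lemma:triviality\_compatible} and \texttt{lemma:triviality\_complete}) that any non-trivial agreement problem $\Pi$ can simulate $\Pi^\star$ essentially for free --- i.e. a solution to $\Pi$ with $o(t^2)$ messages yields a solution to $\Pi^\star$ with $o(t^2)$ messages, by a local relabeling of proposals and decisions that does not send extra messages. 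Hence it suffices to prove the $\Omega(t^2)$ bound for $\Pi^\star$.

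The second step is the core lower-bound argument for $\Pi^\star$ (the content of \texttt{lemma:final\_lemma\_lower\_bound}). I would follow the classic Dolev--Reischuk template: suppose for contradiction that a deterministic synchronous protocol solves $\Pi^\star$ with fewer than, say, $t^2/c$ messages in every execution. Partition a set of roughly $t$ processes into a ``receiver'' set $R$; because the total message count is subquadratic, an averaging argument shows there is a process $p \in R$ that receives fewer than $t$ messages across the relevant execution(s). Now construct two indistinguishable executions: in one, a carefully chosen set of at most $t$ Byzantine processes behaves correctly toward everyone except that they send nothing to $p$ (and nothing to each other in a way that would propagate information to $p$); in the other, a different configuration of proposals is used and a different (overlapping) set of Byzantine processes plays the symmetric role. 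The non-triviality of $\Pi^\star$ guarantees there exist two proposal configurations whose admissible-decision sets are disjoint (or at least force different decisions), so agreement across the two worlds is impossible --- yet $p$, and by a chain/peeling argument all correct processes, cannot tell the two worlds apart. This yields the contradiction.

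The technical heart --- and the step I expect to be the main obstacle --- is making the indistinguishability construction work for an \emph{arbitrary} weakest validity predicate rather than for Byzantine broadcast specifically. In Dolev--Reischuk the sender is a distinguished process and ``flipping'' its value is clearly allowed by validity; here validity is only a weak constraint on configurations, so I must exhibit, purely from non-triviality, two input configurations that (i) differ only on the inputs of processes I am allowed to corrupt or silence, and (ii) have incompatible sets of admissible decisions, and then show the adversary strategy keeps the surviving correct processes' views identical while correct processes in the two worlds are nonetheless obligated to decide differently. Handling the bookkeeping of which processes are Byzantine in which world, ensuring the corrupted sets each have size $\le t$, and ensuring that silencing $p$'s incoming messages is consistent with the corrupted processes still ``looking correct'' to all other correct processes (so termination and agreement still apply to them) is where the novel indistinguishability argument referenced in the abstract must be deployed. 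Once that combinatorial/adversarial construction is in place, combining it with the reduction from step one gives \Cref{theorem:general_lower_bound_main}.
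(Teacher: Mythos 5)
Your first step (reduce everything to a single weakest problem via a zero-message relabeling of proposals and decisions) matches the paper's strategy: the paper identifies the weakest problem concretely as binary weak consensus and builds the relabeling from the containment relation, using the fact that any solvable non-trivial problem must have two fully correct executions that decide different values. That part of your plan is sound.

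The genuine gap is in your second step. You propose to prove the quadratic bound for the weakest problem by "the classic Dolev--Reischuk template": find a process $p$ receiving few messages, then build two indistinguishable worlds whose admissible-decision sets are incompatible. For the weakest problem this construction cannot be completed, and the reason is structural, not bookkeeping. The weakest validity property (\emph{Weak Validity}) constrains the decision only in the \emph{single} execution in which every process is correct and all proposals coincide; the moment your adversary corrupts or silences anyone --- which the indistinguishability construction requires in at least one of the two worlds --- \emph{every} output value becomes admissible, so your requirement (ii), two configurations with incompatible admissible sets that survive the corruption, is unsatisfiable. Nothing then prevents the algorithm from deciding a fixed default value in both of your worlds, and no contradiction arises. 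The paper's proof is built precisely to escape this: it works with \emph{three} groups $A$, $B$, $C$ and isolations of $B$ and $C$ starting at varying rounds; it first shows that a majority of any isolated minority must decide the same bit as the correct group $A$ (else a receive-omission/send-omission swap produces a genuine agreement violation); it then establishes that $A$ decides a default bit whenever a group is isolated from round $1$, and that \emph{Weak Validity} plus \emph{Termination} in the fault-free execution force the existence of a critical round $R$ at which $A$ must abandon that default; finally it merges $\mathcal{E}_0^{B(R+1)}$ with $\mathcal{E}_0^{C(R)}$, which is indistinguishable to $B$ and $C$ from their respective single-isolation executions, so $A$ is forced to disagree with a majority of either $B$ or $C$. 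Without some replacement for this critical-round/merging mechanism, your outline stalls exactly at the point you flag as "the main obstacle."
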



To prove our general lower bound, we study binary ($\mathcal{V}_I = \mathcal{V}_O = \{0, 1\}$) weak consensus in synchrony.
Namely, we first prove an $\Omega(t^2)$ lower bound on the number of exchanged messages for weak consensus.
Then, to generalize the bound, we prove that weak consensus is the weakest agreement problem by presenting a reduction from it to any (solvable and non-trivial) agreement problem.
As a byproduct, the reduction allows us to define the entire landscape of solvable (and unsolvable) agreement problems, thus unifying all previous results on the solvability of Byzantine agreement. 
(We believe this result to be important in its own right.)

\paragraph{The fundamental challenge of weak consensus}
Recall that the \emph{Weak Validity} property of weak consensus guarantees only that, if all processes are correct and they all propose the same value, that value must be decided.
This is a very weak requirement: picking $1$ as the decision is always allowed except in a \emph{single} execution $\mathcal{E}$ where all processes are correct and they all propose $0$.
Hence, any weak consensus algorithm needs only to distinguish \emph{two} scenarios: either the execution is (1) $\mathcal{E}$, deciding $0$, or (2) non-$\mathcal{E}$, deciding $1$.
This observation was the starting point for our conjecture that weak consensus is the \emph{weakest} (non-trivial) agreement problem (which we prove in this paper), implying that any lower bound for weak consensus also applies to all other agreement problems.

To illustrate the difficulty of proving a quadratic lower bound for weak consensus, we briefly discuss the common point in the classical proof techniques exploited for similar results (namely,~\cite{dolev1985bounds} and~\cite{validity_podc}) and explain why those techniques cannot be easily adapted to weak consensus in synchrony.
The crux of those proof techniques consists in showing that, unless $\Omega(t^2)$ messages are exchanged, there necessarily exists an execution $\mathcal{E}_1$ in which some correct process $p$ decides $1$ without receiving any message.
The second step of the proof consists of constructing another execution $\mathcal{E}_0$ in which (1) $p$ is correct and receives no messages, and (2) some correct process $q \neq p$ decides $0$.
As $p$ cannot distinguish $\mathcal{E}_0$ from $\mathcal{E}_1$, $p$ decides $1$ in $\mathcal{E}_0$, thus violating \emph{Agreement}.
Unfortunately, while elegant, this approach cannot be directly adapted to weak consensus in synchrony as both $\mathcal{E}_0$ and $\mathcal{E}_1$ inevitably contain detectable faults.
Therefore, nothing prevents a weak consensus algorithm from deciding $1$ in \emph{both} $\mathcal{E}_0$ and $\mathcal{E}_1$, making the aforementioned reasoning inapplicable.
Intuitively, the main difficulty in proving a quadratic lower bound for weak consensus is that \emph{any} detectable misbehavior immediately allows an algorithm to choose a predetermined ``default'' value.


\paragraph{Technical overview.}
To prove an $\Omega(t^2)$ lower bound for weak consensus in the Byzantine failure model, we show that the bound holds even with only \emph{omission} failures.
An omission-faulty process can only misbehave by failing to receive or send some messages, but not by behaving maliciously.
(In contrast to Byzantine processes, it is reasonable to make claims about the behavior of omission-faulty processes as they are still \emph{honest}, i.e., they never act malevolently.)
Our proof utilizes in a novel way the standard concept of \emph{isolation}~\cite{dolev1985bounds,validity_podc,AbrahamStern22,Abraham2023revisited,Abraham2019c,hadzilacos1991message}, in which a small subset of omission-faulty processes starts (from some round onward) ``dropping'' all messages received from outside the set.
Concretely, we obtain our bound through a sequence of four critical observations about what happens when \emph{multiple} groups of processes are isolated.
Suppose that there are three groups: group $A$, which is fully correct and sends $o(t^2)$ messages, and groups $B$ and $C$, which are (separately) isolated from rounds $k_B$ and $k_C$, respectively.
We observe that:
\begin{compactenum}
    \item In any execution in which group $B$ (resp., $C$) is isolated, correct processes from $A$ and a majority of processes from $B$ (resp., $C$) must decide the \emph{same} bit; otherwise, we could design an execution which violates the properties of weak consensus.

    \item If both $B$ and $C$ are isolated from round $1$, group $A$ must decide some ``default'' bit \emph{independently} of their proposals, i.e., group $A$ either always decides 0 or always decides 1 whenever $B$ and $C$ are isolated from round $1$.

    \item At some round $R$ in the execution, $A$ must stop deciding the default bit \emph{even if there are faults afterward} (e.g., even if $B$ and $C$ are isolated).
    For example, if the default bit is $1$, but all processes propose $0$ and act correctly until the end, then, by an interpolation argument, all correct processes must at some round $R$ direct their strategy towards deciding $0$ (otherwise, they would violate \emph{Weak Validity}).

    \item Isolating $B$ and $C$ at the same round (e.g., $k_C = k_B = R$) or one round apart (e.g., $k_B = k_C + 1 = R$) is indistinguishable for processes in $B$ or $C$.
    Thus, we can create a situation where processes in $C$ decide the default bit $1$, while processes in $B$ choose $0$.
    In this situation, processes in $A$ necessarily violate the statement of the first observation: if they decide $1$, they disagree with $B$; if they decide $0$, they disagree with $C$.
\end{compactenum}

To generalize our lower bound, we then show that weak consensus is reducible at $0$ message cost to any solvable and non-trivial agreement problem in synchrony.
This reduction is possible because, for any Byzantine agreement problem that is non-trivial and synchronously solvable, its specific validity property must follow a certain structure.
Concretely, we define a simple combinatorial condition -- the \emph{containment condition} -- which we prove to be a necessary condition for synchronously solvable non-trivial agreement problems.
Interestingly, the containment condition is also \emph{sufficient}, enabling us to devise the general solvability theorem for Byzantine agreement problems. 


\paragraph{Roadmap.}
We state the system model and preliminaries in \Cref{section:preliminaries}.
In \Cref{section:lower_bound_weak_consensus}, we prove the $\Omega(t^2)$ lower bound on exchanged messages for weak consensus.
A generalization of the bound to all (solvable) non-trivial agreement problems is provided in \Cref{section:weak_consensus}.
In \Cref{section:solvability}, we present the general solvability theorem for Byzantine agreement problems.
We provide an overview of related work in \Cref{section:related_work}, and conclude the paper in \Cref{section:conclusion}.
The optional appendix contains omitted proofs.

\section{System Model \& Preliminaries} \label{section:preliminaries}

\paragraph{Processes \& adversary.}
We consider a static system $\Pi = \{p_1, ..., p_n\}$ of $n$ processes, where each process acts as a deterministic state machine.
Moreover, we consider a \emph{static adversary} which can corrupt up to $t < n$ processes before each run of the system.\footnote{
Note that a lower bound proven for a static adversary trivially applies to a stronger adaptive adversary which can corrupt processes during (and not only before) a run of the system.}
A corrupted process can behave arbitrarily; a non-corrupted process behaves according to its state machine.
We say that a corrupted process is \emph{faulty}, whereas a non-corrupted process is \emph{correct}.

\paragraph{Synchronous environment.}
Computation unfolds in synchronous rounds.
In each round $1, 2, ... \in \mathbb{N}$, each process (1) performs (deterministic) local computations, (2) sends (possibly different) messages to (a subset of) the other processes, and (3) receives the messages sent to it in the round.
We assume authenticated channels: the receiver of a message is aware of the sender's identity.



\paragraph{Executions.}
Each execution of any algorithm is uniquely identified by (1) the sets of correct and faulty processes, and (2) the messages faulty processes send (or do not send) in each round.
Given any algorithm $\mathcal{A}$, $\mathit{execs}(\mathcal{A})$ denotes the set of all $\mathcal{A}$'s executions with no more than $t$ faulty processes.
Lastly, $\mathit{Correct}_{\mathcal{A}}(\mathcal{E})$ denotes the set of correct processes in any execution $\mathcal{E} \in \mathit{execs}(\mathcal{A})$.

\paragraph{Message complexity.}
Let $\mathcal{A}$ be any algorithm and let $\mathcal{E}$ be any execution of $\mathcal{A}$.
The message complexity of $\mathcal{E}$ is the number of messages sent by correct processes throughout the entire execution $\mathcal{E}$.
(Note that all messages count towards the message complexity of $\mathcal{E}$, even those sent after all correct processes have already decided.)
The \emph{message complexity} of $\mathcal{A}$ is then defined as
\begin{equation*}
    \max_{\mathcal{E} \in \mathit{execs}(\mathcal{A})}\bigg\{\text{the message complexity of } \mathcal{E} \bigg\}.
\end{equation*}

\section{Lower Bound on Message Complexity of Weak Consensus} \label{section:lower_bound_weak_consensus}

To prove our general lower bound, we first show a quadratic lower bound for weak consensus:

\begin{theorem} \label{theorem:lower_bound_weak_validity}
Any weak consensus algorithm has $\Omega(t^2)$ message complexity.
\end{theorem}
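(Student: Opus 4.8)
The plan is to prove \Cref{theorem:lower_bound_weak_validity} by contradiction: assume an algorithm $\mathcal{A}$ solving binary weak consensus with message complexity $o(t^2)$, and derive a violation of either \emph{Agreement} or \emph{Weak Validity}. Following the technical overview, I would work in the \emph{omission}-failure model (any lower bound there is a fortiori a lower bound for Byzantine faults since an omission-faulty process is a special Byzantine process). First I would set up the \emph{isolation} machinery: for a set $S$ of processes and a round $k$, ``isolating $S$ from round $k$'' means every process in $S$ becomes omission-faulty and, from round $k$ onward, drops (fails to receive) every message arriving from outside $S$, while behaving correctly toward other members of $S$. The key quantitative ingredient is the standard counting argument from Dolev--Reischuk: if correct processes send only $o(t^2)$ messages in total, then in any execution there is a set $B$ of $\Theta(t)$ processes that collectively receive only $o(t)$ messages from outside any candidate isolated set; by averaging one can in fact find two disjoint such sets $B$ and $C$, each of size $\Theta(t)$, plus a fully correct ``core'' $A$, such that isolating $B$ (resp.\ $C$) from any round changes the view of $A$ in at most $o(t)$ message slots — few enough that we can simultaneously hide the isolation by also making $o(t)$ further processes faulty. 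I would fix $t$ large enough that all these $\Theta(t)$ and $o(t)$ quantities behave as needed.

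With this in hand I would carry out the four observations from the overview as formal lemmas. Observation~1: in any execution where $B$ is isolated from some round, the correct processes of $A$ together with a strict majority of $B$ decide the same bit — otherwise, by splitting $A \setminus$ (the $o(t)$ hidden faulty processes) appropriately and using indistinguishability, one builds an all-correct-looking execution (from the viewpoint of the deciders) in which two correct processes disagree, contradicting \emph{Agreement}; the same holds with $C$. Observation~2: when both $B$ and $C$ are isolated from round $1$, the isolated groups never hear from $A$ at all, so $A$'s behaviour — hence, by Observation~1, the ``majority bit'' of $A$ — cannot depend on the proposals inside $B \cup C$; a short indistinguishability argument over the proposals of $A$ itself then forces $A$ to decide a single \emph{default} bit $d \in \{0,1\}$ in all such executions (here one uses that $|A|$ is large enough that flipping the proposal of any single process of $A$ is invisible to the $o(t)$-bounded deciders, so a hybrid argument connects all proposal vectors). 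Observation~3: consider the ``clean'' execution $\mathcal{E}^\star$ where everyone is correct, everyone proposes $\bar d$ (the non-default bit), and it runs forever; by \emph{Weak Validity} and \emph{Termination}, the correct processes decide $\bar d$ by some round $R$. By an interpolation/hybrid argument over the round at which $B$ and $C$ get isolated — from ``isolated at round $1$'' (where $A$ decides $d$) to ``never isolated'' (where $A$ decides $\bar d$) — there is a threshold round $R$ such that even if $B$ and $C$ are isolated starting at round $R$ (or later), $A$ no longer decides $d$, whereas isolating them at round $R-1$ still leaves open the possibility of $A$ deciding $d$; crucially, by Observation~1 this forces the majority bits of $B$ and of $C$ to track $A$'s decision at exactly these boundary rounds.

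Observation~4 is the punchline and, I expect, the main obstacle to get exactly right. The point is that a process in $B$ (which, once isolated, hears only from $B$) cannot tell whether $B$ was isolated at round $R$ or at round $R-1$, as long as the internal view of $B$ up to that round is identical; likewise for $C$. So I would engineer a single execution in which $B$ is isolated from round $R$ and $C$ is isolated from round $R-1$ (or vice versa), arranged so that $B$'s view matches an execution where $A$'s majority bit is $d$ (hence, by Observation~1, a majority of $B$ decides $d$) while $C$'s view matches an execution where $A$'s majority bit is $\bar d$ (hence a majority of $C$ decides $\bar d$). Since in this one execution $A$ is fully correct and decides a single bit, it disagrees with the majority of $B$ or with the majority of $C$, contradicting Observation~1. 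The delicate part is checking that the two isolation times can genuinely be decoupled — i.e., that the extra $o(t)$ faulty processes needed to hide $A$'s message losses toward $B$ and toward $C$ can be chosen disjointly and within the budget $t$, and that isolating at round $R$ versus $R-1$ really is indistinguishable to the isolated group given that a one-round shift changes which messages from $A$ the group would have absorbed; I would resolve this by making $B$ and $C$ ``pre-isolate'' their boundary round behaviour so that the single round of difference involves only messages that were going to be dropped anyway. Once all four observations are assembled, the contradiction is immediate, so no such $\mathcal{A}$ exists and every weak consensus algorithm has $\Omega(t^2)$ message complexity.
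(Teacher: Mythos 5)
Your overall strategy is the paper's: work in the omission model, isolate two groups $B$ and $C$ of size $\Theta(t)$, show that a majority of any isolated group must agree with the correct core $A$ (by swapping receive-omissions for send-omissions and blaming the senders), extract a default bit and a critical round, and derive a contradiction by merging isolations offset by one round. Two steps of the sketch, however, do not go through as written. First, your justification of the default bit reverses the direction of information flow. Isolation here is receive-isolation only: an isolated process keeps sending all of its messages, so $A$ \emph{does} hear from $B$ and $C$, and $A$'s behaviour genuinely depends on their proposals; the claim ``$A$'s behaviour cannot depend on the proposals inside $B \cup C$'' is false as stated. What is true is the dual: an isolated group's behaviour depends only on its own proposals, so its majority bit is a function of those proposals alone, and the majority-agreement lemma then pins $A$'s \emph{decision} to both $B$'s and $C$'s majority bits simultaneously --- that is how one concludes $A$ decides the same bit across the all-$0$ and all-$1$ round-one-isolation executions. (The hybrid over $A$'s own proposals is then unnecessary: the isolated groups are blind to all of $A$ at once.)

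Second, and more seriously, the punchline execution is not correctly assembled. You want a merged execution in which a majority of $B$ decides $d$ and a majority of $C$ decides $\bar d$; by your Observation~1 the latter requires an execution in which $C$ is isolated and $A$ decides $\bar d$. But every execution in which you have established that $A$ decides $\bar d$ (the clean run, and $B$-isolation from round $R+1$ on) has $C$ fully correct, so $C$'s view in your merged execution cannot be borrowed from any of them, and nothing in the sketch determines $A$'s decision under $C$-isolation at round $R-1$ (the interpolation gives one flip point, not monotonicity). The missing ingredient is a \emph{same-round} transfer: first merge ``$B$ isolated at $R$'' with ``$C$ isolated at $R$'' to conclude that $A$ --- hence a majority of $C$ --- decides $d$ in the $C$-isolation-at-$R$ execution, and only then merge that execution with ``$B$ isolated at $R+1$,'' where $A$ and a majority of $B$ decide $\bar d$. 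In this final execution the majorities of $B$ and $C$ disagree (with the roles of $d$ and $\bar d$ opposite to your assignment), and $A$ cannot agree with both. Also, no ``pre-isolation'' trick is needed for the one-round offset: a group isolated from round $R$ still sends its round-$R$ messages from its fault-free state at the start of round $R$, so the boundary round is automatically consistent, and the merged execution's faulty set is exactly $B \cup C$ (size $t/2$), so no extra fault budget is consumed there.
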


In order to prove \Cref{theorem:lower_bound_weak_validity}, we show a strictly stronger lower bound for the omission failure model in which processes can only fail by ``dropping'' some messages they send or receive, but not by behaving maliciously.

\paragraph{Omission failures.}
In (only) this section, we consider \emph{omission failures}.
A static adversary corrupts up to $t < n$ processes before each execution.
A corrupted process can commit:
\begin{compactitem}
    \item \emph{send-omission} faults, by not sending some messages it is supposed to send; or

    \item \emph{receive-omission} faults, by not receiving some messages it is supposed to receive.
\end{compactitem}
Note that a faulty process cannot misbehave in an arbitrary manner, i.e., it acts according to its state machine at all times.
Moreover, corrupted processes are unaware that they are corrupted, i.e., they do not know if or when they omitted some messages.
Corrupted processes are said to be \emph{faulty}, whereas non-corrupted processes are said to be \emph{correct}.

Two executions are said to be \emph{indistinguishable} to a (correct or faulty) process if and only if (1) the process has the same proposal in both executions and (2) the process receives identical messages in each round of both executions.
Note that, given two executions indistinguishable to some process, the process's actions in each round of both executions are \emph{identical} due to the process's determinism.
Concretely, if two $k$-round-long ($k \in \mathbb{N}$) executions are indistinguishable to a process $p_i$, then (1) $p_i$'s internal states at the start of the $(k + 1)$-st round of both executions are identical, and (2) the sets of all messages sent (including those that are omitted) in the $(k + 1)$-st round of both executions are identical.
We relegate a precise definition of the omission failure model to \Cref{section:lower_bound_formal}.

\paragraph{Notation \& remarks.}
Given any set of processes $G$, let $\bar{G} = \Pi \setminus{G}$.
If a faulty process omits sending (resp., omits receiving) some message $m$, we say that the process \emph{send-omits} (resp., \emph{receive-omits}) $m$.
Note that, in the omission failure model, it is reasonable to make claims about the behaviors of faulty processes as they always behave according to their state machine.
Finally, observe that any weak consensus algorithm provides guarantees \emph{only} to correct processes, i.e., it is possible for faulty processes to not terminate or to disagree (among themselves or with correct processes).

\paragraph{Proof of \Cref{theorem:lower_bound_weak_validity}.}

As previously mentioned, we prove a quadratic lower bound for weak consensus by showing that the problem requires at least $\frac{t^2}{32}$ messages even with omission failures:

\begin{restatable}{lemma}{lowerboundomission}\label{lemma:lower_bound_omission}
Any omission-resilient weak consensus algorithm has at least $\frac{t^2}{32}$ message complexity.
\end{restatable}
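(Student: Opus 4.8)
The plan is to prove the lemma by contradiction: assume an omission-resilient weak consensus algorithm $\mathcal{A}$ with message complexity strictly less than $\frac{t^2}{32}$, and derive a violation of \emph{Agreement} or \emph{Weak Validity}. The skeleton follows the four-observation structure previewed in the technical overview, so I would first fix the three groups. Partition the faulty budget: let $B$ and $C$ be two disjoint sets of roughly $t/4$ processes each that the adversary will (separately) \emph{isolate} — from some round onward they receive-omit every message coming from outside their own group, so internally a group behaves like a standalone system of correct processes running on its own proposals. The set $A = \Pi \setminus (B \cup C)$ is kept fully correct. A counting argument comes first: since correct processes in $A$ send $o(t^2)$ messages in total and $|B|,|C| = \Theta(t)$, there must exist a "lightly contacted" process, or more usefully a round and a sub-configuration where the messages crossing from $A$ into $B$ (or into $C$) can be absorbed by the isolation within the faulty budget — this is the classical isolation/counting step from Dolev–Reischuk, and I would use an averaging argument over which $\Theta(t)$-subset to isolate so that the number of extra "helper" faulty processes needed (to make the isolation consistent, e.g. to suppress acknowledgements) stays below $t$.

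With the groups fixed, I would establish Observation 1 as a standalone claim: in any execution where $B$ is isolated from some round $k_B$, the correct processes of $A$ and a strict majority of $B$ must decide the same bit. The proof is an indistinguishability/splicing argument — if they could disagree, splice together the "world as seen by $A$" with the "world as seen by the majority of $B$" into a single execution in which all the deciding processes are correct but disagree, contradicting \emph{Agreement}; the majority is needed because only a majority of $B$'s internal run can be "promoted" to correct while keeping the total corruption at most $t$. Next, Observation 2: run both $B$ and $C$ isolated from round $1$. Then $A$'s view is a function only of $A$'s proposals and $B$'s, $C$'s internal runs; by quantifying over the latter and invoking Observation 1 twice (once via $B$, once via $C$) one pins down that $A$ decides a fixed bit $d \in \{0,1\}$ regardless of the proposals — call $d$ the \emph{default}. (If it weren't fixed, one would again assemble a disagreeing execution.)

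Then Observation 3 is an interpolation argument along rounds. Consider the execution where everyone proposes $\bar d$ and behaves correctly to the end (no isolation at all): \emph{Weak Validity} forces the correct processes to decide $\bar d$. Compare with the execution where $B$ and $C$ are isolated from round $1$ (default $d$ decided by $A$). Sliding the isolation start round from $1$ up to $\infty$ gives a finite sequence of executions, and I would pick the threshold round $R$ at which $A$'s decision flips from $d$ to $\bar d$ — meaning isolating from round $R$ still yields $d$, but from round $R+1$ yields $\bar d$ (or some such boundary; the exact off-by-one is where care is needed). Finally Observation 4 delivers the contradiction: isolating $B$ starting at round $R$ and $C$ starting at round $R$ versus at round $R+1$ is indistinguishable to the processes \emph{inside} $B$ and inside $C$ (they only lose cross-group messages one round earlier or later, and those messages were themselves going to be dropped), so I can build one execution in which $C$'s internal run is the "isolate-from-$R+1$" run (so $A$ would go with $\bar d$) while $B$'s internal run is the "isolate-from-$R$" run (so $A$ would go with $d$) — but $A$ decides a single bit, hence it disagrees either with the majority of $B$ or with the majority of $C$, contradicting Observation 1.

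The main obstacle I expect is not any single observation but making the \textbf{bookkeeping of the corruption budget} work simultaneously across all of them: each use of Observation 1 "promotes" a majority of $B$ (or $C$) to correct, each isolation step consumes faulty processes in the isolated group plus possibly some helpers, and the counting step that produces the lightly-contacted configuration must leave enough slack that all of these can coexist under the single global bound $t$ — this is precisely why the constant comes out as small as $\frac{1}{32}$, and getting every constant ($\frac14$ for group sizes, the majority, the averaging loss) to multiply through correctly is the delicate part. A secondary subtlety is phrasing the indistinguishability for isolated groups rigorously: I would need the earlier formalization (deferred to \Cref{section:lower_bound_formal}) that an isolated group's behavior depends only on its members' proposals and the messages exchanged within the group, so that "$B$'s internal run" is a well-defined object I can transplant between executions.
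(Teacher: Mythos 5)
Your plan reproduces the paper's architecture almost exactly: the same partition $(A,B,C)$ with $|B|=|C|=t/4$, the same counting step to find lightly-contacted members of an isolated group, the same majority-agreement claim (the paper's \Cref{lemma:lower_bound_helper}), the same ``default bit'' obtained by isolating from round $1$, the same interpolation to a critical round $R$ (the paper's \Cref{lemma:R}), and the same off-by-one merge for the contradiction. Two places in your write-up would fail or are incomplete as literally stated, though both are repairable within your own framework.

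First, in your Observation~1 you propose to ``splice together the world as seen by $A$ with the world as seen by the majority of $B$ into a single execution in which all the deciding processes are correct,'' and you attribute the majority threshold to the corruption budget. Promoting the whole majority of $B$ at once does not fit the budget: making $\approx t/8$ processes of $B$ correct requires converting all of their receive-omissions into send-omissions, which can turn up to $\Theta(t^2)$ senders in $A$ faulty. The majority comes from the \emph{counting} (more than half of $B$ each receive fewer than $t/2$ messages from the correct group, since fewer than $t^2/32$ are sent in total), and the promotion must then be done \emph{one process at a time}: for a single lightly-contacted $p\in B$, the newly faulty senders number fewer than $t/2$, which together with $|B|+|C|-1 < t/2$ stays under $t$. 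Each such $p$ then individually agrees with $A$ by indistinguishability, and the union of these per-process conclusions gives the majority statement.

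Second, your final step builds a single mixed execution ($B$ isolated from $R$, $C$ from $R+1$) and asserts in passing that $A$ decides $\bar d$ in the ``$C$ isolated from $R+1$'' world. That assertion is only justified if your interpolation in Observation~3 is over the family in which \emph{both} groups are isolated from round $k$ (in which case the single mixed execution does close the argument, and is a mildly more economical route than the paper's). If, as in the paper, the critical round $R$ is defined by sliding only $B$'s isolation (with $C$ correct), then nothing yet tells you what $A$ decides when $C$ is isolated, and you need one more application of the merge/indistinguishability step to transfer the decision across groups: the paper first merges $\mathcal{E}_0^{B(R)}$ with $\mathcal{E}_0^{C(R)}$ to conclude $A$ decides $1$ in $\mathcal{E}_0^{C(R)}$, and only then merges $\mathcal{E}_0^{B(R+1)}$ with $\mathcal{E}_0^{C(R)}$ to contradict \Cref{lemma:R} (see \Cref{lemma:either_or} and \Cref{lemma:last_lemma}). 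Pick one of the two interpolation families explicitly and the off-by-one bookkeeping resolves; as written, the argument silently mixes them.
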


We prove \Cref{lemma:lower_bound_omission} by contradiction.
Fix any $n$ and $t$ such that $t \in [8, n - 1]$.
(Without loss of generality, we consider $t$ divisible by $8$.)
Fix any weak consensus algorithm $\mathcal{A}$ which (1) tolerates $t$ omission failures and works among $n$ processes, and (2) whose message complexity is less than $\frac{t^2}{32}$.
This implies that correct processes send fewer than $\frac{t^2}{32}$ messages in \emph{every} execution of $\mathcal{A}$.
\Cref{table:notation_weak_consensus} introduces notation we rely on throughout the proof.

\begin{table} [bh]
    \begin{tabular}{|l|m{8.5cm}|}
    \multicolumn{1}{l}{\emph{Notation}} & \multicolumn{1}{l}{\emph{Definition}} \\ 
    \hline
    \hline
        $(A, B, C)$ & Any partition of $\Pi$ such that (1) $|B| = \frac{t}{4}$, and (2) $|C| = \frac{t}{4}$ (naturally, $|A| = n - \frac{t}{2}$).
         \\ \hline
        $\mathcal{E}_0$ & The infinite execution of $\mathcal{A}$ in which (1) all processes propose $0$, and (2) all processes are correct. \\ \hline
        $\mathcal{E}_0^{B(k)}, k \in \mathbb{N}$ & The infinite execution of $\mathcal{A}$ in which (1) all processes propose $0$, (2) processes from $A \cup C$ are correct, and (3) group $B$ is isolated from round $k$. \\ \hline
        $\mathcal{E}_0^{C(k)}, k \in \mathbb{N}$ & The infinite execution of $\mathcal{A}$ in which (1) all processes propose $0$, (2) processes from $A \cup B$ are correct, and (3) group $C$ is isolated from round $k$. \\ \hline
        $\mathcal{E}_1^{C(1)}$ & The infinite execution of $\mathcal{A}$ in which (1) all processes propose $1$, (2) processes from  $A \cup B$ are correct, and (3) group $C$ is isolated from round $1$.  \\ \hline
    \end{tabular}
    \caption{Notation table for the lower bound for weak consensus.\\ (The concept of group isolation is described in \Cref{definition:isolation}.)}
    \label{table:notation_weak_consensus}
\end{table}

First, let us introduce the concept of \emph{isolation}, which we use extensively throughout the proof.

\begin{definition} [Isolation] \label{definition:isolation}
A group $G \subsetneq \Pi$ of $|G| \leq t$ processes is \emph{isolated from some round $k \in \mathbb{N}$} in an execution $\mathcal{E}$ of $\mathcal{A}$ if and only if, for every process $p_G \in G$, the following holds:
\begin{compactitem}
    \item $p_G$ is faulty in $\mathcal{E}$; and
    
    \item $p_G$ does not send-omit any message in $\mathcal{E}$; and

    \item for every message $m$ sent by any process $p_m$ to $p_G$ in any round $k' \in \mathbb{N}$ of $\mathcal{E}$, $p_G$ receive-omits $m$ in $\mathcal{E}$ if and only if (1) $p_m \in \bar{G}$, and (2) $k' \geq k$.
\end{compactitem}
\end{definition}

Intuitively, a group $G$ is isolated from some round $k$ if and only if no process $p_G \in G$ receives any message from outside of $G$ in any round $k' \geq k$, i.e., $p_G$ only receives messages sent by processes in $G$ from round $k$ onward; other than these receive-omission faults, $p_G$ commits no other faults.
\Cref{figure:helper_isolation_illustration} illustrates the concept of isolation.

\begin{figure}[ht]
    \centering
    \begin{minipage}{1\textwidth}
        \centering
        \includegraphics[width=0.9\linewidth]{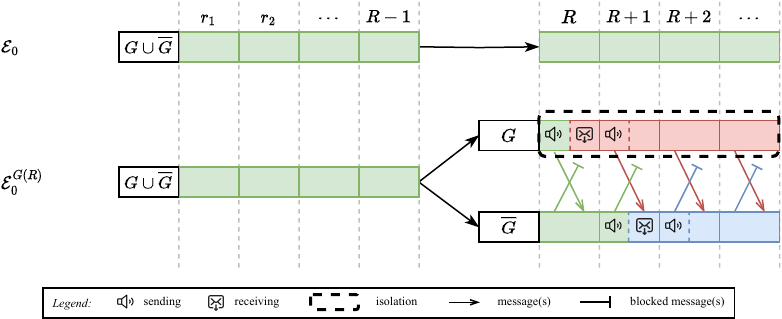}
        \caption{Illustration of \Cref{definition:isolation}.
        The colors represent the local behaviors of processes.
        Execution $\mathcal{E}_0$ has no faults.
        Execution $\mathcal{E}_0^{G(R)}$ proceeds identically to $\mathcal{E}_0$, sending the same messages (green color) up until round $R$ (inclusive).
        However, group $G$ is \emph{isolated} at round $R$, causing it to drop all messages from group $\overline{G}$ from then on.
        This (potentially) changes $G$'s sending behavior from round $R+1$ onward (red color).
        By propagation, group $\overline{G}$ is then (potentially) affected by $G$'s new sending behavior (red color), causing $\overline{G}$ to deviate from $\mathcal{E}_0$ in the messages it sends from round $R+2$ onward (blue color).
        }
        \label{figure:helper_isolation_illustration}
    \end{minipage}
\end{figure}

Let $(X, Y, Z)$ be any partition of $\Pi$ such that $|Y| = \frac{t}{4}$ and $|Z| \leq \frac{t}{4}$.
The following lemma proves that in any infinite execution $\mathcal{E}$ of $\mathcal{A}$ in which processes from $X$ are correct and processes from $Y \cup Z$ are faulty, more than half of processes from $Y$ decide the same bit as (all) processes from $X$.
If this was not the case, we could construct an execution that demonstrates that $\mathcal{A}$ is not a correct weak consensus algorithm.
We formally prove the lemma in \Cref{section:lower_bound_formal}.

\begin{restatable}{lemma}{genericisolation}\label{lemma:lower_bound_helper}
Let $(X, Y, Z)$ be any partition of $\Pi$ such that (1) $|Y| = \frac{t}{4}$, and (2) $|Z| \leq \frac{t}{4}$ (naturally, $|X| = n - |Y| - |Z|$).
Moreover, let $\mathcal{E}$ be any infinite execution of $\mathcal{A}$ such that:
\begin{compactitem}
    \item processes from $X$ are correct in $\mathcal{E}$, whereas processes from $Y \cup Z$ are faulty in $\mathcal{E}$; and

    \item all processes from $X$ decide the same bit $b_X$ (to satisfy \emph{Termination} and \emph{Agreement}); and 

    \item group $Y$ is isolated from some round $k \in \mathbb{N}$ in $\mathcal{E}$.
\end{compactitem}
Then, there exists a set $Y' \subseteq Y$ of $|Y'| > \frac{|Y|}{2}$ processes such that all processes in $Y'$ decide $b_X$ in $\mathcal{E}$.
\end{restatable}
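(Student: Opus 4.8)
The plan is to argue by contradiction: suppose that in the execution $\mathcal{E}$ described in the lemma, at most half of the processes in $Y$ decide $b_X$. I want to convert this into a concrete execution of $\mathcal{A}$ that violates either \emph{Agreement} or \emph{Weak Validity}. The key observation is that, since $Y$ is isolated from round $k$ in $\mathcal{E}$, the processes in $Y$ only ever receive messages from other processes in $Y$ starting at round $k$; from their point of view, the rest of the system could simply have stopped sending them anything. So I can build a companion execution $\mathcal{E}'$ in which the processes of $Y$ behave \emph{identically} to how they behave in $\mathcal{E}$ (same proposals, same received messages every round), but now $Y$ is the set of \emph{correct} processes and everyone else is faulty.

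Concretely, first I would note that $|Y| = \tfrac{t}{4} \le t$, so $Y$ may be taken as the correct set. In $\mathcal{E}'$, every process in $X \cup Z$ is faulty: up through round $k-1$ they send to processes in $Y$ exactly the messages they send in $\mathcal{E}$ (this is well-defined, since through round $k-1$ the processes in $Y$ are not yet isolated and receive real messages), and from round $k$ onward they send nothing to $Y$ — equivalently, they "crash" with respect to $Y$. Inside $Y$, the processes send among themselves exactly what they do in $\mathcal{E}$. I then argue by induction on rounds that the sequence of messages received by each $p_Y \in Y$ is identical in $\mathcal{E}$ and $\mathcal{E}'$: for rounds $< k$ the messages from $\bar{Y}$ agree by construction and the messages from within $Y$ agree because $Y$'s internal states agree (inductive hypothesis + determinism); for rounds $\ge k$, $p_Y$ receives nothing from $\bar{Y}$ in either execution, and the within-$Y$ messages again agree by induction. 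Hence $\mathcal{E}$ and $\mathcal{E}'$ are indistinguishable to every process in $Y$, so each $p_Y$ decides the same bit in both — and by \emph{Agreement} applied to $\mathcal{E}'$ (where $Y$ is exactly the correct set), all of $Y$ decides a single common bit $b_Y$ in both executions.

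**Now the contradiction.** In $\mathcal{E}$, the processes in $X$ all decide $b_X$ and, by our assumption, at most half of $Y$ decides $b_X$; so at least half of $Y$ decides $\bar{b_X} = b_Y$, forcing $b_Y = \bar{b_X} \ne b_X$. To finish I build one more execution $\mathcal{E}''$ that simultaneously "looks like $\mathcal{E}$" to the processes of $X$ and "looks like $\mathcal{E}'$" to the processes of $Y$, by letting $Z$ be the (only) faulty set — $|Z| \le \tfrac t4 \le t$ — and having $Z$ equivocate: each process in $Z$ sends to $X$ exactly the messages it sends in $\mathcal{E}$, and sends to $Y$ exactly the messages it sends in $\mathcal{E}'$ (i.e., nothing from round $k$ on). Processes in $X$ and in $Y$ are all correct in $\mathcal{E}''$ and send honestly. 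By induction on rounds, $X$ cannot distinguish $\mathcal{E}''$ from $\mathcal{E}$ (its incoming messages from $Y\cup Z$ match, using that $Y$ behaves in $\mathcal{E}''$ as in $\mathcal{E}'$ hence as in $\mathcal{E}$ on the messages $Y$ sends to $X$ — which in turn requires checking that $Y$'s outgoing messages to $X$ are the same in $\mathcal{E}$ and $\mathcal{E}'$; this holds because $Y$'s internal states coincide), and $Y$ cannot distinguish $\mathcal{E}''$ from $\mathcal{E}'$. Therefore in $\mathcal{E}''$ the correct processes of $X$ decide $b_X$ while the correct processes of $Y$ decide $b_Y \ne b_X$, violating \emph{Agreement}. (If instead of a genuine partition the intended argument routes through \emph{Weak Validity} when $b_X$ is the "wrong" default, the same indistinguishability bookkeeping applies; but the $Z$-equivocation route above already yields an \emph{Agreement} violation directly.)

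**The main obstacle** I anticipate is the careful round-by-round indistinguishability argument tying together three executions at once — in particular, making sure that the messages $Y$ sends \emph{to $X$} are provably the same across $\mathcal{E}$, $\mathcal{E}'$, and $\mathcal{E}''$, which is needed for $X$'s view to be preserved in $\mathcal{E}''$. This is where the precise definition of isolation (no send-omissions by $Y$, receive-omissions only from $\bar Y$ and only from round $k$) does the real work: it guarantees $Y$ keeps sending its "honest" messages outward even while cut off from incoming traffic, so $Y$'s effect on $X$ is stable. A secondary subtlety is bookkeeping the "$\le t$" faulty-count constraint in each constructed execution ($|Y| = \tfrac t4$, $|X\cup Z| = n - \tfrac t4$ may exceed $t$, so the $\mathcal{E}'$ step needs $Y$ — not $X$ — as the correct set, and $|Z|\le\tfrac t4 \le t$ makes $\mathcal{E}''$ legal); and handling the case $b_X$ arbitrary versus the case where it must equal the proposal value, which only matters if one prefers the \emph{Weak Validity} contradiction. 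I would present the three-execution construction, then the single induction lemma on "views agree round by round," then read off the contradiction.
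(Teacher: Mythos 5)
Your argument has a fatal gap, and the clearest symptom is that it never uses the standing assumption that $\mathcal{A}$ sends fewer than $\frac{t^2}{32}$ messages. If the lemma held unconditionally for every weak consensus algorithm, the rest of \Cref{section:lower_bound_weak_consensus} would show that weak consensus is unsolvable, which it is not; so any proof that ignores the message bound must break somewhere. It breaks in both of your auxiliary executions. In $\mathcal{E}'$ you make $Y$ the correct set and have everyone in $X \cup Z$ stop sending to $Y$ from round $k$ on; but in the omission model a process that send-omits is faulty, so every process of $X \cup Z$ that sends even one message to $Y$ in some round $\geq k$ must be counted as faulty, and that set can be all of $X$, i.e.\ up to $n - \frac{t}{2} > t$ processes. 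Declaring $Y$ ``the correct set'' does not relax the bound of $t$ corruptions. Worse, $\mathcal{E}''$ is outright impossible: there $X$ and $Y$ are both correct, yet $Y$'s view (copied from $\mathcal{E}'$) requires that no message sent by $X$ to $Y$ in rounds $\geq k$ is delivered. A correct sender cannot send-omit and a correct receiver cannot receive-omit, so nobody is available to drop those messages. A further warning sign is that your contradiction would force \emph{all} of $Y$ to decide $b_X$, which is strictly stronger than the stated conclusion.

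The paper's proof is built precisely to dodge this. It uses the $\frac{t^2}{32}$ bound in a counting step: since $X$ sends fewer than $\frac{t^2}{32}$ messages total, more than half of the processes in $Y$ receive-omit fewer than $\frac{t}{2}$ messages originating in $X$. For any such process $p_{Y'}$ that allegedly disagrees with $X$, one reassigns the blame for \emph{only its own} receive-omissions to the senders (the $\mathsf{swap\_omission}$ construction), so the faulty set becomes $(\mathcal{S} \cap X) \cup Y \cup Z \setminus \{p_{Y'}\}$ with fewer than $\frac{t}{2} + \frac{t}{4} + \frac{t}{4}$ elements, legitimately below $t$, while $p_{Y'}$ and some untouched $p_X \in X$ are both correct and cannot distinguish the new execution from $\mathcal{E}$. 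That is exactly the mechanism your sketch is missing, and it is why the lemma can only promise a majority of $Y$ rather than all of it.
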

\begin{sketch}
For every process $p \in Y$, let $\mathcal{M}_{X \to p}$ denote the set of all messages which are (1) sent by any process $p' \in X$ in $\mathcal{E}$, and (2) receive-omitted by $p$ in $\mathcal{E}$; as $p \in Y$ and group $Y$ is isolated from round $k$ in $\mathcal{E}$, every message $m \in \mathcal{M}_{X \to p}$ is sent in some round $k' \geq k$.
For every set $Y'' \subseteq Y$, let $\mathcal{M}_{X \to Y''} = \bigcup\limits_{p \in Y''} \mathcal{M}_{X \to p}$.
As correct processes (i.e., processes from group $X$) send fewer than $\frac{t^2}{32}$ messages in $\mathcal{E}$, $|\mathcal{M}_{X \to Y}| < \frac{t^2}{32}$.
Therefore, there does not exist a set $Y^* \subseteq Y$ of $|Y^*| \geq \frac{|Y|}{2}$ processes such that, for every process $p_{Y^*} \in Y^*$, $|\mathcal{M}_{X \to p_{Y^*}}| \geq \frac{t}{2}$.
This implies that there exists a set $Y' \subseteq Y$ of $|Y'| > \frac{|Y|}{2}$ processes such that, for every process $p_{Y'} \in Y'$, $|\mathcal{M}_{X \to p_{Y'}}| < \frac{t}{2}$.

Fix any process $p_{Y'} \in Y'$.
By contradiction, suppose that $p_{Y'}$ does not decide $b_X$ in $\mathcal{E}$.
Let $\mathcal{S}$ denote the set of all processes whose messages $p_{Y'}$ receive-omits in (any round $k' \geq k$ of) $\mathcal{E}$; note that $|\mathcal{S} \cap X| < \frac{t}{2}$ (since $|\mathcal{M}_{X \to p_{Y'}}| < \frac{t}{2}$) and $\mathcal{S} \subsetneq X \cup Z$.
Let us construct another infinite execution $\mathcal{E}'$ of $\mathcal{A}$ following the (sequentially-executed) steps below:
\begin{compactenum}
    \item Processes in $\mathcal{S} \cup Y \cup Z \setminus{\{ p_{Y'} \}}$ are faulty in $\mathcal{E}'$, whereas all other processes are correct.

    \item Then, we set $\mathcal{E}' \gets \mathcal{E}$: every process (at first) behaves in the same manner as in $\mathcal{E}$.

    \item For every message $m$ such that $p_{Y'}$ receive-omits $m$ in $\mathcal{E}$, $m$ is send-omitted in $\mathcal{E}'$.
    That is, the sender of $m$ is responsible for $p_{Y'}$ not receiving $m$ in $\mathcal{E}'$.
\end{compactenum}
Observe that $p_{Y'}$ is indeed correct in $\mathcal{E}'$ as (1) $p_{Y'}$ does not commit any send-omission faults (since $p_{Y'}$ does not commit those faults in $\mathcal{E}$), and (2) $p_{Y'}$ does not commit any receive-omission faults (since every message which is receive-omitted in $\mathcal{E}$ is send-omitted in $\mathcal{E}'$).
Moreover, there are $|\mathcal{S} \cup Y \cup Z \setminus{\{p_{Y'}\}}| = |(\mathcal{S} \cap X) \cup Y \cup Z \setminus{\{p_{Y'}\}}| < \frac{t}{2} + \frac{t}{4} + \frac{t}{4} - 1 < t$ faulty processes in $\mathcal{E}'$.
Furthermore, there exists a process $p_X \in X$ which is correct in $\mathcal{E}'$ as $|\mathcal{S} \cap X| < \frac{t}{2}$, $|X| \geq n - \frac{t}{2}$ and $n > t$.
Finally, neither $p_{Y'}$ nor $p_X$ can distinguish $\mathcal{E}'$ from $\mathcal{E}$ as their behaviors in $\mathcal{E}'$ and $\mathcal{E}$ are identical.\footnote{Recall that process $p_{Y'}$ is unaware of receive-omission failures it commits in $\mathcal{E}$. Therefore, the fact that $p_{Y'}$ does not commit receive-omission failures in $\mathcal{E}'$ does not allow $p_{Y'}$ to distinguish $\mathcal{E}'$ from $\mathcal{E}$.}
Therefore, either \emph{Termination} (if $p_{Y'}$ does not decide) or \emph{Agreement} (if $p_{Y'}$ decides $1 - b_X$) is violated in $\mathcal{E}'$, which contradicts the fact that $\mathcal{A}$ is a correct weak consensus algorithm.
\end{sketch}

Next, we define \emph{mergeable} executions.

\begin{definition} [Mergeable executions] \label{definition:mergeable}
Any two infinite executions $\mathcal{E}_{0}^{B(k_1)}$ ($k_1 \in \mathbb{N}$) and $\mathcal{E}_{b}^{C(k_2)}$ ($b \in \{0, 1\}$, $k_2 \in \mathbb{N}$) are \emph{mergeable} if and only if:
\begin{compactitem}
    \item $k_1 = k_2 = 1$; or

    \item $|k_1 - k_2| \leq 1$ and $b = 0$.
\end{compactitem}
\end{definition}

In brief, executions $\mathcal{E}_0^{B(k_1)}$ and $\mathcal{E}_{b}^{C(k_2)}$ (which are defined in \Cref{table:notation_weak_consensus}) are mergeable if (1) group $B$ (resp., $C$) is isolated from round $1$ in $\mathcal{E}_0^{B(k_1)}$ (resp., $\mathcal{E}_{b}^{C(k_2)}$), or (2) $b = 0$ and groups $B$ and $C$ are isolated at most one round apart in their respective executions.
Note that all processes from group $A$ are correct in any two mergeable executions.
The following lemma proves that processes from group $A$ decide identically in any two mergeable executions, and it represents a crucial intermediate result in proving our lower bound.
We formally prove the lemma in \Cref{section:lower_bound_formal}.
An illustration of its application can be seen in \Cref{figure:helper_final_lemma_illustration}.

\begin{restatable}{lemma}{either}
\label{lemma:either_or}
Let $\mathcal{E}_0^{B(k_1)}$ ($k_1 \in \mathbb{N}$) and $\mathcal{E}_{b}^{C(k_2)}$ ($b \in \{0, 1\}, k_2 \in \mathbb{N}$) be any two mergeable executions.
Let any process from group $A$ decide $b_1$ (resp., $b_2$) in $\mathcal{E}_0^{B(k_1)}$ (resp., $\mathcal{E}_{b}^{C(k_2)}$).
Then, $b_1 = b_2$.
\end{restatable}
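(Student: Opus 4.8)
The plan is to exploit the fact that, in any two mergeable executions, the processes in $A$ together with a majority of $B$ (resp.\ $C$) agree (by \Cref{lemma:lower_bound_helper}), and then build a single ``merged'' execution in which processes from $B$ behave as in $\mathcal{E}_0^{B(k_1)}$ and processes from $C$ behave as in $\mathcal{E}_{b}^{C(k_2)}$, forcing $A$ to agree with both sides. If $b_1 \neq b_2$, this merged execution makes $A$ disagree with a majority of $B$ or a majority of $C$, contradicting \Cref{lemma:lower_bound_helper}. The content is really an indistinguishability argument: I want $B$ to receive exactly the messages it received in $\mathcal{E}_0^{B(k_1)}$ and $C$ to receive exactly what it received in $\mathcal{E}_b^{C(k_2)}$, while $A$'s view is a consistent blend.

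First I would treat the case $k_1 = k_2 = 1$. Here both $B$ and $C$ are isolated from the very first round, so from round $1$ onward neither group receives anything from outside itself; the proposals inside $B$ are all $0$ (in $\mathcal{E}_0^{B(1)}$) and inside $C$ they are all $b$. I construct $\mathcal{E}^*$ in which $A$-processes propose $0$, $B$-processes propose $0$, $C$-processes propose $b$, group $B$ is isolated from round $1$, and group $C$ is isolated from round $1$. Since $B$ is isolated from round $1$, the internal evolution of $B$ in $\mathcal{E}^*$ is a function only of $B$'s own proposals and internal messages, hence identical to its evolution in $\mathcal{E}_0^{B(1)}$; every process in $B$ decides as it does in $\mathcal{E}_0^{B(1)}$. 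Symmetrically every process in $C$ decides as in $\mathcal{E}_b^{C(1)}$. Now apply \Cref{lemma:lower_bound_helper} to $\mathcal{E}^*$ with $(X,Y,Z) = (A, B, C)$: a majority of $B$ decides $b_A := b_X$; but a majority of $B$ also decides $b_1$, so $b_A = b_1$. Applying the lemma again with $(X,Y,Z) = (A, C, B)$ gives $b_A = b_2$. Hence $b_1 = b_2$. (One must also check $\mathcal{E}^*$ is a legal execution: $|B| + |C| = t/2 \le t$ faulty, $A$ nonempty, and in $\mathcal{E}^*$ all processes of $A$ are correct and decide some common bit $b_A$ by Termination and Agreement, which is what the hypothesis of \Cref{lemma:lower_bound_helper} needs.)

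Next the case $b = 0$ and $|k_1 - k_2| \le 1$. Without loss of generality $k_2 \le k_1 \le k_2 + 1$. Here all proposals are $0$ in both executions. The subtlety is that $B$ and $C$ are isolated only from rounds $k_1, k_2 \ge 1$, so before isolation they exchange messages with $A$ and with each other, and I need a single execution whose prefix up to round $\min(k_1,k_2) - 1$ looks like $\mathcal{E}_0$ (no faults, everyone sends the $\mathcal{E}_0$ messages). I claim the right object is $\mathcal{E}^{**}$: everyone proposes $0$; group $B$ is isolated from round $k_1$; group $C$ is isolated from round $k_2$. The key observation — the ``one round apart'' slack — is that a process in $C$ isolated from round $k_2$ cannot tell whether $B$ was isolated from $k_2$ or from $k_2 + 1$ (or is never isolated), because the only way $B$'s isolation could reach $C$ is via a message $B \to C$, and such a message sent in round $k_2$ or later is receive-omitted by $C$ anyway (since $C$ is isolated from $k_2$), while a message $B \to C$ sent in round $k_2 - 1$ is sent before $B$'s isolation kicks in (as $k_1 \ge k_2 > k_2 - 1$, and if $k_1 = k_2$ then round $k_2 - 1 < k_1$ still), hence identical to the $\mathcal{E}_0$ / $\mathcal{E}_0^{B(k_1)}$ message. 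So $C$'s view in $\mathcal{E}^{**}$ coincides round-by-round with its view in $\mathcal{E}_0^{C(k_2)}$; by determinism every process in $C$ decides $b_2$. Symmetrically — and this is where I use $k_1 \le k_2 + 1$ — a process in $B$ isolated from round $k_1$ cannot detect whether $C$ is isolated from $k_2 = k_1$ or $k_2 = k_1 - 1$ versus not isolated, by the same reasoning applied to messages $C \to B$; so $B$'s view in $\mathcal{E}^{**}$ coincides with its view in $\mathcal{E}_0^{B(k_1)}$ and every process in $B$ decides $b_1$. Then, exactly as before, \Cref{lemma:lower_bound_helper} applied to $\mathcal{E}^{**}$ with $(A,B,C)$ and with $(A,C,B)$ forces the common decision $b_A$ of $A$ to equal both $b_1$ and $b_2$, so $b_1 = b_2$.

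The main obstacle I anticipate is making the indistinguishability claims in the second case fully rigorous: I am implicitly doing an induction on rounds to show that the set of messages sent in each round of $\mathcal{E}^{**}$, restricted to $B$'s incoming messages, matches that of $\mathcal{E}_0^{B(k_1)}$ (and the $C$-version), and the induction has to carefully track that the only cross-group influence that could break the match is a message whose send round is $\ge$ the relevant isolation round, which is precisely the message that gets receive-omitted. Getting the indices right when $k_1 = k_2$ versus $k_1 = k_2 + 1$ — i.e.\ confirming that the ``offending'' round is always either dropped on receipt or lies strictly before both isolation points — is the delicate bookkeeping. A secondary point worth stating explicitly is that in each constructed execution all of $A$ is correct, so Termination and Agreement give a well-defined common bit $b_A$, which is exactly the ingredient \Cref{lemma:lower_bound_helper} needs to conclude; and that the fault count $|B| + |C| = t/2 + |Z\text{-part}|$ — here with $Z$ empty or of size $t/4$ — never exceeds $t$.
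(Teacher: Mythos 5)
Your proposal is correct and follows essentially the same route as the paper: apply \Cref{lemma:lower_bound_helper} to the two given executions (with partitions $(A\cup C,B,\emptyset)$ and $(A\cup B,C,\emptyset)$) to get majorities of $B$ and $C$ deciding $b_1$ and $b_2$, build the merged execution in which $B$ and $C$ each behave as in their respective original execution (justified by exactly the one-round-slack indistinguishability argument you describe), and then apply \Cref{lemma:lower_bound_helper} twice more with partitions $(A,B,C)$ and $(A,C,B)$ to force $b_1=b_A=b_2$. The only cosmetic difference is that you split the two mergeability cases into separate constructions, whereas the paper handles both inside a single $\mathsf{merge}$ procedure.
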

\begin{sketch}
For $\mathcal{A}$ to satisfy \emph{Termination} and \emph{Agreement}, all processes from group $A$ decide $b_1$ (resp., $b_2$) in $\mathcal{E}_0^{B(k_1)}$ (resp., $\mathcal{E}_b^{C(k_2)}$).
Given the partition $(A \cup C, B, \emptyset)$ of $\Pi$ and the execution $\mathcal{E}_0^{B(k_1)}$, \Cref{lemma:lower_bound_helper} proves that there exists a set $B' \subseteq B$ of more than $\frac{|B|}{2}$ processes such that every process $p_{B'} \in B'$ decides $b_1$ in $\mathcal{E}_0^{B(k_1)}$.
Similarly, given the partition $(A \cup B, C, \emptyset)$ of $\Pi$ and the execution $\mathcal{E}_b^{C(k_2)}$, \Cref{lemma:lower_bound_helper} proves that there exists a set $C' \subseteq C$ of more than $\frac{|C|}{2}$ processes such that every process $p_{C'} \in C'$ decides $b_2$ in $\mathcal{E}_b^{C(k_2)}$.

We now construct another infinite execution $\mathcal{E}$ of $\mathcal{A}$:
\begin{compactenum}
    \item Processes from group $A$ are correct, whereas processes from $B \cup C$ are faulty.
    
    \item All processes from $A \cup B$ propose $0$, whereas all processes from group $C$ propose $b$.

    \item Every process $p_B \in B$ (resp., $p_C \in C$) behaves in the same manner as in $\mathcal{E}^{B(k_1)}_0$ (resp., $\mathcal{E}^{C(k_2)}_b$).
    Let us elaborate on why this step of the construction is valid:
    \begin{compactitem}
        \item Suppose that $k_1 = k_2 = 1$.
        Due to the construction of $\mathcal{E}$, every process $p_B \in B$ (resp., $p_C \in C$) receives messages only from other processes in the same group $B$ (resp., $C$) in $\mathcal{E}$.
        As (1) all messages received by $p_B \in B$ (resp., $p_C \in C$) in $\mathcal{E}$ are sent in $\mathcal{E}_0^{B(1)}$ (resp., $\mathcal{E}_b^{C(1)}$), and (2) for every process $p_B' \in B$ (resp., $p_C' \in C$), the set of messages sent by $p_B'$ (resp., $p_C'$) in $\mathcal{E}$ is identical to the set of messages sent by $p_B'$ (resp., $p_C'$) in $\mathcal{E}_0^{B(1)}$ (resp., $\mathcal{E}_b^{C(1)}$), the construction step is indeed valid in this case.

        \item Suppose that $|k_1 - k_2| \leq 1$ and $b = 0$.
        As the behavior of each process from group $B$ (resp., $C$) in $\mathcal{E}$ is identical to its behavior in $\mathcal{E}_0^{B(k_1)}$ (resp., $\mathcal{E}_0^{C(k_2)}$), the set of messages received by any process $p_B \in B$ (resp., $p_C \in C$) in $\mathcal{E}$ is identical to the set of messages received by $p_B \in B$ (resp., $p_C \in C$) in $\mathcal{E}_0^{B(k_1)}$ (resp., $\mathcal{E}_0^{C(k_2)}$).
        To prove the validity of the construction step in this scenario, we show that, for each message received by any process $p_B \in B$ (resp., $p_C \in C$) in $\mathcal{E}$, that message is sent in $\mathcal{E}$.

        Without loss of generality, we fix any message $m$ received by any process $p_B \in B$ in $\mathcal{E}$.
        We denote the sender of $m$ by $p_m$.
        Note that $m$ is sent by $p_m$ in $\mathcal{E}_0^{B(k_1)}$ as $m$ is received in $\mathcal{E}_0^{B(k_1)}$.
        If $m$ is received before round $R = \min(k_1, k_2)$, $m$ is sent in $\mathcal{E}$ as, for any process $p \in \Pi$, $p$'s behaviour until (and excluding) round $R$ is identical in $\mathcal{E}$ and $\mathcal{E}_0^{B(k_1)}$.
        If $m$ is received in or after round $R$, we distinguish two possibilities:
        \begin{compactitem}
            \item Let $m$ be received before round $k_1$.
            (This is possible only if $k_1 > k_2$.)
            Hence, $m$ is received in round $R$.
            In this case, $m$ is sent in $\mathcal{E}$ as the set of messages $p_m$ sends in $\mathcal{E}$ is identical to the set of messages $p_m$ sends in $\mathcal{E}_0^{B(k_1)}$ (since the internal state of process $p_m$ at the beginning of round $R$ is identical in $\mathcal{E}$ and $\mathcal{E}_0^{B(k_1)}$).

            \item Let $m$ be received in or after round $k_1$.
            In this case, $p_m \in B$ (as group $B$ is isolated from round $k_1$ in $\mathcal{E}_0^{B(k_1)}$).
            Therefore, $m$ is sent in $\mathcal{E}$ as the behavior of every process from group $B$ in $\mathcal{E}$ is identical to its behavior in $\mathcal{E}_0^{B(k_1)}$.
        \end{compactitem}
        Note that this step of construction ensures that group $B$ (resp., $C$) is isolated from round $k_1$ (resp., $k_2$) in $\mathcal{E}$.
    \end{compactitem}
\end{compactenum}
As no process $p_{B'} \in B'$ (resp., $p_{C'} \in C'$) distinguishes $\mathcal{E}$ from $\mathcal{E}_0^{B(k_1)}$ (resp., $\mathcal{E}_b^{C(k_2)}$), all processes from $B'$ (resp., $C'$) decide $b_1$ (resp., $b_2$) in $\mathcal{E}$.
Let $b_A$ be the decision of processes from group $A$ in $\mathcal{E}$; such a decision must exist as $\mathcal{A}$ satisfies \emph{Termination} and \emph{Agreement}.
Given the partition $(A, B, C)$ of $\Pi$ and the newly constructed execution $\mathcal{E}$, \Cref{lemma:lower_bound_helper} proves that $b_1 = b_A$.
Similarly, given the partition $(A, C, B)$ of $\Pi$ and the execution $\mathcal{E}$, \Cref{lemma:lower_bound_helper} shows that $b_2 = b_A$.
As $b_1 = b_A$ and $b_A = b_2$, $b_1 = b_2$, which concludes the proof.
\end{sketch}

\Cref{lemma:either_or} implies that all processes from group $A$ decide identical values in executions $\mathcal{E}_0^{B(1)}$ and $\mathcal{E}_1^{C(1)}$ as these two executions are mergeable (see \Cref{definition:mergeable}).
Without loss of generality, the rest of the proof assumes that all processes from group $A$ decide $1$ in $\mathcal{E}_0^{B(1)}$ (and $\mathcal{E}_1^{C(1)}$).
Intuitively, the value $1$ acts as the ``default'' value for processes in $A$ if they detect faults early.
In the following lemma, we prove that there exists a round $R \in \mathbb{N}$ such that processes from group $A$ decide $1$ in $\mathcal{E}_0^{B(R)}$ and $0$ in $\mathcal{E}_0^{B(R + 1)}$.
This expresses the idea that $A$ must, at some critical round (i.e., $R+1$), abandon its initial strategy of always deciding the ``default'' value.

\begin{restatable}{lemma}{criticalround} \label{lemma:R}
There exists a round $R \in \mathbb{N}$ such that (1) all processes from group $A$ decide $1$ in $\mathcal{E}_0^{B(R)}$, and (2) all processes from group $A$ decide $0$ in $\mathcal{E}_0^{B(R + 1)}$.
\end{restatable}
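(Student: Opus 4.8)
The plan is to run an interpolation (pigeonhole) argument over the family of executions $\{\mathcal{E}_0^{B(k)}\}_{k \in \mathbb{N}}$. First I would observe that, for every $k \in \mathbb{N}$, all processes in group $A$ decide the same bit in $\mathcal{E}_0^{B(k)}$: group $A$ is correct in $\mathcal{E}_0^{B(k)}$ and there are only $|B| = \frac{t}{4} \leq t$ faulty processes, so \emph{Termination} and \emph{Agreement} apply to $A$. Denote this common bit by $f(k) \in \{0, 1\}$. By the (without-loss-of-generality) assumption made just before the lemma, $f(1) = 1$.

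Next I would pin down the ``tail'' behaviour of $f$. Consider $\mathcal{E}_0$, the fault-free execution in which all processes propose $0$. By \emph{Termination}, every process in $A$ decides in $\mathcal{E}_0$ at some finite round; let $R_0$ be the maximum such round over all $p \in A$ (finite, since $A$ is finite). By \emph{Weak Validity} (in $\mathcal{E}_0$ all processes are correct and all propose $0$), that decision is $0$. Now fix any $k > R_0$. In $\mathcal{E}_0^{B(k)}$ group $B$ is isolated only from round $k$, so no message is receive-omitted in any round $k' < k$, in particular in rounds $1, \ldots, R_0$; hence a straightforward round-by-round induction shows that every process in $\Pi$ — in particular every process in $A$ — has exactly the same internal state and receives exactly the same messages through round $R_0$ in $\mathcal{E}_0^{B(k)}$ as in $\mathcal{E}_0$. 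By determinism, every process in $A$ therefore decides $0$ by round $R_0$ in $\mathcal{E}_0^{B(k)}$ as well, i.e., $f(k) = 0$ for every $k > R_0$.

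Finally I would conclude by setting $R = \max\{k \in \mathbb{N} : f(k) = 1\}$. This set is non-empty ($f(1) = 1$) and bounded above by $R_0$ (since $f(k) = 0$ for all $k > R_0$), so the maximum is attained. Then $f(R) = 1$ yields part (1) of the statement, and $f(R+1) = 0$ — because $R + 1 > R$ and $R$ is the largest index with $f = 1$ — yields part (2).

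I expect the only genuine subtlety to be the tail step: arguing rigorously that for sufficiently large $k$ the isolation of $B$ is invisible to $A$ until after $A$ has already decided. This rests on three ingredients that must be combined carefully — (i) \emph{Termination} in $\mathcal{E}_0$ providing a uniform finite decision round $R_0$, (ii) the indistinguishability of $\mathcal{E}_0^{B(k)}$ and $\mathcal{E}_0$ through round $R_0$ whenever $k > R_0$, and (iii) \emph{Weak Validity} forcing the decision in $\mathcal{E}_0$ to be exactly $0$ rather than $1$. Once these are in place, the remainder is a one-line pigeonhole on a $\{0,1\}$-valued sequence that starts at $1$ and is eventually $0$.
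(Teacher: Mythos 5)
Your proof is correct and follows essentially the same route as the paper: both arguments use \emph{Termination} and \emph{Weak Validity} on the fault-free execution $\mathcal{E}_0$ to obtain a finite round by which group $A$ has decided $0$, observe that isolating $B$ after that round is invisible to $A$'s (already made) decision, and then interpolate between $f(1)=1$ and the eventual value $0$. Your write-up is in fact slightly more explicit than the paper's, which leaves the round-by-round indistinguishability of $\mathcal{E}_0^{B(R_{\mathit{max}})}$ and $\mathcal{E}_0$ implicit.
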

\begin{proof}
Let $R_{\mathit{max}} \in \mathbb{N}$ denote the round before which all processes decide $0$ in $\mathcal{E}_0$, which is the fully correct execution with all processes proposing $0$ (see \Cref{table:notation_weak_consensus}); such a round must exist for $\mathcal{A}$ to satisfy \emph{Termination} and \emph{Weak Validity}.
Hence, all processes from group $A$ decide $0$ in $\mathcal{E}_0^{B(R_{\mathit{max}})}$.
By our assumption, all processes from group $A$ decide $1$ in $\mathcal{E}_0^{B(1)}$.
Therefore, there exists a round $R \in [1, R_{\mathit{max}})$ which satisfies the statement of the lemma.
\end{proof}

Finally, we are ready to prove that $\mathcal{E}$ exchanges at least $\frac{t^2}{32}$ messages.

\begin{lemma}
\label{lemma:last_lemma}
The message complexity of $\mathcal{A}$ is at least $\frac{t^2}{32}$.
\end{lemma}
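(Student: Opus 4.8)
The plan is to finish by a short contradiction argument that simply chains together the three preceding lemmas; all of the quantitative work is already done inside \Cref{lemma:lower_bound_helper} and propagated through \Cref{lemma:either_or}, so this last step introduces no new construction. Recall that $\mathcal{A}$ is the algorithm fixed at the start of the proof, whose message complexity is assumed to be strictly less than $\frac{t^2}{32}$; this is precisely the hypothesis under which \Cref{lemma:lower_bound_helper} (hence \Cref{lemma:either_or}) applies. Proving \Cref{lemma:last_lemma} therefore amounts to contradicting this standing assumption, which completes the proof of \Cref{lemma:lower_bound_omission}.

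First I would invoke \Cref{lemma:R} to fix the critical round $R$: every process of group $A$ decides $1$ in $\mathcal{E}_0^{B(R)}$ and decides $0$ in $\mathcal{E}_0^{B(R+1)}$. Since \Cref{lemma:R} guarantees $R \in [1, R_{\mathit{max}})$, both $R$ and $R+1$ lie in $\mathbb{N}$, so all of $\mathcal{E}_0^{B(R)}$, $\mathcal{E}_0^{B(R+1)}$ and the auxiliary execution $\mathcal{E}_0^{C(R)}$ are well-defined infinite executions of $\mathcal{A}$. By \emph{Termination} and \emph{Agreement}, group $A$ decides a single bit $b_A \in \{0,1\}$ in $\mathcal{E}_0^{C(R)}$.

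Next I would check mergeability (\Cref{definition:mergeable}) of $\mathcal{E}_0^{C(R)}$ against the two $B$-isolation executions. The pair $\mathcal{E}_0^{B(R)}$, $\mathcal{E}_0^{C(R)}$ is mergeable via the second disjunct, since $|R - R| = 0 \leq 1$ and the $C$-execution proposes $0$; hence \Cref{lemma:either_or} yields $b_A = 1$. The pair $\mathcal{E}_0^{B(R+1)}$, $\mathcal{E}_0^{C(R)}$ is likewise mergeable, since $|(R+1) - R| = 1 \leq 1$ and the $C$-execution again proposes $0$; hence \Cref{lemma:either_or} yields $b_A = 0$. This is the contradiction: $b_A$ cannot equal both $0$ and $1$.

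Since the only assumption in force is that $\mathcal{A}$ has message complexity below $\frac{t^2}{32}$, that assumption is false, so $\mathcal{A}$ — and therefore every omission-resilient weak consensus algorithm — has message complexity at least $\frac{t^2}{32}$. I do not expect a real obstacle here: the only care needed is the bookkeeping on the isolation indices ($R$ versus $R+1$) and on which disjunct of \Cref{definition:mergeable} is being used. Conceptually, the step works because shifting an isolated group's isolation round by one is undetectable to that group, which is exactly what makes \Cref{lemma:either_or} applicable simultaneously to $\mathcal{E}_0^{B(R)}$ and $\mathcal{E}_0^{B(R+1)}$ through the common execution $\mathcal{E}_0^{C(R)}$.
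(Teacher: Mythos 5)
Your proof is correct and follows essentially the same route as the paper's: invoke \Cref{lemma:R} to obtain the critical round $R$, then apply \Cref{lemma:either_or} to the two mergeable pairs $(\mathcal{E}_0^{B(R)}, \mathcal{E}_0^{C(R)})$ and $(\mathcal{E}_0^{B(R+1)}, \mathcal{E}_0^{C(R)})$ to derive the contradiction. The bookkeeping on the mergeability disjuncts is handled correctly, and no gap remains.
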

\begin{proof}
According to \Cref{lemma:R}, there exists a round $R \in \mathbb{N}$ such that (1) processes from group $A$ decide $1$ in $\mathcal{E}_0^{B(R)}$, and (2) processes from group $A$ decide $0$ in $\mathcal{E}_0^{B(R + 1)}$.
By \Cref{definition:mergeable}, executions $\mathcal{E}_0^{B(R)}$ and $\mathcal{E}_0^{C(R)}$ are mergeable.
As processes from group $A$ decide $1$ in $\mathcal{E}_0^{B(R)}$, \Cref{lemma:either_or} implies that processes from group $A$ decide $1$ in $\mathcal{E}_0^{C(R)}$.
Moreover, executions $\mathcal{E}_0^{B(R + 1)}$ and $\mathcal{E}_0^{C(R)}$ are mergeable according to \Cref{definition:mergeable}.
Thus, by \Cref{lemma:either_or}, all processes from group $A$ decide $1$ in $\mathcal{E}_0^{B(R + 1)}$ (as they do so in $\mathcal{E}_0^{C(R)}$).
This is a contradiction with the fact that processes from group $A$ decide $0$ in $\mathcal{E}_0^{B(R + 1)}$.
Hence, the assumption of $\mathcal{A}$'s message complexity being less than $\frac{t^2}{32}$ must be wrong.
\end{proof}
\begin{figure}[ht]
    \centering
    \begin{minipage}{1\textwidth}
        \centering
        \includegraphics[width=0.9\linewidth]{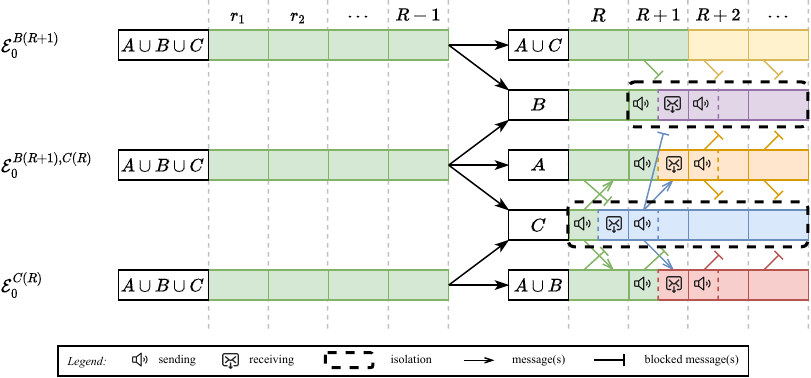}
        \caption{Illustration of \Cref{lemma:either_or} used in the proof of \Cref{lemma:last_lemma}.
        The arrows denoting messages are not exhaustive.
        As in \Cref{figure:helper_isolation_illustration}, the colors represent the local behaviors of processes.
        This picture illustrates why group $A$ is forced to decide the same value in executions $\mathcal{E}_0^{B(R+1)}$ and $\mathcal{E}_0^{C(R)}$.
        Consider the ``merged'' execution $\mathcal{E}_0^{B(R+1), C(R)}$ where $B$ and $C$ are isolated at rounds $R+1$ and $R$, respectively.
        If $A$ decides differently in $\mathcal{E}_0^{B(R+1)}$ (row 1) and $\mathcal{E}_0^{C(R)}$ (row 5), then majorities of $B$ and $C$ decide differently in the $\mathcal{E}_0^{B(R+1), C(R)}$ (rows 2 and 4) due to indistinguishability.
        Group $A$ in $\mathcal{E}_0^{B(R+1), C(R)}$ (row 3) then disagrees with either a majority of $B$ (row 2) or a majority of $C$ (row 4), contradicting \Cref{lemma:lower_bound_helper}.}
        \label{figure:helper_final_lemma_illustration}
    \end{minipage}
\end{figure}
\section{Generalization of the Lower Bound} \label{section:weak_consensus}

In this section, we extend the quadratic lower bound proven for weak consensus (see \Cref{section:lower_bound_weak_consensus}) to all non-trivial (without an always-admissible decision) Byzantine agreement problems:

\begin{theorem} \label{theorem:general_lower_bound}
Any algorithm that solves any non-trivial Byzantine agreement problem has $\Omega(t^2)$ message complexity.    
\end{theorem}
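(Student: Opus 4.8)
The plan is to transfer the $\frac{t^2}{32}$ bound of \Cref{theorem:lower_bound_weak_validity} (i.e.\ \Cref{lemma:lower_bound_omission}) to an arbitrary non-trivial problem $\mathcal{P}$ by exhibiting a reduction from binary weak consensus to $\mathcal{P}$ that uses \emph{no additional messages}. Fix a non-trivial Byzantine agreement problem $\mathcal{P}$ with input set $\mathcal{V}_I$ and output set $\mathcal{V}_O$; if $\mathcal{P}$ is not solvable in synchrony with $t$ faults there is nothing to prove, so let $\mathcal{A}_{\mathcal{P}}$ be a synchronous algorithm solving $\mathcal{P}$ and tolerating $t$ faults. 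We will build from $\mathcal{A}_{\mathcal{P}}$ a weak consensus algorithm whose message complexity equals that of $\mathcal{A}_{\mathcal{P}}$, and then invoke \Cref{theorem:lower_bound_weak_validity}.

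The crux is a structural fact about $\mathcal{P}$'s validity property, which I would isolate as the \emph{containment condition} and prove to be necessary for every synchronously solvable non-trivial agreement problem. Intuitively, non-triviality is exactly the statement that $\mathcal{P}$ has no always-admissible decision, which is the analogue of the $\mathcal{E}$-versus-non-$\mathcal{E}$ dichotomy that makes weak consensus nontrivial; combined with solvability in synchrony, this must force the existence of two \emph{uniform} input profiles whose admissible-decision sets are, when restricted to decisions actually reachable by correct algorithms, disjoint. Concretely, I would extract from the containment condition two values $v_0, v_1 \in \mathcal{V}_I$ with $v_0 \neq v_1$ and a decoding function $\mathrm{dec} \colon \mathcal{V}_O \to \{0,1\}$ (e.g.\ $\mathrm{dec}(w) = 0$ iff $w$ is admissible in the all-correct execution in which every process proposes $v_0$) such that, for each $b \in \{0,1\}$, every admissible decision $w$ of $\mathcal{A}_{\mathcal{P}}$ in the all-correct execution where every process proposes $v_b$ satisfies $\mathrm{dec}(w) = b$. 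Pinning down the precise combinatorial statement of the containment condition, handling possibly infinite $\mathcal{V}_I$ and $\mathcal{V}_O$, and proving the condition is necessary (not just sufficient) is, I expect, the main obstacle; everything else is routine.

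Given $v_0, v_1$ and $\mathrm{dec}$, the reduction costs zero messages: a process running binary weak consensus with input bit $b \in \{0,1\}$ invokes $\mathcal{A}_{\mathcal{P}}$ with proposal $v_b$, and upon $\mathcal{A}_{\mathcal{P}}$ deciding $w \in \mathcal{V}_O$ it decides $\mathrm{dec}(w)$. \emph{Termination} and \emph{Agreement} are inherited directly from $\mathcal{A}_{\mathcal{P}}$, since $\mathrm{dec}$ is a fixed deterministic function applied identically by all correct processes (Byzantine processes may invoke $\mathcal{A}_{\mathcal{P}}$ arbitrarily, but that is precisely what $\mathcal{A}_{\mathcal{P}}$'s fault tolerance absorbs, and weak consensus only requires guarantees for correct processes). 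For \emph{Weak Validity}: if all $n$ processes are correct and all propose the same bit $b$, the induced run of $\mathcal{A}_{\mathcal{P}}$ is the all-correct execution in which every process proposes $v_b$, so by the property above every decision $w$ of $\mathcal{A}_{\mathcal{P}}$ has $\mathrm{dec}(w) = b$, hence every correct process decides $b$.

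Finally, the constructed algorithm is a correct weak consensus algorithm, so by \Cref{theorem:lower_bound_weak_validity} it has $\Omega(t^2)$ message complexity; since its message complexity is exactly that of $\mathcal{A}_{\mathcal{P}}$, we conclude $\mathcal{A}_{\mathcal{P}}$ has $\Omega(t^2)$ message complexity. As $\mathcal{A}_{\mathcal{P}}$ and $\mathcal{P}$ were arbitrary, \Cref{theorem:general_lower_bound} follows (and \Cref{theorem:general_lower_bound_main} is the same statement restated in the synchronous setting, whose bound a fortiori applies to weaker network models).
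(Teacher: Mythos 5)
Your overall strategy --- a zero-message reduction from binary weak consensus to $\mathcal{P}$ followed by an appeal to \Cref{theorem:lower_bound_weak_validity} --- is exactly the paper's, but the structural fact you lean on is stated too strongly and is false as written. You ask for two \emph{uniform} input profiles, i.e., two single values $v_0 \neq v_1 \in \mathcal{V}_I$ such that the all-correct executions in which every process proposes $v_0$ (resp.\ $v_1$) have disjoint reachable decision sets. Non-triviality plus synchronous solvability does not give you this. Consider, for example, $\mathcal{V}_I = \mathcal{V}_O = \{0,1\}$ with $\mathit{val}(c) = \{\mathsf{proposal}(c[1]) \oplus \mathsf{proposal}(c[2])\}$ whenever $p_1, p_2 \in \pi(c)$ and $\mathit{val}(c) = \{0,1\}$ otherwise: this problem is non-trivial and satisfies the containment condition (hence is authenticated-solvable by \Cref{theorem:solvability}), yet both uniform profiles force the decision $0$, so no choice of $v_0, v_1$ can work. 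The paper's reduction avoids this by distinguishing two full \emph{input configurations} $c_0, c_1 \in \mathcal{I}_n$ --- per-process proposal assignments, not single values --- and having process $p_i$ propose $\mathsf{proposal}(c_0[i])$ or $\mathsf{proposal}(c_1[i])$ according to its weak-consensus bit (\Cref{algorithm:reduction}); in general $c_1$ cannot be taken uniform.

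The second issue is that you defer precisely the step that carries the mathematical content. The paper obtains the two distinguishing configurations as follows: let $v_0'$ be the value decided in the all-correct execution corresponding to an arbitrary $c_0 \in \mathcal{I}_n$; by non-triviality there is some $c_1^* \in \mathcal{I}$ with $v_0' \notin \mathit{val}(c_1^*)$; extend $c_1^*$ to a full configuration $c_1 \sqsupseteq c_1^*$. The key lemma (\Cref{lemma:canonical_containment}, proved by an indistinguishability argument in which the processes outside $\pi(c_1^*)$ are declared faulty yet behave identically) shows that the value $v_1'$ decided under $c_1$ must lie in $\mathit{val}(c_1^*)$, hence $v_1' \neq v_0'$; the decoding is then simply ``decide $0$ iff $\mathcal{A}_{\mathcal{P}}$ decided exactly $v_0'$,'' which is well defined by determinism and \emph{Agreement}. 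Your proposal names this obstacle (``the main obstacle; everything else is routine'') but does not supply the argument, and the route you sketch for closing it --- disjointness of admissible sets on uniform profiles --- would not succeed.
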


To prove the general lower bound (\Cref{theorem:general_lower_bound}), we show that weak consensus is the weakest non-trivial agreement problem.
Namely, we present a zero-message reduction from weak consensus to any (solvable) non-trivial agreement problem.

\subsection{Validity Properties} \label{subsection:validity_definition}

To capture any specific Byzantine agreement problem, we require a generic definition of the validity property.
For that purpose, we reuse the formalism (and nomenclature) of~\cite{validity_podc}.
In brief, a validity property maps the proposals of correct processes into a set of admissible decisions.

Let a \emph{process-proposal} pair be a pair $(p_i, v)$, where $p_i \in \Pi$ is a process and $v \in \mathcal{V}_I$ is a proposal.
Given any process-proposal pair $\mathit{pp} = (p_i, v)$, we denote by $\mathsf{proposal}(\mathit{pp}) = v$ the proposal associated with the pair. 
An \emph{input configuration} is a tuple $\big[ \mathit{pp}_1, \mathit{pp}_2, ..., \mathit{pp}_x \big]$ such that (1) $n - t \leq x \leq n$, and (2) every process-proposal pair is associated with a distinct process.
In a nutshell, an input configuration is an assignment of proposals to (all) correct processes.
For instance, $\big[ (p_1, v_1), (p_4, v_4), (p_5, v_5) \big]$ is an input configuration according to which (1) only processes $p_1$, $p_4$ and $p_5$ are correct, and (2) $p_1$ proposes $v_1$, $p_4$ proposes $v_4$ and $p_5$ proposes $v_5$.

The set of all input configurations is denoted by $\mathcal{I}$.
Moreover, $\mathcal{I}_n \subsetneq \mathcal{I}$ denotes the set of all input configurations with exactly $n$ process-proposals pairs.
Given any input configuration $c \in \mathcal{I}$, $c[i]$ denotes the process-proposal pair associated with the process $p_i$; if such a process-proposal pair does not exist, $c[i] = \bot$.
Moreover, $\pi(c) = \{ p_i \in \Pi \,|\, c[i] \neq \bot \}$ denotes the set of all correct processes according to any input configuration $c \in \mathcal{I}$.

\paragraph{Execution - input configuration correspondence.}
Let $\mathcal{E}$ be any execution of any algorithm $\mathcal{A}$ which exposes the $\mathsf{propose}(\cdot) / \mathsf{decide}(\cdot)$ interface, and let $c \in \mathcal{I}$ be any input configuration.
We say that $\mathcal{E}$ \emph{corresponds to} $c$ (in short, $\mathsf{input\_conf}(\mathcal{E}) = c$) if and only if:
\begin{compactitem}
    \item $\pi(c) = \mathit{Correct}_{\mathcal{A}}(\mathcal{E})$, i.e., the set of processes which are correct in $\mathcal{E}$ is identical to the set of processes which are correct according to $c$; and

    \item for every process $p_i \in \pi(c)$, $p_i$'s proposal in $\mathcal{E}$ is $\mathsf{proposal}(c[i])$.
\end{compactitem}

\paragraph{Satisfying validity.}
A validity property $\mathit{val}$ is a function $\mathit{val}: \mathcal{I} \to 2^{\mathcal{V}_O}$ such that $\mathit{val}(c) \neq \emptyset$, for every input configuration $c \in \mathcal{I}$.
We say that any algorithm $\mathcal{A}$ which exposes the $\mathsf{propose}(\cdot) / \mathsf{decide}(\cdot)$ interface \emph{satisfies} a validity property $\mathit{val}$ if and only if, in any execution $\mathcal{E} \in \mathit{execs}(\mathcal{A})$, no correct process decides any value $v' \notin \mathit{val}\big( \mathsf{input\_conf}(\mathcal{E}) \big)$.
Intuitively, an algorithm satisfies a validity property if correct processes only decide admissible values.

\paragraph{The defining property of Byzantine agreement.}
Observe that an exact definition of validity uniquely defines a \emph{specific} agreement problem.
Indeed, any validity property encodes information about (1) $n$, the total number of processes, (2) $t$, the upper bound on the number of failures, (3) $\mathcal{V}_I$, the set of proposals, and (4) $\mathcal{V}_O$, the set of decisions.
We refer to a specific agreement problem with a validity property $\mathit{val}$ as the ``$\mathit{val}$-agreement'' problem.
Lastly, we recall that the $\mathit{val}$-agreement problem, for some validity property $\mathit{val}$, is \emph{trivial} if and only if there exists an always-admissible value, i.e.,
\[
\exists v' \in \mathcal{V}_O: v' \in \bigcap\limits_{c \in \mathcal{I}} \mathit{val}(c). 
\]

\subsection{Weak Consensus: The Weakest Non-Trivial Byzantine Agreement Problem} \label{subsection:reduction}

In this subsection, we prove that any solution to any non-trivial agreement problem yields, at no additional communication cost, a solution to weak consensus:

\begin{lemma} \label{lemma:reduction_exists}
There exists a zero-message reduction from weak consensus to any solvable non-trivial Byzantine agreement problem.
\end{lemma}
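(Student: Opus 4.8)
Fix any solvable non-trivial Byzantine agreement problem, let $\mathit{val}$ be its validity property, and let $\mathcal{A}$ be an algorithm that solves it. The plan is to build from $\mathcal{A}$ a (binary) weak consensus algorithm $\mathcal{B}$ that sends no messages beyond those sent by $\mathcal{A}$. In $\mathcal{B}$, each process $p_i$ holding proposal $b \in \{0,1\}$ translates it into a $\mathcal{V}_I$-proposal via a fixed per-process map $g_i \colon \{0,1\} \to \mathcal{V}_I$, runs $\mathcal{A}$ with proposal $g_i(b)$, and, upon $\mathcal{A}$ deciding some $v' \in \mathcal{V}_O$, decides $h(v') \in \{0,1\}$ via a fixed map $h \colon \mathcal{V}_O \to \{0,1\}$. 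Since $\mathcal{B}$ issues exactly the messages of $\mathcal{A}$, the reduction is zero-message. \emph{Termination} of $\mathcal{B}$ follows from \emph{Termination} of $\mathcal{A}$, and \emph{Agreement} of $\mathcal{B}$ follows from \emph{Agreement} of $\mathcal{A}$: all correct processes decide the same $v' \in \mathcal{V}_O$ in $\mathcal{A}$, hence the same bit $h(v')$ in $\mathcal{B}$. The entire difficulty lies in choosing $(g_i)_i$ and $h$ so that \emph{Weak Validity} holds.

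Consider the execution of $\mathcal{B}$ in which every process is correct and proposes the same bit $b$. It induces the unique failure-free synchronous execution of $\mathcal{A}$ on the input configuration $c_b \in \mathcal{I}_n$ given by $c_b[i] = (p_i, g_i(b))$, in which $\mathcal{A}$ (being deterministic) produces a single decision value $d_{c_b} \in \mathcal{V}_O$. Thus $\mathcal{B}$ satisfies \emph{Weak Validity} exactly when $h(d_{c_0}) = 0$ and $h(d_{c_1}) = 1$, and such an $h$ exists (take $h^{-1}(0) = \{d_{c_0}\}$) if and only if $d_{c_0} \neq d_{c_1}$. Consequently it suffices to exhibit two input configurations $c_0, c_1 \in \mathcal{I}_n$ on which the failure-free executions of $\mathcal{A}$ reach distinct decisions, and then set $g_i(b) := \mathsf{proposal}(c_b[i])$. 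Note that $c_0$ and $c_1$ need not be ``unanimous'', which is precisely why $g_i$ is allowed to depend on $i$.

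To produce such a pair we combine non-triviality with the existence of $\mathcal{A}$. First, a projection observation: for every $c \in \mathcal{I}$ and every $c^+ \in \mathcal{I}_n$ extending $c$ (i.e.\ $c^+[i] = c[i]$ for all $p_i \in \pi(c)$), the decision $d_{c^+}$ of $\mathcal{A}$'s failure-free execution on $c^+$ belongs to $\mathit{val}(c)$. Indeed, in that execution every process decides $d_{c^+} \in \mathit{val}(c^+)$; now build an execution $\mathcal{E}$ of $\mathcal{A}$ whose correct set is exactly $\pi(c)$ (legitimate, since $|\Pi \setminus \pi(c)| \le t$) in which each process of $\Pi \setminus \pi(c)$ is Byzantine but simply replays, round by round, the messages it sent in the failure-free execution on $c^+$. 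Then $\mathsf{input\_conf}(\mathcal{E}) = c$, and a routine round-by-round induction (using synchrony and determinism) shows that $\mathcal{E}$ is indistinguishable, to every process in $\pi(c)$, from the failure-free execution on $c^+$; hence those processes still decide $d_{c^+}$ in $\mathcal{E}$, and \emph{validity} of $\mathcal{A}$ forces $d_{c^+} \in \mathit{val}(\mathsf{input\_conf}(\mathcal{E})) = \mathit{val}(c)$. Now suppose, for contradiction, that $\mathcal{A}$'s failure-free decision is one and the same value $d^*$ for \emph{every} $c^+ \in \mathcal{I}_n$. Every $c \in \mathcal{I}$ admits at least one extension in $\mathcal{I}_n$, so the projection observation gives $d^* \in \mathit{val}(c)$ for all $c \in \mathcal{I}$, i.e.\ $d^* \in \bigcap_{c \in \mathcal{I}} \mathit{val}(c)$ --- contradicting non-triviality. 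Therefore $\mathcal{A}$'s failure-free decision is not constant over $\mathcal{I}_n$, yielding the desired pair $c_0, c_1$ and completing the construction of $\mathcal{B}$.

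I expect the conceptual crux --- and the part requiring the most care --- to be the projection observation: arguing that the Byzantine-replay execution $\mathcal{E}$ is a bona fide execution of $\mathcal{A}$, that it genuinely corresponds to $c$ in the sense of \Cref{subsection:validity_definition}, and that it is indistinguishable to the correct processes from the failure-free run on $c^+$. This is where synchrony and determinism do the work, and it is routine but must be spelled out against the execution/input-configuration correspondence. The remaining points are harmless degenerate cases: if $t = 0$ then $\mathcal{I} = \mathcal{I}_n$ and the argument applies verbatim; and $|\mathcal{V}_I| = 1$ cannot occur for a non-trivial problem, since then $\mathcal{I}_n$ is a singleton, $\mathcal{A}$'s failure-free decision is trivially constant, and the argument above would already force triviality.
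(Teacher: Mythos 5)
Your proof is correct and follows essentially the same route as the paper: the zero-message wrapper with per-process input maps and a two-valued output map is exactly the paper's \Cref{algorithm:reduction}, and your ``projection observation'' is precisely the paper's containment lemma (\Cref{lemma:canonical_containment}), proved by the same Byzantine-replay indistinguishability argument. The only cosmetic difference is that you obtain the two fully-correct executions with distinct decisions by contradiction (``a constant failure-free decision would force triviality''), whereas the paper constructively picks a configuration $c_1^*$ excluding $v_0'$ and extends it to a fully-correct one; these are logically the same step.
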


Before presenting the reduction, we introduce the \emph{containment relation}.

\paragraph{Containment relation.}
We define the containment relation (``$\sqsupseteq$'') between input configurations:
\begin{equation*}
    \forall c_1, c_2 \in \mathcal{I}: c_1 \sqsupseteq c_2 \iff (\pi(c_1) \supseteq \pi(c_2)) \land (\forall p_i \in \pi(c_2): c_1[i] = c_2[i]). 
\end{equation*}
Intuitively, $c_1$ contains $c_2$ if and only if (1) each process in $c_2$ belongs to $c_1$, and (2) for each process in $c_2$, its proposals in $c_1$ and $c_2$ are identical.
For example, when $n = 3$ and $t = 1$, $\big[ (p_1, v_1), (p_2, v_2), (p_3, v_3) \big]$ contains $\big[ (p_1, v_1), (p_3, v_3) \big]$, but it does not contain $\big[ (p_1, v_1), (p_3, v_3' \neq v_3) \big]$.
Note that the containment relation is reflexive (for every $c \in \mathcal{I}$, $c \sqsupseteq c$).
For any input configuration $c \in \mathcal{I}$, we define its \emph{containment set} $\mathit{Cnt}(c)$ as the set of all input configurations which $c$ contains:
\begin{equation*}
    \mathit{Cnt}(c) = \{ c' \in \mathcal{I} \,|\, c \sqsupseteq c' \}.
\end{equation*}

The following lemma proves that, in any execution that corresponds to some input configuration $c$, if any agreement algorithm decides some value $v'$, then $v'$ must be admissible according to all input configurations $c$ contains.
Otherwise, the same scenario can correspond to \emph{another} input configuration for which $v'$ is not admissible, thus violating the considered validity property.
A formal proof of the following lemma is relegated to \Cref{section:containment_proof}.

\begin{restatable}{lemma}{containment}
\label{lemma:canonical_containment}
Let $\mathcal{A}$ be any algorithm that solves the $\mathit{val}$-agreement problem, for any validity property $\mathit{val}$.
Let $\mathcal{E}$ be any (potentially infinite) execution of $\mathcal{A}$, and let $c = \mathsf{input\_conf}(\mathcal{E})$, for some input configuration $c$.
If a correct process decides a value $v' \in \mathcal{V}_O$ in $\mathcal{E}$, then $v' \in \bigcap\limits_{c' \in \mathit{Cnt}(c)} \mathit{val}(c')$.
\end{restatable}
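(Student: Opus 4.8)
The plan is to fix an arbitrary $c' \in \mathit{Cnt}(c)$ and prove $v' \in \mathit{val}(c')$; since $c'$ ranges over all of $\mathit{Cnt}(c)$, the claimed membership in $\bigcap_{c' \in \mathit{Cnt}(c)} \mathit{val}(c')$ follows. Recall that $c \sqsupseteq c'$ means $\pi(c') \subseteq \pi(c)$ and $c[i] = c'[i]$ for every $p_i \in \pi(c')$. The core idea is to keep the \emph{same} run of messages as in $\mathcal{E}$ but relabel the processes in $\pi(c) \setminus \pi(c')$ as faulty, obtaining an execution $\mathcal{E}'$ that is message-for-message identical to $\mathcal{E}$ yet corresponds to $c'$, and then apply the validity property of $\mathcal{A}$ to $\mathcal{E}'$.

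First I would reduce to the case of an infinite execution: if $\mathcal{E}$ is finite, extend it to any infinite execution $\mathcal{E}^+ \in \mathit{execs}(\mathcal{A})$ (e.g., letting faulty processes send nothing in later rounds), and otherwise set $\mathcal{E}^+ = \mathcal{E}$. Extending a finite execution changes neither the set of correct processes nor their proposals, so $\mathsf{input\_conf}(\mathcal{E}^+) = c$, and the correct process that decided $v'$ in $\mathcal{E}$ still decides $v'$ in $\mathcal{E}^+$. Since $|\pi(c')| \geq n - t \geq 1$, I can pick some process $q \in \pi(c')$. By \emph{Termination}, $q$ decides in $\mathcal{E}^+$, and by \emph{Agreement} (with the process that decided $v'$), $q$ decides exactly $v'$ in $\mathcal{E}^+$.

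Next I would construct $\mathcal{E}'$: its correct processes are exactly those in $\pi(c')$, its faulty processes are exactly those in $\Pi \setminus \pi(c')$, and every faulty process sends, in each round, precisely the messages it sent in that round of $\mathcal{E}^+$. This is a legal execution with $n - |\pi(c')| \leq t$ faulty processes, so $\mathcal{E}' \in \mathit{execs}(\mathcal{A})$. A straightforward induction on the round number then shows that every process is in the same local state and sends the same messages in $\mathcal{E}^+$ and in $\mathcal{E}'$: the processes in $\Pi \setminus \pi(c')$ send identical messages by construction; and each $p_i \in \pi(c')$ is correct in both executions with the same proposal (as $\mathsf{proposal}(c'[i]) = \mathsf{proposal}(c[i])$ because $c \sqsupseteq c'$), so, receiving the same messages every round, it behaves identically. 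In particular $\mathsf{input\_conf}(\mathcal{E}') = c'$ and $q$, which is correct in $\mathcal{E}'$, decides $v'$ in $\mathcal{E}'$. Since $\mathcal{A}$ satisfies $\mathit{val}$, $v' \in \mathit{val}\big(\mathsf{input\_conf}(\mathcal{E}')\big) = \mathit{val}(c')$.

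The only real subtlety — and the reason the argument routes through $q$ rather than directly through the process that witnessed the decision of $v'$ in $\mathcal{E}$ — is that that witnessing process may lie in $\pi(c) \setminus \pi(c')$, hence becomes faulty in $\mathcal{E}'$, so the validity property says nothing about its decision; \emph{Agreement} (together with \emph{Termination}, after extending to an infinite execution) is exactly what transfers the decision to a process that remains correct in $\mathcal{E}'$. Everything else is bookkeeping about executions being fully determined by the correct/faulty partition and the messages sent by faulty processes.
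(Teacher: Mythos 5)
Your proof is correct and is essentially the paper's argument: extend $\mathcal{E}$ to an infinite execution, use \emph{Termination} and \emph{Agreement} to transfer the decision $v'$ to a process in $\pi(c')$, then relabel the processes in $\pi(c)\setminus\pi(c')$ as faulty while keeping all messages identical so that the resulting execution corresponds to $c'$ and is indistinguishable to that process, forcing $v'\in\mathit{val}(c')$. The paper phrases this contrapositively (assuming some $c'$ with $v'\notin\mathit{val}(c')$ and deriving a violation of $\mathit{val}$), but the construction and the indistinguishability step are the same.
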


\paragraph{Reduction.}
We fix any solvable non-trivial agreement problem $\mathcal{P}$, and any algorithm $\mathcal{A}$ which solves $\mathcal{P}$.
Let $\mathit{val}$ denote the specific validity property of $\mathcal{P}$.
Moreover, we fix the following notation:

\begin{table} [h]
    \begin{tabular}{|l|m{8.5cm}|}
    \multicolumn{1}{l}{\emph{Notation}} & \multicolumn{1}{l}{\emph{Definition \& commentary}} \\ 
    \hline
    \hline
        $c_0 \in \mathcal{I}_n$ & Any input configuration (of $\mathcal{P}$) according to which all processes are correct ($\pi(c_0) = \Pi$).
         \\ \hline
        $\mathcal{E}_0 \in \mathit{execs}(\mathcal{A})$ & The infinite execution of $\mathcal{A}$ such that $\mathsf{input\_conf}(\mathcal{E}_0) = c_0$. \\ \hline
        $v_0' \in \mathcal{V}_O$ & The value decided in $\mathcal{E}_0$.
        Note that such a value exists as $\mathcal{A}$ satisfies \emph{Termination} and \emph{Agreement}. \\ \hline
        $c_1^* \in \mathcal{I}$ & Any input configuration (of $\mathcal{P}$) such that $v_0' \notin \mathit{val}(c_1^*)$.
        Note that such an input configuration exists as $\mathcal{P}$ is non-trivial. \\ \hline
        $c_1 \in \mathcal{I}_n$ & Any input configuration (of $\mathcal{P}$) such that (1) $c_1 \sqsupseteq c_1^*$, and (2) all processes are correct according to $c_1$ ($\pi(c_1) = \Pi$).
        Note that such an input configuration exists as the containment condition is reflexive. \\ \hline
        $\mathcal{E}_1 \in \mathit{execs}(\mathcal{A})$ & The infinite execution of $\mathcal{A}$ such that $\mathsf{input\_conf}(\mathcal{E}_1) = c_1$. \\ \hline
        $v_1' \in \mathcal{V}_O$ & The value decided in $\mathcal{E}_1$.
        Note that such a value exists as $\mathcal{A}$ satisfies \emph{Termination} and \emph{Agreement}.
        Crucially, as $c_1 \sqsupseteq c_1^*$ and $v_0' \notin \mathit{val}(c_1^*)$, \Cref{lemma:canonical_containment} proves that $v_1' \neq v_0'$. \\\hline 
    \end{tabular}
    \caption{Notation table for the reduction.}
    \label{table:notation_reduction}
\end{table}

The reduction from weak consensus to $\mathcal{P}$ is presented in \Cref{algorithm:reduction}.
Our crucial observation is that $\mathcal{A}$, the fixed algorithm solving $\mathcal{P}$, decides \emph{different} values in $\mathcal{E}_0$ and $\mathcal{E}_1$: by \Cref{lemma:canonical_containment}, the value $v_1'$ decided in $\mathcal{E}_1$ is admissible according to $c_1^*$ (as $c_1 \sqsupseteq c_1^*$), which implies that $v_1' \neq v_0'$.
We utilize the aforementioned fact to distinguish (1) the fully correct execution $\mathcal{E}_0^w$ of weak consensus where all processes propose $0$, and (2) the fully correct execution $\mathcal{E}_1^w$ of weak consensus where all processes propose $1$.
Namely, our reduction works as follows:
If a correct process $p_i$ proposes $0$ (resp., $1$) to weak consensus, $p_i$ proposes its proposal from the input configuration $c_0$ (resp., $c_1$) to the underlying algorithm $\mathcal{A}$.
Moreover, if $p_i$ decides $v_0'$ from $\mathcal{A}$, $p_i$ decides $0$ from weak consensus; otherwise, $p_i$ decides $1$ from weak consensus.
Thus, if all processes are correct and propose $0$ (resp., $1$) to weak consensus, $\mathcal{A}$ \emph{necessarily} decides $v_0'$ (resp., $v_1' \neq v_0'$), which then implies that all correct processes decide $0$ (resp., $1$) from weak consensus, thus satisfying \emph{Weak Validity}.
The correctness of our reduction is proven in \Cref{section:reduction_proof}.

\begin{algorithm}
\caption{Reduction from weak consensus to $\mathcal{P}$: Pseudocode for process $p_i$}
\footnotesize
\label{algorithm:reduction}
\begin{algorithmic} [1]
\State \textbf{Uses:}
\State \hskip2em $\mathcal{A}$, an algorithm solving the non-trivial agreement problem $\mathcal{P}$

\medskip
\State \textbf{upon} $\mathsf{propose}(b \in \{0, 1\})$: \label{line:wbc_propose}
\State \hskip2em \textbf{if} $b = 0$:
\State \hskip4em \textbf{invoke} $\mathcal{A}.\mathsf{propose}\big( \mathsf{proposal}(c_0[i]) \big)$ \label{line:wbc_propose_c1}
\State \hskip2em \textbf{else:}
\State \hskip4em \textbf{invoke} $\mathcal{A}.\mathsf{propose}\big( \mathsf{proposal}(c_1[i]) \big)$ \label{line:wbc_propose_c2}

\medskip
\State \textbf{upon} $\mathcal{A}.\mathsf{decide}(\mathit{decision} \in \mathcal{V}_O)$:
\State \hskip2em \textbf{if} $\mathit{decision} = v_0'$: \label{line:wbc_decide_v0}
\State \hskip4em \textbf{trigger} $\mathsf{decide}(0)$ \label{line:wbc_decide_0}
\State \hskip2em \textbf{else} 
\State \hskip4em \textbf{trigger} $\mathsf{decide}(1)$ \label{line:wbc_decide_1}
\end{algorithmic}
\end{algorithm}

Importantly, our reduction proves the general quadratic lower bound (\Cref{theorem:general_lower_bound}).
Indeed, if there was a sub-quadratic algorithm $\mathcal{A}$ which solves any non-trivial Byzantine agreement problem, the introduced reduction would yield a sub-quadratic weak consensus algorithm, thus contradicting the quadratic lower bound for weak consensus (proven in \Cref{section:lower_bound_weak_consensus}).

\subsection{On the Lower Bound for the Blockchain-Specific Agreement Problem}

At the heart of today's blockchain systems lies an agreement problem that requires the decided value to satisfy a globally verifiable condition.
Concretely, modern blockchain systems satisfy the following validity property:
\begin{compactitem}
    \item \emph{External Validity}~\cite{Cachin2001}: If a correct process decides a value $v'$, then $\mathsf{valid}(v') = \mathit{true}$, where $\mathsf{valid}(\cdot)$ is a globally verifiable predicate. 
\end{compactitem}
This subsection underlines that the general quadratic lower bound (\Cref{theorem:general_lower_bound}) extends to all ``reasonable'' agreement problems with \emph{External Validity}.

\emph{External Validity} emerged as the validity property of blockchain systems because stronger notions of validity have limited applicability in this setting.
For example, consider \emph{Strong Validity} which guarantees only that, if all correct processes propose the same value, that value must be decided.
Whenever correct processes do not propose the same value, \emph{Strong Validity} provides no guarantees, e.g., a value proposed by a faulty process can be decided.
In a blockchain setting, it will rarely be the case that all correct validators (i.e., processes that operate the blockchain) construct and propose an identical block with the clients' pending transactions.
Hence, the chain could be comprised of ``faulty blocks'', thus allowing faulty validators to commit invalid (e.g., incorrectly signed) transactions.
\emph{External Validity} eliminates this problem by allowing only valid blocks to be committed.

As mentioned in~\cite{validity_podc}, the formalism we use for defining validity properties (see \Cref{subsection:validity_definition}) is not suitable for expressing \emph{External Validity}.
Namely, the formalism would technically classify \emph{External Validity} as a trivial validity property since any fixed valid value is admissible according to \emph{every} input configuration.
However, in practice, the agreement problem with \emph{External Validity} does not allow for a trivial solution in the blockchain setting.
For example, the fact that some transaction $\mathit{tx}$, which is \emph{correctly signed} by some client, is valid does not mean that validators can always decide $\mathit{tx}$.
Indeed, for a validator to decide $\mathit{tx}$, it needs to first \emph{learn} about $\mathit{tx}$ (otherwise, cryptographic hardness assumptions on signatures would break).
Therefore, validators cannot decide $\mathit{tx}$ ``on their own'', which precludes a trivial solution to agreement problems with \emph{External Validity}.

Nonetheless, our quadratic lower bound applies to \emph{any} algorithm $\mathcal{A}$ which solves Byzantine agreement with \emph{External Validity} as long as the algorithm has two fully correct executions with different decisions.
Indeed, if $\mathcal{A}$ has two fully correct infinite executions $\mathcal{E}_0$ and $\mathcal{E}_1$ that decide different values, \Cref{algorithm:reduction} (see \Cref{subsection:reduction}) solves weak consensus using $\mathcal{A}$ by employing $c_0 = \mathsf{input\_conf}(\mathcal{E}_0)$ (line~\ref{line:wbc_propose_c1} of \Cref{algorithm:reduction}) and $c_1 = \mathsf{input\_conf}(\mathcal{E}_1)$ (line~\ref{line:wbc_propose_c2} of \Cref{algorithm:reduction}).
To the best of our knowledge, every known agreement algorithm with \emph{External Validity} (e.g.,~\cite{yin2019hotstuff,BKM19,CGL18,lewis2022quadratic}) has different fully correct executions in which different values are decided.
Concretely, it is ensured that, if all processes are correct and they all propose the same value, that value will be decided.\footnote{In other words, all these agreement algorithms satisfy \emph{both} \emph{External Validity} and \emph{Weak Validity}.}

\begin{corollary}
Let $\mathcal{A}$ be any algorithm that solves Byzantine agreement with \emph{External Validity}.
Moreover, let there exist two executions $\mathcal{E}_0$ and $\mathcal{E}_1$ of $\mathcal{A}$ such that (1) all processes are correct in both $\mathcal{E}_0$ and $\mathcal{E}_1$, (2) some value $v_0'$ is decided in $\mathcal{E}_0$, and (3) some value $v_1' \neq v_0'$ is decided in $\mathcal{E}_1$.
Then, $\mathcal{A}$ has at least $\frac{t^2}{32}$ message complexity.
\end{corollary}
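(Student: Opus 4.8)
The plan is to apply the reduction machinery already developed in \Cref{subsection:reduction} almost verbatim, but starting from the two hypothesized fully correct executions $\mathcal{E}_0$ and $\mathcal{E}_1$ rather than from input configurations chosen via non-triviality. First I would set $c_0 = \mathsf{input\_conf}(\mathcal{E}_0)$ and $c_1 = \mathsf{input\_conf}(\mathcal{E}_1)$; since all processes are correct in both executions, $\pi(c_0) = \pi(c_1) = \Pi$, so $c_0, c_1 \in \mathcal{I}_n$, exactly as required by lines~\ref{line:wbc_propose_c1} and~\ref{line:wbc_propose_c2} of \Cref{algorithm:reduction}. By hypothesis, the decided values are $v_0'$ in $\mathcal{E}_0$ and $v_1' \neq v_0'$ in $\mathcal{E}_1$, which is precisely the property that the non-triviality argument was used to establish in the original reduction. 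Hence all the ingredients \Cref{algorithm:reduction} needs are in place, and the construction itself incurs zero additional messages.

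Next I would argue correctness of the reduction just as in \Cref{section:reduction_proof}. \emph{Termination} and \emph{Agreement} of weak consensus follow directly from \emph{Termination} and \emph{Agreement} of $\mathcal{A}$ together with the deterministic rule on lines~\ref{line:wbc_decide_v0}--\ref{line:wbc_decide_1}: every correct process eventually obtains $\mathcal{A}$'s decision and maps it to $\{0,1\}$, and since $\mathcal{A}$'s decision is common to all correct processes, so is the weak-consensus decision. For \emph{Weak Validity}, consider the fully correct execution of weak consensus in which every process proposes $0$ (resp.\ $1$): by the reduction, every process invokes $\mathcal{A}.\mathsf{propose}(\mathsf{proposal}(c_0[i]))$ (resp.\ $\mathsf{proposal}(c_1[i])$), so the induced execution of $\mathcal{A}$ is one whose input configuration is $c_0$ (resp.\ $c_1$) and in which all processes are correct; by determinism this is $\mathcal{E}_0$ (resp.\ $\mathcal{E}_1$), so $\mathcal{A}$ decides $v_0'$ (resp.\ $v_1' \neq v_0'$), whence every correct process decides $0$ (resp.\ $1$) from weak consensus, as \emph{Weak Validity} requires.

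Finally, I would combine this with \Cref{lemma:lower_bound_omission}: the reduction transforms any algorithm $\mathcal{A}$ with the stated property into a weak consensus algorithm with the same message complexity, and since every weak consensus algorithm has message complexity at least $\frac{t^2}{32}$, so does $\mathcal{A}$. There is no real obstacle here — the corollary is essentially a repackaging of \Cref{lemma:reduction_exists} and the proof of \Cref{theorem:general_lower_bound}, with the only subtlety being the observation (already discussed in the paragraph preceding the corollary) that \emph{External Validity}'s technical triviality in our formalism is irrelevant, because the reduction never appeals to non-triviality of $\mathcal{P}$ — it only needs the existence of two fully correct executions with distinct decisions, which is exactly what the corollary hypothesizes. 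The one point to state carefully is that determinism of correct processes, together with the fact that an execution is uniquely determined by the correct/faulty partition and the faulty processes' behavior (here: no faults), forces the induced $\mathcal{A}$-executions to coincide with $\mathcal{E}_0$ and $\mathcal{E}_1$ and thus decide $v_0'$ and $v_1'$ respectively.
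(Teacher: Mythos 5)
Your proposal is correct and follows essentially the same route as the paper, which proves this corollary in the paragraph immediately preceding it: instantiate \Cref{algorithm:reduction} with $c_0 = \mathsf{input\_conf}(\mathcal{E}_0)$ and $c_1 = \mathsf{input\_conf}(\mathcal{E}_1)$, observe that determinism forces the induced fully correct executions of $\mathcal{A}$ to coincide with $\mathcal{E}_0$ and $\mathcal{E}_1$ (so non-triviality is never needed), and invoke the $\frac{t^2}{32}$ bound of \Cref{lemma:lower_bound_omission}. Nothing is missing.
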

\section{Solvability of Byzantine Agreement Problems} \label{section:solvability}

In this section, we observe that a deeper study of the containment relation (introduced in \Cref{subsection:reduction}) enables us to deduce which Byzantine agreement problems are solvable in synchrony.
Concretely, we introduce the general solvability theorem, which unifies all previous results on the synchronous solvability of Byzantine agreement problems (e.g.,~\cite{LSP82,FLM85,lynch1996distributed,dolev1983authenticated,abraham2022authenticated,fitzi2003efficient}).

\subsection{Authenticated \& Unauthenticated Algorithms}

When it comes to the solvability of Byzantine agreement problems in synchrony, authentication makes a significant difference.
For instance,~\cite{dolev1983authenticated} proved that authenticated Byzantine broadcast can tolerate any number $t < n$ of corrupted processes, whereas~\cite{LSP82} showed that unauthenticated Byzantine broadcast cannot be solved unless $n > 3t$.
We thus distinguish two types of algorithms:
\begin{compactitem}
    \item \emph{Authenticated algorithms}, which allow processes to sign their messages in a way that prevents their signature from being forged by any other process \cite{Canetti04}.

    \item \emph{Unauthenticated algorithms}, which do not provide any mechanism for signatures.
    (Note that the receiver of a message knows the identity of its sender.)
\end{compactitem}
A Byzantine agreement problem $\mathcal{P}$ is \emph{authenticated-solvable} (resp., \emph{unauthenticated-solvable}) if and only if there exists an authenticated (resp., unauthenticated) algorithm which solves $\mathcal{P}$.\footnote{Recall that the exact specification of $\mathcal{P}$ (concretely, $\mathcal{P}$'s validity property) encodes the resilience of $\mathcal{P}$.}

\paragraph{Remark about unauthenticated algorithms.}
This section assumes that unauthenticated algorithms confront the adversary that is able to simulate other processes.
In other words, we do not assume the resource-restricted paradigm \cite{Garay2020RRC}, where the adversary's capability to simulate other processes can be restricted assuming a per-process bounded rate of cryptographic puzzle-solving capability with no bound on the number of corruptions and without any setup (i.e., without any authentication mechanism)~\cite{Andrychowicz2015,Katz2014}.

\subsection{General Solvability Theorem}

Before presenting our solvability theorem, we define its key component -- the \emph{containment condition}.

\begin{definition} [Containment condition] \label{definition:reduction}
A non-trivial agreement problem $\mathcal{P}$ with some validity property $\mathit{val}$ satisfies the \emph{containment condition} ($\mathcal{CC}$, in short) if and only if there exists a Turing-computable function $\Gamma: \mathcal{I} \to \mathcal{V}_O$ such that:
\begin{equation*}
    \forall c \in \mathcal{I}: \Gamma(c) \in \bigcap\limits_{c' \in \mathit{Cnt}(c)} \mathit{val}(c').
\end{equation*}
\end{definition}
Intuitively, a non-trivial agreement problem satisfies $\mathcal{CC}$ if and only if there exists a finite procedure which, for every input configuration $c \in \mathcal{I}$, returns a value that is admissible according to \emph{all} input configurations to which $c$ reduces.

We are now ready to introduce the general solvability theorem:

\begin{theorem} [General solvability theorem] \label{theorem:solvability}
A non-trivial Byzantine agreement problem $\mathcal{P}$ is:
\begin{compactitem}
    \item authenticated-solvable if and only if $\mathcal{P}$ satisfies $\mathcal{CC}$; and 

    \item unauthenticated-solvable if and only if (1) $\mathcal{P}$ satisfies $\mathcal{CC}$, and (2) $n > 3t$.
\end{compactitem}
\end{theorem}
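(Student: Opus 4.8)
The plan is to prove the General Solvability Theorem (\Cref{theorem:solvability}) by establishing four implications: two directions for the authenticated case and two for the unauthenticated case. The forward direction of the authenticated case -- if $\mathcal{P}$ is authenticated-solvable, then $\mathcal{P}$ satisfies $\mathcal{CC}$ -- follows almost immediately from \Cref{lemma:canonical_containment}: given an algorithm $\mathcal{A}$ solving $\mathcal{P}$, define $\Gamma(c)$ to be the value decided in the (unique, since correct processes are deterministic) infinite failure-free execution $\mathcal{E}_c$ with $\mathsf{input\_conf}(\mathcal{E}_c) = c$; by \Cref{lemma:canonical_containment} this value lies in $\bigcap_{c' \in \mathit{Cnt}(c)} \mathit{val}(c')$, and it is Turing-computable by simulating $\mathcal{A}$'s failure-free run until a decision is reached (termination guarantees this halts). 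The same $\Gamma$ works for the unauthenticated forward direction, and the $n > 3t$ requirement there is the classical lower bound of~\cite{LSP82}, which we can invoke by noting that any unauthenticated solution to a non-trivial $\mathcal{P}$ yields (via \Cref{algorithm:reduction}) an unauthenticated weak consensus solution, and weak consensus with a genuine two-valued decision requirement is unsolvable unauthenticated when $n \le 3t$ -- or more directly, $\mathcal{P}$ non-trivial means the decision genuinely depends on proposals, which replays the standard $n \le 3t$ impossibility.

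The substantive direction is the converse: $\mathcal{CC}$ (plus $n > 3t$ in the unauthenticated case) implies solvability. Here I would construct an explicit algorithm. The idea is that each correct process should learn the whole input configuration $c$ -- i.e., who is correct and what they proposed -- and then locally output $\Gamma(c)$. To learn $c$, processes run a standard primitive: in the authenticated case, a signed all-to-all gossip combined with Byzantine-broadcast-style (Dolev-Strong) reliable dissemination of each process's proposal, which works for any $t < n$; in the unauthenticated case, the analogous construction using $n > 3t$ and classical unauthenticated Byzantine broadcast. The catch is that different correct processes may end up believing in different input configurations (a Byzantine process can appear correct to some and send nothing to others, or equivocate on its proposal to those who cannot verify signatures -- though in the authenticated case equivocation on proposals is preventable). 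So the reconstructed configurations $c_i$ at distinct correct processes $p_i$ need not be equal, but they satisfy a key relation: the \emph{true} input configuration $c^\star$ (the restriction to genuinely correct processes) is contained in every reconstructed $c_i$ -- that is, $c_i \sqsupseteq c^\star$ for all correct $p_i$ -- because every correct process's signed/broadcast proposal reaches everyone, and only faulty processes can be "added" by a given observer. Therefore $c^\star \in \mathit{Cnt}(c_i)$ for each correct $p_i$, so by the defining property of $\Gamma$, $\Gamma(c_i) \in \mathit{val}(c^\star)$ for every correct $p_i$ -- giving \emph{Validity}. For \emph{Agreement} we additionally need all correct processes to agree on the \emph{same} $c_i$; this is where I would run the reconstruction through a Byzantine agreement (Interactive Consistency) layer so that all correct processes commit to one common vector $\hat c$ with $\hat c \sqsupseteq c^\star$, and then all output $\Gamma(\hat c)$, which is identical across correct processes and admissible. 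Termination follows from termination of the underlying agreement/broadcast primitive plus the fact that $\Gamma$ is Turing-computable hence its evaluation halts.

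The main obstacle I anticipate is making precise the claim that the commonly-agreed reconstructed configuration $\hat c$ contains the true configuration $c^\star$, while simultaneously using only the resilience available (any $t < n$ authenticated; $n > 3t$ unauthenticated). Interactive Consistency in the unauthenticated model famously requires $n > 3t$, which is exactly our hypothesis, and its output vector has the property that the entry for every correct process equals that process's actual input -- so restricting $\hat c$ to correct processes recovers $c^\star$, giving $\hat c \sqsupseteq c^\star$. In the authenticated model, $n > t$ suffices for Interactive Consistency (via Dolev-Strong on each coordinate), and signatures additionally prevent a faulty process from being recorded with two different proposals by two correct observers -- but since we only need $\hat c \sqsupseteq c^\star$ and not anything about the faulty entries, even the weaker guarantee is enough. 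I would also need to handle the edge case where a faulty process contributes $\bot$ (appears crashed) to some observers: this simply means fewer process-proposal pairs in $\hat c$, still a valid input configuration with $n - t \le |\pi(\hat c)|$ since all $\ge n-t$ correct processes always appear, and still containing $c^\star$. Assembling these pieces -- reconstruction primitive, interactive-consistency layer, the containment observation, and the final local application of $\Gamma$ -- yields an algorithm witnessing solvability, completing the harder direction and hence the theorem.
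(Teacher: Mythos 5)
Your proposal is correct and follows essentially the same route as the paper: necessity of $\mathcal{CC}$ via \Cref{lemma:canonical_containment} (defining $\Gamma(c)$ as the decision in a canonical execution corresponding to $c$), sufficiency via a reduction to interactive consistency using the observation that the decided vector contains the true input configuration, and the $n \leq 3t$ unauthenticated impossibility via the weak-consensus reduction of \Cref{algorithm:reduction}. The only nit is that an execution corresponding to a configuration $c$ with $|\pi(c)| < n$ is by definition not failure-free, so $\Gamma(c)$ should be computed by simulating a canonical execution in which the processes outside $\pi(c)$ are faulty (e.g., silent), but this does not affect the argument.
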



To prove the general solvability theorem (\Cref{theorem:solvability}), we show the following three results:
\begin{compactitem}
    \item \emph{Necessity of $\mathcal{CC}$:} If a non-trivial Byzantine agreement problem $\mathcal{P}$ is authenticated- or unauthenticated-solvable, then $\mathcal{P}$ satisfies $\mathcal{CC}$.

    \item \emph{Sufficiency of $\mathcal{CC}$:} If a non-trivial Byzantine agreement problem $\mathcal{P}$ satisfies $\mathcal{CC}$ (resp., satisfies $\mathcal{CC}$ and $n > 3t$), then $\mathcal{P}$ is authenticated-solvable (resp., unauthenticated-solvable).

    \item \emph{Unauthenticated triviality when $n \leq 3t$:} If a Byzantine agreement problem $\mathcal{P}$ is unauthenticated-solvable with $n \leq 3t$, then $\mathcal{P}$ is trivial.
\end{compactitem}


\subsubsection{Necessity of $\mathcal{CC}$}

The necessity of $\mathcal{CC}$ for solvable non-trivial agreement problems follows directly from \Cref{lemma:canonical_containment}:

\begin{lemma} \label{lemma:necessity_cc}
If a non-trivial Byzantine agreement problem $\mathcal{P}$ is authenticated- or unauthenticated-solvable, then $\mathcal{P}$ satisfies $\mathcal{CC}$.
\end{lemma}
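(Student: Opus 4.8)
The plan is to derive the containment condition almost directly from \Cref{lemma:canonical_containment}, which already tells us that in any execution corresponding to an input configuration $c$, the decided value lies in $\bigcap_{c' \in \mathit{Cnt}(c)} \mathit{val}(c')$. First I would fix a solving algorithm $\mathcal{A}$ for $\mathcal{P}$ (authenticated or unauthenticated — it does not matter which, since we only use that $\mathcal{A}$ is a correct algorithm satisfying \emph{Termination}, \emph{Agreement}, and the validity property $\mathit{val}$). The natural candidate for the Turing-computable function $\Gamma$ is: on input $c \in \mathcal{I}$, simulate the (failure-free) execution $\mathcal{E}_c$ of $\mathcal{A}$ with $\mathsf{input\_conf}(\mathcal{E}_c) = c$ and output the value that the correct processes decide.

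The key steps, in order, are: (i) observe that for every $c \in \mathcal{I}$ there is a well-defined execution $\mathcal{E}_c$ of $\mathcal{A}$ with no faulty processes among $\pi(c)$ (the processes outside $\pi(c)$ are faulty but, say, crash at the start, or more simply we take the execution where exactly $\pi(c)$ are correct and the rest send nothing), and that by \emph{Termination} and \emph{Agreement} all correct processes in $\mathcal{E}_c$ decide a single common value $v_c'$; (ii) define $\Gamma(c) := v_c'$; (iii) invoke \Cref{lemma:canonical_containment} with $\mathcal{E} = \mathcal{E}_c$: since $c = \mathsf{input\_conf}(\mathcal{E}_c)$ and a correct process decides $v_c'$, we get $v_c' \in \bigcap_{c' \in \mathit{Cnt}(c)} \mathit{val}(c')$, which is exactly the required property; (iv) argue that $\Gamma$ is Turing-computable — this is where a little care is needed, because $\mathcal{E}_c$ is an infinite execution, but deciding happens in finitely many rounds, so $\Gamma$ need only simulate $\mathcal{A}$ round by round until some correct process decides, then halt and return that value. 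Since $\mathcal{A}$ is a deterministic algorithm and the execution $\mathcal{E}_c$ is fully determined by $c$, this simulation is effective and terminates (by \emph{Termination}), so $\Gamma$ is computable.

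The main obstacle I anticipate is purely a matter of rigor around step (iv): one must be slightly careful that the model permits simulating $\mathcal{A}$ as a Turing machine (the paper treats processes as deterministic state machines, so this is benign) and that the "no-fault" execution $\mathcal{E}_c$ is unambiguously specified even when $\pi(c) \subsetneq \Pi$, so that $\Gamma(c)$ is well-defined. Using the execution where the processes of $\pi(c)$ are correct with proposals given by $c$ and every process outside $\pi(c)$ is faulty but never sends any message handles this cleanly, and \emph{Termination} guarantees the correct processes of that execution decide, giving a canonical output. Everything else is immediate from \Cref{lemma:canonical_containment}, so the proof is short; indeed the only real content is recognizing that the "decide the value of the failure-free run" map is both well-defined (by Agreement) and admissible everywhere it needs to be (by the containment lemma).
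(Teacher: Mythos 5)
Your proposal is correct and follows essentially the same route as the paper: fix a solving algorithm $\mathcal{A}$, define $\Gamma(c)$ as the value decided in an execution corresponding to $c$, and conclude via \Cref{lemma:canonical_containment} that this value lies in $\bigcap_{c' \in \mathit{Cnt}(c)} \mathit{val}(c')$, with \emph{Termination} guaranteeing the simulation halts. Your extra care in pinning down a canonical failure-free execution and in justifying Turing-computability only makes explicit what the paper's proof leaves implicit.
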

\begin{proof}
Let $\mathcal{P}$ be any authenticated- or unauthenticated-solvable non-trivial Byzantine agreement problem.
Let $\mathit{val}$ denote the validity property of $\mathcal{P}$.
As $\mathcal{P}$ is solvable, there exists an (authenticated or unauthenticated) algorithm $\mathcal{A}$ which solves it.

Let us fix any input configuration $c \in \mathcal{I}$. 
Consider any infinite execution $\mathcal{E} \in \mathit{execs}(\mathcal{A})$ such that $c = \mathsf{input\_conf}(\mathcal{E})$.
As $\mathcal{E}$ is infinite, some correct process decides (in finitely many rounds) some value $v' \in \mathcal{V}_O$ (to satisfy \emph{Termination}).
Due to \Cref{lemma:canonical_containment}, $v' \in \bigcap\limits_{c' \in \mathit{Cnt}(c)} \mathit{val}(c')$.
Thus, $\Gamma(c)$ is defined (as $\Gamma(c) = v'$) and is Turing-computable ($\mathcal{A}$ computes it in $\mathcal{E}$).
Hence, $\mathcal{P}$ satisfies $\mathcal{CC}$.
\end{proof}

\subsubsection{Sufficiency of $\mathcal{CC}$}

Let us start by recalling interactive consistency, a specific Byzantine agreement problem.
In interactive consistency, each process proposes its value, and processes decide vectors of $n$ elements, one for each process (i.e., $\mathcal{V}_O = \mathcal{I}_n$).
Besides \emph{Termination} and \emph{Agreement}, interactive consistency requires the following validity property to hold:
\begin{compactitem}
\item \emph{IC-Validity}: Let $V$ denote the vector decided by a correct process. If a correct process $p_i$ proposed a value $v$, then $V[i] = v$. 
\end{compactitem}
Using our formalism, \emph{IC-Validity} can be expressed as $\text{\emph{IC-Validity}}(c) = \{c' \in \mathcal{I}_n \,|\, c' \sqsupseteq c\}$.
Importantly, interactive consistency is authenticated-solvable for any $n$ and any $t \in [1, n - 1]$~\cite{dolev1983authenticated}.
On the other hand, interactive consistency is unauthenticated-solvable if $n > 3t$~\cite{LSP82,FLM85}.

To prove the sufficiency of $\mathcal{CC}$, we prove that any non-trivial Byzantine agreement problem that satisfies $\mathcal{CC}$ can be reduced to interactive consistency at no resilience penalty.

\begin{lemma}
If a non-trivial Byzantine agreement problem $\mathcal{P}$ satisfies $\mathcal{CC}$ (resp., satisfies $\mathcal{CC}$ and $n > 3t$), then $\mathcal{P}$ is authenticated-solvable (resp., unauthenticated-solvable).
\end{lemma}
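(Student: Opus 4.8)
The plan is to reduce $\mathcal{P}$ to interactive consistency, exploiting the facts recalled above: interactive consistency is authenticated-solvable for every $t \in [1, n-1]$ and unauthenticated-solvable whenever $n > 3t$. Let $\mathit{val}$ denote the validity property of $\mathcal{P}$, and let $\Gamma : \mathcal{I} \to \mathcal{V}_O$ be the Turing-computable function guaranteed by $\mathcal{CC}$. The algorithm for $\mathcal{P}$ is then immediate: each correct process $p_i$ feeds its $\mathcal{P}$-proposal $v_i$ into an underlying interactive consistency instance; when that instance outputs a vector $V \in \mathcal{I}_n$, process $p_i$ decides $\Gamma(V)$ from $\mathcal{P}$. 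Since the reduction adds no messages of its own, the set of faulty processes is exactly that of the underlying instance, so no resilience is lost. \emph{Termination} of $\mathcal{P}$ follows from \emph{Termination} of interactive consistency together with the Turing-computability of $\Gamma$ (each correct process finishes computing $\Gamma(V)$ in finitely many steps); \emph{Agreement} of $\mathcal{P}$ follows because, by \emph{Agreement} of interactive consistency, all correct processes obtain the \emph{same} vector $V$ and then apply the \emph{same} deterministic function $\Gamma$.

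The crux is \emph{Validity}. I would fix any execution of the reduction and let $c = \mathsf{input\_conf}$ of its underlying interactive consistency run (equivalently, of the $\mathcal{P}$-execution), i.e., the assignment of proposals to the correct processes. By \emph{IC-Validity}, expressed in our formalism as $\text{\emph{IC-Validity}}(c) = \{ c' \in \mathcal{I}_n \mid c' \sqsupseteq c \}$, the decided vector $V$ satisfies $V \sqsupseteq c$. Concretely: $\pi(V) = \Pi \supseteq \pi(c)$ because $V \in \mathcal{I}_n$, and $V[i] = c[i]$ for every correct $p_i \in \pi(c)$ because interactive consistency places each correct process's own proposal in its coordinate — and this holds regardless of how faulty processes' coordinates of $V$ are filled in. Hence $c \in \mathit{Cnt}(V)$, and therefore, by the defining property of $\Gamma$, $\Gamma(V) \in \bigcap_{c' \in \mathit{Cnt}(V)} \mathit{val}(c') \subseteq \mathit{val}(c)$. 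Thus the value decided from $\mathcal{P}$ is always admissible with respect to the actual input configuration, so the reduction satisfies $\mathit{val}$ and solves $\mathcal{P}$.

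The resilience bookkeeping then finishes the proof: the specification of $\mathcal{P}$ already encodes its resilience $t$, so instantiating the underlying interactive consistency with an authenticated algorithm (available for all $t < n$) yields an authenticated algorithm for $\mathcal{P}$, and if additionally $n > 3t$ we may instantiate it with an unauthenticated algorithm, obtaining an unauthenticated algorithm for $\mathcal{P}$. I expect the only delicate point to be the step $V \sqsupseteq c$ — arguing that the interactive consistency output genuinely contains the true input configuration of the correct processes, so that $c \in \mathit{Cnt}(V)$ — together with the observation that Turing-computability of $\Gamma$ is precisely what lets the correct processes terminate with an actual decision; everything else is a routine black-box composition.
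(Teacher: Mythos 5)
Your proposal is correct and matches the paper's proof essentially verbatim: the same black-box reduction to interactive consistency, the same use of \emph{IC-Validity} to establish that the decided vector $V$ contains the actual input configuration $c$ (so $c \in \mathit{Cnt}(V)$ and $\Gamma(V) \in \mathit{val}(c)$ by $\mathcal{CC}$), and the same appeal to the known solvability of interactive consistency in the authenticated and unauthenticated settings. No gaps.
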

\begin{proof}
To prove the lemma, we design a reduction from $\mathcal{P}$ to interactive consistency (\Cref{algorithm:generic_algorithm}).
Our reduction is comprised of two steps:
(1) When a correct process proposes to $\mathcal{P}$ (line~\ref{line:propose}), the process forwards its proposal to the underlying interactive consistency algorithm (line~\ref{line:propose_ic}).
(2) Once a correct process decides a vector $\mathit{vec}$ of $n$ proposals from interactive consistency (line~\ref{line:decide_ic}), the process decides $\Gamma(\mathit{vec})$ from $\mathcal{P}$ (line~\ref{line:decide}).
\emph{Termination} and \emph{Agreement} of the reduction algorithm follow directly from \emph{Termination} and \emph{Agreement} of interactive consistency, respectively.
Finally, let us prove that the reduction algorithm satisfies the specific validity property $\mathit{val}$ of $\mathcal{P}$.
Consider any specific execution $\mathcal{E}$ of the reduction algorithm such that $\mathsf{input\_conf}(\mathcal{E}) = c$, for some input configuration $c \in \mathcal{I}$.
Let $\mathit{vec} \in \mathcal{I}_n$ denote the vector which a correct process decides from the underlying interactive consistency algorithm (line~\ref{line:decide_ic}).
\emph{IC-Validity} ensures that $\mathit{vec} \sqsupseteq c$ as, for every correct process $p_i$, $\mathit{vec}[i] = \mathsf{proposal}(c[i])$.
As $\mathcal{P}$ satisfies $\mathcal{CC}$, $\Gamma(\mathit{vec}) \in \mathit{val}(c)$, which proves that the reduction algorithm satisfies $\mathit{val}$.

As interactive consistency is authenticated-solvable for any $n$ and any $t \in [1, n - 1]$~\cite{dolev1983authenticated}, a non-trivial Byzantine agreement problem $\mathcal{P}$ which satisfies $\mathcal{CC}$ is authenticated-solvable.
Similarly, as interactive consistency is unauthenticated-solvable if $n > 3t$~\cite{LSP82,FLM85}, a non-trivial Byzantine agreement problem $\mathcal{P}$ which satisfies $\mathcal{CC}$ with $n > 3t$ is unauthenticated-solvable.
\end{proof}

\begin{algorithm}
\caption{Reduction from $\mathcal{P}$ to interactive consistency: Pseudocode for process $p_i$} \label{algorithm:generic_algorithm}
\footnotesize
\begin{algorithmic} [1]
\State \textbf{Uses:}
\State \hskip2em Interactive consistency, \textbf{instance} $\mathcal{IC}$

\medskip
\State \textbf{upon} $\mathsf{propose}(v \in \mathcal{V}_I)$: \label{line:propose}
\State \hskip2em \textbf{invoke} $\mathcal{IC}.\mathsf{propose}(v)$ \label{line:propose_ic}

\medskip
\State \textbf{upon} $\mathcal{IC}.\mathsf{decide}(\mathit{vec} \in \mathcal{I}_n)$: \BlueComment{processes decide input configurations with $n$ process-proposal pairs} \label{line:decide_ic}
\State \hskip2em \textbf{trigger} $\mathsf{decide}\big( \Gamma(\mathit{vec}) \big)$ \label{line:decide}
\end{algorithmic}
\end{algorithm}

\subsubsection{Unauthenticated triviality when $n \leq 3t$}

We prove that any agreement problem that is unauthenticated-solvable with $n \leq 3t$ is trivial by contradiction.
Namely, if there existed a non-trivial agreement problem $\mathcal{P}$ that is unauthenticated-solvable with $n \leq 3t$, the reduction from weak consensus to $\mathcal{P}$ presented in \Cref{algorithm:reduction} would yield an unauthenticated weak consensus algorithm with $n \leq 3t$, which is known to be impossible~\cite{FLM85}.

\begin{lemma} \label{lemma:triviality}
If a Byzantine agreement problem $\mathcal{P}$ is unauthenticated-solvable with $n \leq 3t$, then $\mathcal{P}$ is trivial.
\end{lemma}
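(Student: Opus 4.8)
The plan is a proof by contradiction built on the reduction of \Cref{algorithm:reduction}. Suppose, towards a contradiction, that $\mathcal{P}$ is \emph{non-trivial} and unauthenticated-solvable with $n \leq 3t$, and let $\mathcal{A}$ be an unauthenticated algorithm that solves $\mathcal{P}$. (Observe first that $n \leq 3t$ together with $n \geq 1$ forces $t \geq 1$, so $\mathcal{P}$ genuinely tolerates at least one fault and the claim is non-vacuous.)

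The key step is to instantiate \Cref{algorithm:reduction} with $\mathcal{A}$ as the underlying agreement algorithm, obtaining an algorithm $\mathcal{A}^w$ for binary weak consensus among the same $n$ processes. By \Cref{lemma:reduction_exists} — whose correctness in turn relies on \Cref{lemma:canonical_containment} to guarantee that the decisions $v_0'$ and $v_1'$ of the two fully correct executions $\mathcal{E}_0$ and $\mathcal{E}_1$ differ — the algorithm $\mathcal{A}^w$ satisfies \emph{Termination}, \emph{Agreement}, and \emph{Weak Validity}, and it tolerates the same $t < n$ faults as $\mathcal{A}$. Crucially, \Cref{algorithm:reduction} is a purely local wrapper around $\mathcal{A}$: each correct process merely forwards a fixed proposal to $\mathcal{A}$ and relabels $\mathcal{A}$'s decision, sending no messages of its own and, in particular, introducing no signatures. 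Hence $\mathcal{A}^w$ is an \emph{unauthenticated} weak consensus algorithm operating with $n \leq 3t$.

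This contradicts the classical impossibility result: no unauthenticated algorithm solves binary Byzantine consensus with \emph{Weak Validity} when $n \leq 3t$~\cite{FLM85}. Therefore the assumption that $\mathcal{P}$ is non-trivial is untenable, and any Byzantine agreement problem that is unauthenticated-solvable with $n \leq 3t$ must be trivial.

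There is no genuinely hard step here — the lemma is essentially a corollary of \Cref{lemma:reduction_exists} combined with \cite{FLM85}. The only point I would state explicitly is that the reduction preserves both the absence of authentication and the resilience pair $(n, t)$; both facts are immediate from inspecting \Cref{algorithm:reduction}, since it uses no cryptographic primitives and no communication beyond that of $\mathcal{A}$. The only thing worth sanity-checking is that the impossibility of \cite{FLM85} is indeed stated for the weakest validity notion (\emph{Weak Validity}), which is precisely what the wrapped algorithm $\mathcal{A}^w$ provides.
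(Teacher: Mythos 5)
Your proof is correct and follows exactly the paper's argument: assume $\mathcal{P}$ is non-trivial, apply the zero-message reduction of \Cref{algorithm:reduction} to obtain an unauthenticated weak consensus algorithm with $n \leq 3t$, and contradict the impossibility of~\cite{FLM85}. The explicit remarks that the wrapper preserves the unauthenticated setting and the resilience pair are the same observations the paper leaves implicit.
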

\begin{proof}
By contradiction, let $\mathcal{P}$ be non-trivial.
As weak consensus can be reduced to any solvable non-trivial agreement problem at no resilience penalty (see \Cref{algorithm:reduction}), weak consensus is unauthenticated-solvable with $n \leq 3t$.
This is a contradiction with the fact that weak consensus is unauthenticated-solvable only if $n > 3t$~\cite{FLM85}.
\end{proof}


\subsection{General Solvability Theorem: Application to Strong Consensus}

Here, we show how the general solvability theorem (\Cref{theorem:solvability}) can be applied with the example of strong consensus.
(Recall that strong consensus satisfies \emph{Strong Validity}: if all correct processes propose the same value, that value must be decided.)
Namely, it is known that strong consensus is authenticated-solvable only if $n > 2t$~\cite{abraham2022authenticated}.
The general solvability theorem enables us to obtain another proof of this claim.

\begin{theorem} [Proven in~\cite{abraham2022authenticated}]
Strong consensus is authenticated-solvable only if $n > 2t$.
\end{theorem}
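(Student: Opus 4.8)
The plan is to apply the general solvability theorem (\Cref{theorem:solvability}) contrapositively: I will show that if $n \leq 2t$, then strong consensus fails the containment condition $\mathcal{CC}$, and hence (being non-trivial) cannot be authenticated-solvable. So assume $n \leq 2t$ and write $\mathit{val} = \text{\emph{Strong Validity}}$. First I would recall the explicit form of $\mathit{val}$ in the input-configuration formalism: for an input configuration $c$, $\mathit{val}(c)$ consists of all decisions $v'$ such that $v'$ is the common proposal whenever all correct processes agree on it, and $\mathit{val}(c) = \mathcal{V}_O$ otherwise (when proposals disagree). Since $\mathcal{V}_I = \mathcal{V}_O$ contains at least two values (else the problem is trivial), fix two distinct values $0, 1 \in \mathcal{V}_I$.

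The core of the argument is a partition/indistinguishability-style construction at the level of input configurations. Because $n \leq 2t$, I can split $\Pi$ into two sets $P_0$ and $P_1$ each of size $\leq t$ (so each could be the entire correct set). Let $c_0 \in \mathcal{I}_n$ be the all-$0$ configuration and $c_1 \in \mathcal{I}_n$ the all-$1$ configuration. Consider the configuration $c^\star$ in which exactly the processes in $P_0$ are correct and all propose $0$; then $\pi(c^\star) = P_0$, $|P_0| \geq n - t$, and $c^\star \sqsubseteq c_0$, i.e. $c_0 \in \mathit{Cnt}(c_0) \ni c^\star$ — more precisely $c^\star \in \mathit{Cnt}(c_0)$. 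Symmetrically, letting $c^{\star\star}$ be the configuration where $P_0$ is correct but everyone proposes $1$ gives $c^{\star\star} \in \mathit{Cnt}(c_1)$ with $\pi(c^{\star\star}) = P_0$. Now I would exhibit a single "ambiguous" configuration $c_{\mathrm{mix}}$ with $\pi(c_{\mathrm{mix}}) = \Pi$ in which the processes of $P_0$ propose $0$ and the processes of $P_1$ propose $1$; since $|P_0| \le t$ and $|P_1| \le t$, both $c^\star$ (restrict to $P_0$, keep proposals $0$) and $c^{\star\star}$-analogue (restrict to $P_1$, keep proposals $1$) lie in $\mathit{Cnt}(c_{\mathrm{mix}})$. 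By \emph{Strong Validity}, $\mathit{val}(c^\star) = \{0\}$ (all correct processes in $P_0$ propose $0$) and $\mathit{val}$ applied to the $P_1$-restriction equals $\{1\}$. Hence $\bigcap_{c' \in \mathit{Cnt}(c_{\mathrm{mix}})} \mathit{val}(c') \subseteq \{0\} \cap \{1\} = \emptyset$, so no function $\Gamma$ can satisfy $\Gamma(c_{\mathrm{mix}}) \in \bigcap_{c' \in \mathit{Cnt}(c_{\mathrm{mix}})} \mathit{val}(c')$, i.e. $\mathcal{CC}$ fails. By the necessity direction of \Cref{theorem:solvability} (equivalently \Cref{lemma:necessity_cc}), strong consensus is not authenticated-solvable when $n \leq 2t$, which is the contrapositive of the claim.

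The main obstacle I anticipate is bookkeeping the containment relation correctly: $\mathit{Cnt}(c)$ is defined via $c \sqsupseteq c'$, meaning $c$ \emph{contains} $c'$ (larger correct set, same proposals on the smaller set), so I must be careful that $c^\star$, obtained by \emph{removing} processes $P_1$ from $c_{\mathrm{mix}}$ and keeping the proposals of $P_0$, indeed satisfies $c_{\mathrm{mix}} \sqsupseteq c^\star$ — this is exactly the definition, so it holds, but the direction is easy to invert by accident. A secondary subtlety is that the restricted configurations $c^\star$ and the $P_1$-restriction must be legal input configurations, i.e. have between $n-t$ and $n$ process-proposal pairs; this is where $|P_0|, |P_1| \geq n - t$ is used, which follows from $|P_0| + |P_1| = n$ together with $|P_0|, |P_1| \leq t$ and $n \leq 2t$. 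Once these two points are pinned down, the emptiness of the intersection — and thus the impossibility — is immediate, so there is essentially no hard analytic step; the whole proof is a short combinatorial consequence of \Cref{theorem:solvability} plus the explicit shape of \emph{Strong Validity}.
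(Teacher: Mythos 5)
Your proposal is correct and follows essentially the same route as the paper: exhibit a mixed input configuration whose containment set includes two restricted configurations (one all-$0$, one all-$1$) with disjoint admissible sets under \emph{Strong Validity}, conclude that $\mathcal{CC}$ fails, and invoke the necessity direction of the general solvability theorem. The only cosmetic difference is that the paper assumes $n = 2t$ without loss of generality and partitions into the first and last $t$ processes, whereas you handle general $n \leq 2t$ directly by checking $|P_0|, |P_1| \geq n - t$; the substance is identical.
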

\begin{proof}
To prove the theorem, we show that strong consensus does not satisfy $\mathcal{CC}$ with $n \leq 2t$.
Without loss of generality, let $n = 2t$ and let $\mathcal{V}_I = \mathcal{V}_O = \{0, 1\}$.
Consider the input configuration $c \in \mathcal{I}_n$ such that (1) for every $i \in [1, t]$, $\mathsf{proposal}(c[i]) = 0$, and (2) for every $i \in [t + 1, n]$, $\mathsf{proposal}(c[i]) = 1$.
That is, the proposal of the first $t$ processes is $0$, whereas the proposal of the other processes is $1$.
Note that both $0$ and $1$ are admissible according to $c$.
Importantly, $c$ contains $c_0 \in \mathcal{I}_t$, where $\pi(c_0) = \{p_1, p_2, ..., p_t\}$ and $\mathsf{proposal}(c_0[i]) = 0$, for every $i \in [1, t]$.
Similarly, $c$ contains $c_1 \in \mathcal{I}_t$, where $\pi(c_0) = \{p_{t+1}, p_{t + 2}, ..., p_n\}$ and $\mathsf{proposal}(c_0[i]) = 1$, for every $i \in [t + 1, n]$.
According to $c_0$ (resp., $c_1$), only $0$ (resp., $1$) is admissible.
Hence, strong consensus with $n \leq 2t$ does not satisfy $\mathcal{CC}$ as $c$ contains two input configurations (namely, $c_0$ and $c_1$) which do not have a common admissible value.
\end{proof}
\section{Related Work}
\label{section:related_work}

\paragraph{Reductions and equivalences between Byzantine agreement problems.}
Interactive consistency can be reduced to $n$ (parallel) instances of Byzantine broadcast~\cite{Nayak2020a}.
In the honest-majority setting ($t < \frac{n}{2}$), Byzantine broadcast and strong consensus are computationally equivalent~\cite{lynch1996distributed,AW04}.
Moreover, Byzantine broadcast can be reduced to strong consensus with only $O(n)$ additional exchanged messages~\cite{lynch1996distributed,AW04}.
Furthermore, it is known that weak consensus is reducible to strong consensus (and, thus, to Byzantine broadcast)~\cite{lynch1996distributed,AW04}.

\paragraph{Deterministic Byzantine agreement in synchrony.}
In their seminal paper,~\cite{dolev1985bounds} established a quadratic lower bound on message complexity of deterministic Byzantine broadcast (and, consequently, strong consensus). It is shown that, in the authenticated setting (with idealized digital signatures \cite{Canetti04}), deterministic Byzantine broadcast algorithms must exchange $\Omega(nt)$ signatures and $\Omega(n + t^2)$ messages.
Similarly, their proof shows that, in the unauthenticated setting, there exists an execution with $\Omega(nt)$ exchanged messages. 
The $\Omega(nt)$ bound on exchanged signatures is proven to be tight when $t < \frac{n}{2} - O(1)$ and $t \in \Theta(n)$~\cite{Momose2021}.
Additionally,~\cite{berman1992bit} proved that the $\Omega(nt)$ bound on message complexity in the unauthenticated setting is tight when $t \in \Theta(n)$.
The $\Omega(n + t^2)$ bound on message complexity in the authenticated setting has recently been proven to be tight~\cite{Chlebus23}.
A quadratic lower bound on the message complexity of binary crusader broadcast, a problem in which disagreements are sometimes allowed, has also been shown in~\cite{AbrahamStern22}.
Lower bounds on other relevant metrics, such as resilience, network connectivity, or latency, have also been established~\cite{FLM85,dolev1983authenticated,dolev2013early}.

By employing threshold signatures~\cite{Shoup00}, which extend beyond the idealized authenticated model, the word complexity of $O(n(f+1))$, where $f \leq t < \frac{n}{2}$ represents the actual number of failures and a word contains a constant number of values and signatures, can be achieved for Byzantine agreement with \emph{External Validity}~\cite{spiegelman2020search} and Byzantine broadcast~\cite{cohen2023make,strong} by utilizing the algorithm of~\cite{Momose2021}. 
Additionally, an amortized cost of $O(n)$ is attainable in multi-shot Byzantine broadcast~\cite{wan2023amortized}. 
Amortization is similarly possible with long inputs~\cite{Chen2021,Nayak2020a}.
In the dishonest-majority setting (with $t \geq \frac{n}{2}$), the most efficient broadcast constructions are based on the deterministic broadcast protocol of~\cite{dolev1985bounds} with a cubic message complexity.  

\paragraph{Randomized Byzantine agreement in synchrony.}
 Even with randomization, no Byzantine broadcast algorithm can achieve sub-quadratic expected message complexity against a strongly rushing adaptive adversary equipped with after-the-fact message removal capabilities~\cite{Abraham2019c}. 
However, designing randomized synchronous Byzantine agreement algorithms with sub-quadratic expected message complexity is possible against a weaker adversary. 
In certain models, such as those with a static adversary~\cite{Boyle2021,King2011a} or with (only) private channels~\cite{King2011}, algorithms with a sub-linear number of messages (or bits) sent per correct process can be designed~\cite{Gelles,Gelles23,King2011a,King2009,King2011,Boyle2021,Boyle2018b}.

When the adversary is adaptive (without after-the-fact message removal capabilities) and computationally bounded, there exist Byzantine agreement algorithms~\cite{Chen2019,Abraham2019c,RambaudBootstrapping} which achieve both sub-quadratic (but unbalanced) communication and constant latency in expectation by relying on a verifiable random function (VRF)~\cite{DBLP:conf/focs/MicaliRV99}.
It has been shown that, in the idealized authenticated setting~\cite{Canetti04} (which is strictly weaker than bare or bulletin-board PKI~\cite{Canetti00,Boyle2021,RambaudBootstrapping}), in the presence of a rushing adaptive adversary, no randomized protocol can achieve a sub-quadratic expected communication complexity in the synchronous multi-cast model, where a sent message is necessarily sent to all the processes~\cite{RambaudBootstrapping}.

A VRF setup and a sub-quadratic binary strong consensus algorithm were shown to yield a $O(n \ell + n \mathit{poly}(\kappa))$ bit complexity, where $\ell$ is the proposal size and $\kappa$ is the security parameter, for solving strong consensus with long inputs~\cite{Bhangale2022}.
State-of-the-art algorithms for interactive consistency with long inputs (of size $\ell$) yield the bit complexity of $O(n^2\ell + n^2 \kappa^3)$~\cite{Bhangale2022} or $O(n^2 \ell + n^3\kappa)$~\cite{Abraham2023a}. 

In the dishonest-majority setting,~\cite{Blum2023} establishes new lower bounds on the expected message complexity for Byzantine broadcast: no (randomized) algorithm can achieve sub-quadratic message complexity with only $O(1)$ correct processes.
The algorithm of~\cite{Chan2020} achieves the bit complexity of $O(n^2\kappa^2)$ for binary Byzantine broadcast. 
~\cite{Tsimos2022} proves that an $\tilde{O}(n^2\mathit{poly}(\kappa))$ bit complexity can be achieved for binary interactive consistency.
Randomization additionally helps in circumventing the Dolev-Strong lower bound~\cite{dolev1983authenticated} which states that $t+1$ rounds are necessary in the worst case to deterministically solve Byzantine broadcast~\cite{dolev1983authenticated}.
While using randomization for circumventing the Dolev-Strong lower bound is well-established for the honest-majority setting~\cite{KatzKoo2009,abraham2019synchronous,Abraham2019c,RambaudBootstrapping}, recent findings have proven that the same approach can be utilized even in the presence of a dishonest majority~\cite{Wan2020,Wan2020a,Chan2020}.

\paragraph{Byzantine agreement in partial synchrony and asynchrony.}
The worst-case complexity of all Byzantine agreement problems in partial synchrony was studied in~\cite{validity_podc} where it was proven that any specific Byzantine agreement problem requires $\Theta(n^2)$ exchanged messages (after the network has stabilized) in the worst case.
Prior to~\cite{validity_podc}, it was shown that there exist deterministic algorithms, building on top of threshold signatures and HotStuff~\cite{YMR19}, which achieve $O(n^2)$ word complexity for strong consensus~\cite{lewis2022quadratic,civit2022byzantine}.
Recently,~\cite{everyBitCounts} proved that vector consensus, a Byzantine agreement problem in which processes agree on the proposals of $n - t$ processes, can be solved with $O(n^{2.5})$ or $O(n^2)$ words (when employing STARK proofs~\cite{Ben-Sasson_stark}).
In the randomized paradigm, there exist VRF-based sub-quadratic Byzantine agreement protocols~\cite{Chen2019,Abraham2019c,RambaudBootstrapping,Sheng22}.
Moreover, it is possible to achieve $O(n\ell + n \mathit{poly}(\kappa))$ bit complexity for strong consensus with long inputs of size $\ell$~\cite{Bhangale2022}.
Furthermore, reaching the communication complexity of $O(n\ell + n^2\kappa)$ for validated asynchronous Byzantine agreement was proven to be possible: \cite{LL0W20} and \cite{Nayak2020a} achieve the aforementioned bound by extending the VABA protocol of~\cite{abraham2019asymptotically}.
With some additional assumptions (e.g., private setup or delayed adversary), it is possible to design a sub-quadratic asynchronous Byzantine agreement algorithm~\cite{Blum2020,CKS20}.
A generic transformation proposed in~\cite{Bhangale2022} produces, on top of any asynchronous sub-quadratic Byzantine agreement algorithm, an asynchronous solution with $O(n\ell + n\mathit{poly}(\kappa))$ bit complexity.

\section{Concluding Remarks} \label{section:conclusion}

We study in this paper the necessary worst-case communication cost for all Byzantine agreement problems. We show that any (deterministic) solution to any solvable non-trivial Byzantine agreement problem exchanges $\Omega(t^2)$ messages in the worst-case.
We prove the general lower bound in two steps: 
(1) we show that weak consensus~\cite{yin2019hotstuff,lewis2022quadratic,civit2022byzantine,BKM19} requires $\Omega(t^2)$ exchanged messages even in synchrony;
(2) we design a reduction from weak consensus to any solvable non-trivial Byzantine agreement problem, thus generalizing the $\Omega(t^2)$ bound.
Interestingly, our reduction allows us to determine a general result about the synchronous solvability of Byzantine agreement, thus demarcating the entire landscape of solvable (and unsolvable) variants of the problem.

We plan on extending our results to the randomized setting.
Concretely, 
the goal is to study the cost of solving randomized Byzantine agreement against an adaptive adversary, with after-the-fact message removal capabilities~\cite{Abraham2019c} or the ability to access the internal states of all processes~\cite{DBLP:conf/soda/HuangPZ23}.
It would also be interesting to extend our work to problems which do not require agreement (e.g., approximate~\cite{AbrahamAD04,MendesH13,GhineaLW22,ghinea2023multidimensional} or $k$-set~\cite{BouzidIR16,Delporte-Gallet20,Delporte-Gallet22,lynch1996distributed} agreement).
Finally, improving the known upper bounds on the cost of solving Byzantine agreement problems constitutes another important research direction.

\bibliographystyle{acm}
\bibliography{references}

\appendix
\section{Proofs of Lemmas~\ref{lemma:lower_bound_helper} and~\ref{lemma:either_or}} \label{section:lower_bound_formal}

In this section, we formally prove \cref{lemma:lower_bound_helper,lemma:either_or}.
Recall that, in \Cref{section:lower_bound_weak_consensus}, we fix a weak consensus algorithm $\mathcal{A}$ which tolerates up to $t < n$ omission failures.
First, we introduce the computational model (\Cref{subsection:computational_model}).
Then, we show preliminary lemmas (\Cref{subsection:prelimariny_lemmas_appendix}) required for proving \cref{lemma:lower_bound_helper,lemma:either_or} (\Cref{subsection:proofs_appendix}).

\subsection{Computational Model} \label{subsection:computational_model}

Without loss of generality, we assume that each process sends at most one message to any specific process in a single round.
That is, $\mathcal{A}$ instructs no process $p_i \in \Pi$ to send two (or more) messages to any specific process $p_j \in \Pi$ in a single round.
Moreover, we assume that no process sends messages to itself.

\subsubsection{Messages.}
Let $\mathcal{M}$ denote the set of messages.
Each message $m \in \mathcal{M}$ encodes the following:
\begin{compactitem}
    \item the sender of $m$ (denoted by $m.\mathsf{sender} \in \Pi$); and

    \item the receiver of $m$ (denoted by $m.\mathsf{receiver} \in \Pi$); and

    \item the round (denoted by $m.\mathsf{round} \in \mathbb{N}$).
\end{compactitem}
Due to our assumption that only one message is sent in a round from any specific process to any other specific process, no message $m$ is sent more than once in any execution of $\mathcal{A}$.

\subsubsection{States.}
Let $\mathcal{S}$ denote the set of states.
Each state $s \in \mathcal{S}$ encodes the following:
\begin{compactitem}
    \item the process associated with $s$ (denoted by $s.\mathsf{process} \in \Pi$); and

    \item the round associated with $s$ (denoted by $s.\mathsf{round} \in \mathbb{N}$); and

    \item the proposal-bit associated with $s$ (denoted by $s.\mathsf{proposal} \in \{0, 1\}$); and

    \item the decision-bit associated with $s$ (denoted by $s.\mathsf{decision} \in \{\bot, 0, 1 \}$).
\end{compactitem} 
Intuitively, a state $s \in \mathcal{S}$, where (1) $s.\mathsf{process} = p_i$, (2) $s.\mathsf{round} = k$, (3) $s.\mathsf{proposal} = b$, and (4) $s.\mathsf{decision} = b'$, denotes the state of process $p_i$ at the start of round $k$ with $p_i$'s proposal being $b$ and $p_i$'s decision being $b'$ (if $b' = \bot$, $p_i$ has not yet decided by the start of round $k$).

For each process $p_i$, there are two \emph{initial states} $\pstatezero{i} \in \mathcal{S}$ and $\pstateone{i} \in \mathcal{S}$ associated with $p_i$ such that (1) $\pstatezero{i}.\mathsf{process} = \pstateone{i}.\mathsf{process} = p_i$, (2) $\pstatezero{i}.\mathsf{proposal} = 0$, and (3) $\pstateone{i}.\mathsf{proposal} = 1$.
Each process $p_i$ starts round $1$ in state $\pstatezero{i}$ or state $\pstateone{i}$.

\subsubsection{State-Transition Function.}
Algorithm $\mathcal{A}$ maps (1) the state of a process at the start of a round, and (2) messages the process received in the round into (a) a new state of the process at the start of the next round, and (b) messages the process sends in the next round.
Formally, given (1) a state $s \in \mathcal{S}$, and (2) a set of messages $M^R \subsetneq \mathcal{M}$ such that, for every message $m \in M^R$, $m.\mathsf{receiver} = s.\mathsf{process}$ and $m.\mathsf{round} = s.\mathsf{round}$, $\mathcal{A}(s, M^R) = (s', M^S)$, where 
\begin{compactitem}
    \item $s' \in \mathcal{S}$ is a state such that:
    \begin{compactitem}
        \item $s'.\mathsf{process} = s.\mathsf{process}$, 

        \item $s'.\mathsf{round} = s.\mathsf{round} + 1$, 

        \item $s'.\mathsf{proposal} = s.\mathsf{proposal}$,

        \item if $s.\mathsf{decision} \neq \bot$, $s'.\mathsf{decision} = s.\mathsf{decision}$; and
    \end{compactitem}

    \item $M^S \subsetneq \mathcal{M}$ is a set of messages such that, for every message $m \in M^S$, the following holds:
    \begin{compactitem}
        \item $m.\mathsf{sender} = s'.\mathsf{process} = s.\mathsf{process}$,

        \item $m.\mathsf{round} = s'.\mathsf{round} = s.\mathsf{round} + 1$,

        \item $m.\mathsf{receiver} \neq m.\mathsf{sender}$,

        \item there is no message $m' \in M^S$ such that (1) $m' \neq m$, and (2) $m.\mathsf{receiver} = m'.\mathsf{receiver}$.
    \end{compactitem}
\end{compactitem}
The messages each process $p_i$ sends in the first round depend solely on $p_i$'s initial state:
\begin{compactitem}
    \item If $p_i$'s state at the start of the first round is $\pstatezero{i}$, then $\mathcal{M}_i^0$ denotes the messages $p_i$ sends in the first round.
    For every message $m \in \mathcal{M}_i^0$, the following holds: (1) $m.\mathsf{sender} = p_i$, (2) $m.\mathsf{round} = 1$, (3) $m.\mathsf{receiver} \neq p_i$, and (4) there is no message $m' \in \mathcal{M}_i^0$ such that (a) $m' \neq m$, and (b) $m'.\mathsf{receiver} = m.\mathsf{receiver}$.

    \item If $p_i$'s state at the start of the first round is $\pstateone{i}$, then $\mathcal{M}_i^1$ denotes the messages $p_i$ sends in the first round. 
    For every message $m \in \mathcal{M}_i^1$, the following holds: (1) $m.\mathsf{sender} = p_i$, (2) $m.\mathsf{round} = 1$, (3) $m.\mathsf{receiver} \neq p_i$, and (4) there is no message $m' \in \mathcal{M}_i^1$ such that (a) $m' \neq m$, and (b) $m'.\mathsf{receiver} = m.\mathsf{receiver}$.
\end{compactitem}

\subsubsection{Fragments.}
A tuple $\mathcal{FR} = \Bigl( s, M^{S}, M^{\mathit{SO}}, M^R, M^{\mathit{RO}} \Bigl)$, where $s \in \mathcal{S}$ and $M^S, M^{\mathit{SO}}, M^R, M^{\mathit{RO}} \subsetneq \mathcal{M}$, is a \emph{$k$-round fragment}, for some $k \in \mathbb{N} \cup \{+ \infty\}$, of a process $p_i$ if and only if:
\begin{compactenum}
    \item $s.\mathsf{process} = p_i$; and

    \item $s.\mathsf{round} = k$; and

    \item for every message $m \in M^S \cup M^{\mathit{SO}} \cup M^R \cup M^{\mathit{RO}}$, $m.\mathsf{round} = k$; and

    \item $M^S \cap M^{\mathit{SO}} = \emptyset$; and

    \item $M^R \cap M^{\mathit{RO}} = \emptyset$; and

    \item for every message $m \in M^S \cup M^{\mathit{SO}}$, $m.\mathsf{sender} = p_i$; and

    \item for every message $m \in M^R \cup M^{\mathit{RO}}$, $m.\mathsf{receiver} = p_i$; and

    \item there is no message $m \in M^S \cup M^{\mathit{SO}} \cup M^R \cup M^{\mathit{RO}}$ such that $m.\mathsf{sender} = m.\mathsf{receiver} = p_i$; and

    \item there are no two messages $m, m' \in M^S \cup M^{\mathit{SO}}$ such that $m.\mathsf{receiver} = m'.\mathsf{receiver}$; and

    \item there are no two messages $m, m' \in M^R \cup M^{\mathit{RO}}$ such that $m.\mathsf{sender} = m'.\mathsf{sender}$.
\end{compactenum}
Intuitively, a $k$-round fragment of a process describes what happens at a process from the perspective of an omniscient external observer in the $k$-th round.

\paragraph{Intermediate results on fragments.}
We now present a few simple results.

\begin{lemma} \label{lemma:fragment_receive_omitted}
Consider any $k$-round ($k \in \mathbb{N} \cup \{+ \infty\}$) fragment $\mathcal{FR} = \Bigl( s, M^{S}, M^{\mathit{SO}}, M^R, M^{\mathit{RO}} \Bigl)$ of any process $p_i$, and any tuple $\mathcal{FR}' = \Bigl( s, M^{S}, M^{\mathit{SO}}, M^R, X \subsetneq \mathcal{M} \Bigl)$ for which the following holds:
\begin{compactenum}[(i)]
    \item for every message $m \in X$, $m.\mathsf{round} = k$; and 
    
    \item $M^{R} \cap  X = \emptyset$; and

    \item for every $m \in X$, $m.\mathsf{receiver} = p_i$; and

    \item there is no message $m \in X$, $m.\mathsf{sender} = m.\mathsf{receiver} = p_i$; and

    \item there are no two messages $m, m' \in M^{R} \cup X$ such that $m.\mathsf{sender} = m'.\mathsf{sender}$.
\end{compactenum}
Then, $\mathcal{FR}'$ is a $k$-round fragment of $p_i$.
\end{lemma}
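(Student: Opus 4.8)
The statement is purely definitional: I need to verify that $\mathcal{FR}' = \bigl( s, M^{S}, M^{\mathit{SO}}, M^{R}, X \bigr)$ satisfies all ten clauses in the definition of a $k$-round fragment of $p_i$, given that $\mathcal{FR} = \bigl( s, M^{S}, M^{\mathit{SO}}, M^{R}, M^{\mathit{RO}} \bigr)$ does. The plan is to go through the ten conditions one at a time, observing that the first four components of $\mathcal{FR}'$ are identical to those of $\mathcal{FR}$, so every condition that mentions only $s$, $M^S$, $M^{\mathit{SO}}$, or $M^R$ transfers verbatim; only the conditions that involve the last component need the hypotheses (i)--(v).

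Concretely: conditions 1 and 2 ($s.\mathsf{process} = p_i$, $s.\mathsf{round} = k$) are about $s$ alone and hold because $\mathcal{FR}$ is a fragment. Condition 3 (every message has round $k$) splits as a statement about $M^S \cup M^{\mathit{SO}} \cup M^R$ (inherited from $\mathcal{FR}$) and a statement about $X$ (this is exactly hypothesis (i)). Condition 4 ($M^S \cap M^{\mathit{SO}} = \emptyset$) is unchanged. Condition 5 becomes $M^R \cap X = \emptyset$, which is hypothesis (ii). Condition 6 (sender is $p_i$ for $M^S \cup M^{\mathit{SO}}$) is unchanged. Condition 7 (receiver is $p_i$ for $M^R \cup X$) splits as the inherited claim for $M^R$ plus hypothesis (iii) for $X$. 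Condition 8 (no self-message) splits as the inherited claim plus hypothesis (iv). Condition 9 (no two messages in $M^S \cup M^{\mathit{SO}}$ with the same receiver) is unchanged. Condition 10 (no two messages in $M^R \cup X$ with the same sender) is hypothesis (v). Hence all ten conditions hold and $\mathcal{FR}'$ is a $k$-round fragment of $p_i$.

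There is essentially no obstacle here: the lemma is a bookkeeping step that isolates exactly which of the fragment axioms depend on the ``received'' component, so that later arguments (which swap the receive-omitted set $M^{\mathit{RO}}$ for a differently-chosen set $X$, e.g. when reassigning responsibility for a non-delivery from receiver to sender) can invoke it without re-checking the sender-side invariants. The only thing to be careful about is matching the numbering/wording of the hypotheses (i)--(v) to conditions 3, 5, 7, 8, 10 respectively, and noting that condition 3's round constraint on $X$ and the ``round $= k$'' half of the receiver/self-message checks all follow from (i) together with the fact that $s.\mathsf{round} = k$. I would write the proof as a short enumerated walk-through of the ten conditions, citing hypothesis (i)--(v) at the five places they are needed and marking the rest as ``unchanged from $\mathcal{FR}$''.
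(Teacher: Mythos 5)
Your proposal is correct and matches the paper's own proof, which likewise verifies the ten fragment conditions one by one, attributing conditions 1, 2, 4, 6, 9 to the unchanged first four components and conditions 3, 5, 7, 8, 10 to hypotheses (i)--(v). Your observation that conditions 3, 7, and 8 actually split into an inherited part (for $M^S \cup M^{\mathit{SO}} \cup M^R$) plus a new part for $X$ is a slightly more careful bookkeeping of the same argument.
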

\begin{proof}
To prove that $\mathcal{FR}'$ is a $k$-round fragment of $p_i$, we prove that all ten conditions hold for $\mathcal{FR}'$.
By the statement of the lemma, $\mathcal{FR}$ is a $k$-round fragment of $p_i$.
Conditions (1), (2), (4), (6), and (9) hold for $\mathcal{FR}'$ as the first four elements of the tuple $\mathcal{FR}'$ are identical to the first four elements of $\mathcal{FR}$.
Conditions (3), (5), (7), (8), and (10) hold due to conditions (i), (ii), (iii), (iv), and (v), respectively.
\end{proof}

\begin{lemma} \label{lemma:fragment_send_transfer}
Consider any $k$-round ($k \in \mathbb{N} \cup \{+\infty\}$) fragment $\mathcal{FR} = \Bigl( s, M^{S}, M^{\mathit{SO}}, M^R, M^{\mathit{RO}} \Bigl)$ of any process $p_i$, and any tuple $\mathcal{FR}' = \Bigl( s, X \subsetneq \mathcal{M}, Y \subsetneq \mathcal{M}, M^R, M^{\mathit{RO}}\Bigl)$ for which the following holds:
\begin{compactenum}[(i)]
    \item for every message $m \in X \cup Y$, $m.\mathsf{round} = k$; and 
    
    \item $X \cap  Y = \emptyset$; and

    \item for every message $m \in X \cup Y$, $m.\mathsf{sender} = p_i$; and

    \item there is no message $m \in X \cup Y$, $m.\mathsf{sender} = m.\mathsf{receiver} = p_i$; and

    \item there are no two messages $m, m' \in X \cup Y$ such that $m.\mathsf{receiver} = m'.\mathsf{receiver}$.
\end{compactenum}
Then, $\mathcal{FR}'$ is a $k$-round fragment of $p_i$.
\end{lemma}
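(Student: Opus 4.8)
The plan is to proceed exactly as in the proof of \Cref{lemma:fragment_receive_omitted}: verify that each of the ten defining conditions of a $k$-round fragment holds for the tuple $\mathcal{FR}' = \bigl( s, X, Y, M^R, M^{\mathit{RO}} \bigr)$, using that $\mathcal{FR}$ is already a $k$-round fragment of $p_i$ together with the hypotheses (i)--(v). This is the ``send-side'' mirror of \Cref{lemma:fragment_receive_omitted}, where the last two (receive-side) entries of the tuple are kept fixed and the first two (send-side) entries are replaced.

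First I would dispatch the conditions that concern only components of $\mathcal{FR}'$ that coincide with those of $\mathcal{FR}$. Since the first and the last two entries of $\mathcal{FR}'$ are $s$, $M^R$, and $M^{\mathit{RO}}$ --- identical to those of $\mathcal{FR}$ --- conditions (1), (2), (5), (7), and (10) hold for $\mathcal{FR}'$ immediately because they hold for $\mathcal{FR}$. Next, the conditions that concern only the new send-sets $X$ and $Y$ match the hypotheses one-to-one: condition (4) is exactly (ii), condition (6) is exactly (iii), and condition (9) is exactly (v).

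The two remaining conditions, (3) and (8), are the only ones that mix unchanged and new components, and checking them is the (mild) crux. For condition (3) I would split $X \cup Y \cup M^R \cup M^{\mathit{RO}}$ into $X \cup Y$, where every message has round $k$ by (i), and $M^R \cup M^{\mathit{RO}}$, where every message has round $k$ because $\mathcal{FR}$ satisfies condition (3). For condition (8) I would split analogously: no $m \in X \cup Y$ has $m.\mathsf{sender} = m.\mathsf{receiver} = p_i$ by (iv), and no $m \in M^R \cup M^{\mathit{RO}}$ has this property because $\mathcal{FR}$ satisfies condition (8), which in particular forbids it on the subset $M^R \cup M^{\mathit{RO}}$. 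Having verified all ten conditions, $\mathcal{FR}'$ is a $k$-round fragment of $p_i$, concluding the proof. I expect no genuine obstacle; the only point needing a moment's attention is recalling that conditions (3) and (8) range over the union of the unchanged receive-side sets and the new send-side sets, so both halves must be checked separately.
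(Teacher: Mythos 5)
Your proof is correct and follows essentially the same route as the paper's: the conditions involving only the unchanged components $s$, $M^R$, $M^{\mathit{RO}}$ are inherited from $\mathcal{FR}$, and the remaining ones follow from hypotheses (i)--(v). Your explicit splitting of conditions (3) and (8) into the new send-side part (covered by (i) and (iv)) and the unchanged receive-side part (covered by $\mathcal{FR}$ being a fragment) is in fact slightly more careful than the paper's one-line attribution of those conditions to the hypotheses alone.
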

\begin{proof}
Due to the statement of the lemma, $\mathcal{FR}$ is a $k$-round fragment of $p_i$. 
Therefore, conditions (1), (2), (5), (7), and (10) hold directly for $\mathcal{FR}'$ as the first, fourth, and fifth elements of $\mathcal{FR}'$ are identical to the first, fourth and fifth elements of $\mathcal{FR}$.
Conditions (3), (4), (6), (8), and (9) hold due to conditions (i), (ii), (iii), (iv), and (v), respectively.
\end{proof}

\subsubsection{Behaviors.}
In this subsection, we define behaviors of processes.
A tuple $\mathcal{B} = \Bigl \langle \mathcal{FR}^1 = \Bigl( s^1, M^{S(1)}, M^{\mathit{SO}(1)}, M^{\mathit{R}(1)}, M^{\mathit{RO}(1)} \Bigl), ..., \mathcal{FR}^k = \Bigl( s^k, M^{S(k)}, M^{\mathit{SO}(k)}, M^{\mathit{R}(k)}, M^{\mathit{RO}(k)} \Bigl) \Bigl \rangle$ is a \emph{$k$-round behavior}, for some $k \in \mathbb{N} \cup \{+ \infty\}$, of a process $p_i$ if and only if:
\begin{compactenum}
    \item for every $j \in [1, k]$, $\mathcal{FR}^j$ is a $j$-round fragment of $p_i$; and

    \item $s^1 = \pstatezero{i}$ or $s^1 = \pstateone{i}$; and

    \item if $s^1 = \pstatezero{i}$, then $M^{S(1)} \cup M^{\mathit{SO}(1)} = \mathcal{M}_i^0$; and

    \item if $s^1 = \pstateone{i}$, then $M^{S(1)} \cup M^{\mathit{SO}(1)} = \mathcal{M}_i^1$; and

    \item $s^1.\mathsf{proposal} = s^2.\mathsf{proposal} = ... = s^k.\mathsf{proposal}$; and

    \item if there exists $j \in [1, k]$ such that $s^j.\mathsf{decision} \neq \bot$, then there exists $j^* \in [1, j]$ such that (1) for every $j' \in [1, j^* - 1]$, $s^{j'}.\mathsf{decision} = \bot$, and (2) $s^{j^*}.\mathsf{decision} = s^{j^* + 1}.\mathsf{decision} = ... = s^k.\mathsf{decision}$; and
    
    \item for every $j \in [1, k - 1]$, $\mathcal{A}(s^j, M^{R(j)}) = (s^{j + 1}, M^{S(j+1)} \cup M^{\mathit{SO}(j+1)})$. \label{item:deterministic_transition}
\end{compactenum}
If $k = +\infty$, we say that $\mathcal{B}$ is an \emph{infinite behavior} of $p_i$.
Intuitively, a process's behavior describes the states and sets of sent and received messages (including those that are omitted) of that process.

\paragraph{Intermediate results on behaviors.}
We first introduce a few functions concerned with behaviors (see the \hyperref[algorithm:behavior_functions]{\emph{Functions}} table) before proving two intermediate results (lemmas \ref{lemma:behavior_receive_omitted} and \ref{lemma:behavior_send_transfer}).

\begin{algorithm}
\renewcommand{\thealgorithm}{}
\floatname{algorithm}{}
\caption{Functions defined on the $k$-round behavior $\mathcal{B}$ defined above}
\begin{algorithmic} [1]
\label{algorithm:behavior_functions}
\footnotesize
\State \textbf{function} $\fstate{\mathcal{B}}{j \in [1, k]}$:
\State \hskip2em \textbf{return} $s^j$ \BlueComment{returns the state at the start of round $j$}

\medskip
\State \textbf{function} $\fmsgs{\mathcal{B}}{j \in [1, k]}$:
\State \hskip2em \textbf{return} $M^{S(j)}$ \BlueComment{returns the messages (successfully) sent in round $j$}

\medskip
\State \textbf{function} $\fmsgso{\mathcal{B}}{j \in [1, k]}$:
\State \hskip2em \textbf{return} $M^{\mathit{SO}(j)}$ \BlueComment{returns the messages send-omitted in round $j$}

\medskip
\State \textbf{function} $\fmsgr{\mathcal{B}}{j \in [1, k]}$:
\State \hskip2em \textbf{return} $M^{R(j)}$ \BlueComment{returns the messages received in round $j$}

\medskip
\State \textbf{function} $\fmsgro{\mathcal{B}}{j \in [1, k]}$:
\State \hskip2em \textbf{return} $M^{\mathit{RO}(j)}$ \BlueComment{returns the messages receive-omitted in round $j$}

\medskip
\State \textbf{function} $\mathsf{all\_sent}(\mathcal{B})$:
\State \hskip2em \textbf{return} $\bigcup\limits_{j \in [1, k]} M^{\mathit{S}(j)}$ \BlueComment{returns all (successfully) sent messages}

\medskip
\State \textbf{function} $\mathsf{all\_send\_omitted}(\mathcal{B})$:
\State \hskip2em \textbf{return} $\bigcup\limits_{j \in [1, k]} M^{\mathit{SO}(j)}$ \BlueComment{returns all send-omitted messages}

\medskip
\State \textbf{function} $\mathsf{all\_receive\_omitted}(\mathcal{B})$:
\State \hskip2em \textbf{return} $\bigcup\limits_{j \in [1, k]} M^{\mathit{RO}(j)}$ \BlueComment{returns all receive-omitted messages}
\end{algorithmic}
\end{algorithm}

\begin{lemma} \label{lemma:behavior_receive_omitted}
Consider any $k$-round ($k \in \mathbb{N} \cup \{+ \infty\}$) behavior $\mathcal{B} = \Bigl \langle \mathcal{F}^1, \cdots, \mathcal{F}^k \Bigl \rangle$ of any process $p_i$, and any tuple $\mathcal{B}' = \Bigl \langle \mathcal{FR}^{1}, \cdots, \mathcal{FR}^{k}\Bigl \rangle$.
For every $j \in [1, k]$, $\mathcal{F}^j = \Bigl( s^j, M^{S(j)}, M^{\mathit{SO}(j)}, M^{\mathit{R}(j)}, M^{\mathit{RO}(j)} \Bigl)$.
Moreover, for every $j \in [1, k]$, $\mathcal{FR}^{j} = \Bigl( s^j, M^{S(j)}, M^{\mathit{SO}(j)}, M^{R(j)}, X^j \subsetneq \mathcal{M} \Bigl)$ and the following holds:
\begin{compactenum}[(i)]
    \item for every message $m \in X^j$, $m.\mathsf{round} = j$; and 

    \item $M^{R(j)} \cap  X^j = \emptyset$; and

    \item for every message $m \in X^j$, $m.\mathsf{receiver} = p_i$; and

    \item there is no message $m \in X^j$, $m.\mathsf{sender} = m.\mathsf{receiver} = p_i$; and

    \item there are no two messages $m, m' \in M^{R(j)} \cup X^j$ such that $m.\mathsf{sender} = m'.\mathsf{sender}$.
\end{compactenum}
Then, $\mathcal{B}'$ is a $k$-round behavior of $p_i$.
\end{lemma}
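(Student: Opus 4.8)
The plan is to verify, one by one, the seven defining conditions of a $k$-round behavior of $p_i$ for the tuple $\mathcal{B}'$, exploiting the fact that $\mathcal{B}$ is already a $k$-round behavior and that $\mathcal{B}'$ differs from $\mathcal{B}$ only in the receive-omitted components of its fragments.

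First I would handle condition (1) of the definition of behavior, namely that $\mathcal{FR}^j$ is a $j$-round fragment of $p_i$ for every $j \in [1, k]$. Fix such a $j$. Since $\mathcal{B}$ is a $k$-round behavior, $\mathcal{F}^j = \bigl( s^j, M^{S(j)}, M^{\mathit{SO}(j)}, M^{R(j)}, M^{\mathit{RO}(j)} \bigl)$ is a $j$-round fragment of $p_i$. The tuple $\mathcal{FR}^j = \bigl( s^j, M^{S(j)}, M^{\mathit{SO}(j)}, M^{R(j)}, X^j \bigl)$ agrees with $\mathcal{F}^j$ in its first four coordinates and replaces the last coordinate $M^{\mathit{RO}(j)}$ by $X^j$, and hypotheses (i)--(v) of the present lemma are exactly the hypotheses (i)--(v) of \Cref{lemma:fragment_receive_omitted} (with $k$ there instantiated to $j$, $\mathcal{FR}$ to $\mathcal{F}^j$, and $X$ to $X^j$). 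Hence \Cref{lemma:fragment_receive_omitted} applies and yields that $\mathcal{FR}^j$ is a $j$-round fragment of $p_i$, as needed.

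Next I would dispatch conditions (2)--(7) at once. Each of these conditions refers only to the states $s^1, \dots, s^k$ and to the message sets $M^{S(j)}, M^{\mathit{SO}(j)}, M^{R(j)}$ — and never to the receive-omitted sets — and these components are literally the same in $\mathcal{B}'$ as in $\mathcal{B}$. Concretely: conditions (2)--(4) constrain $s^1$ and $M^{S(1)} \cup M^{\mathit{SO}(1)}$; conditions (5)--(6) constrain the proposal- and decision-bits of $s^1, \dots, s^k$; and condition (7), the state-transition constraint $\mathcal{A}(s^j, M^{R(j)}) = (s^{j+1}, M^{S(j+1)} \cup M^{\mathit{SO}(j+1)})$ for every $j \in [1, k-1]$, involves only $s^j$, $M^{R(j)}$, $s^{j+1}$, $M^{S(j+1)}$, and $M^{\mathit{SO}(j+1)}$. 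Since $\mathcal{B}$ is a $k$-round behavior of $p_i$, all of (2)--(7) hold for $\mathcal{B}$, and therefore hold verbatim for $\mathcal{B}'$. Combined with condition (1) established above, this shows $\mathcal{B}'$ is a $k$-round behavior of $p_i$.

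I do not expect any real obstacle here: this is a bookkeeping lemma whose entire content is that altering which received messages are ``blamed'' on receive-omission (versus genuinely delivered) leaves the process's local evolution untouched, because the state-transition function $\mathcal{A}$ is defined to consume only the set $M^{R(j)}$ of actually received messages. The one point worth stating carefully is the precise matching of hypotheses (i)--(v) to those of \Cref{lemma:fragment_receive_omitted}, and the observation that condition (7) is, by design, oblivious to receive-omitted sets.
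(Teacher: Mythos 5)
Your proposal is correct and follows essentially the same route as the paper's proof: condition (1) via \Cref{lemma:fragment_receive_omitted} applied fragment-by-fragment with hypotheses (i)--(v), and conditions (2)--(7) by observing they depend only on the states and the sent/send-omitted/received components, which are unchanged in $\mathcal{B}'$. No gaps.
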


\begin{proof}
    Since $\mathcal{B}$ is a behavior of $p_i$, $\mathcal{F}^1, \cdots, \mathcal{F}^k$ are fragments of $p_i$.
    Thus, for every $j \in [1, k]$, $\mathcal{FR}^j$ is a $j$-round fragment of $p_i$ due to conditions (i) to (v) and \Cref{lemma:fragment_receive_omitted}, which implies that condition (1) holds for $\mathcal{B}'$.
    Conditions (3) and (4) hold for $\mathcal{B}'$ as, for every $j \in [1, k]$, the second and third elements of $\mathcal{FR}^j$ are identical to the second and third elements of $\mathcal{F}^j$.
    Similarly, conditions (2), (5) and (6) hold for $\mathcal{B}'$ as, for every $j \in [1, k]$, the first element of $\mathcal{FR}^j$ is identical to the state from $\mathcal{F}^j$.
    Finally, condition (7) holds for $\mathcal{B}'$: first, for every $j \in [1, k]$, the first four elements of $\mathcal{FR}^j$ are identical to the first four elements of $\mathcal{F}^j$; second, condition (7) holds for $\mathcal{B}$.
\end{proof}

\begin{lemma} \label{lemma:behavior_send_transfer}
Consider any $k$-round ($k \in \mathbb{N} \cup \{+ \infty\}$) behavior $\mathcal{B} = \Bigl \langle \mathcal{F}^1, \cdots, \mathcal{F}^k \Bigl \rangle$ of any process $p_i$ and a tuple $\mathcal{B}' = \Bigl \langle \mathcal{FR}^{1}, \cdots, \mathcal{FR}^{k}\Bigl \rangle$.
For every $j \in [1, k]$, $\mathcal{F}^j = \Bigl( s^j, M^{S(j)}, M^{\mathit{SO}(j)}, M^{\mathit{R}(j)}, M^{\mathit{RO}(j)} \Bigl)$.
Moreover, for every $j \in [1, k]$, $\mathcal{FR}^{j} = \Bigl( s^j, X^j \subsetneq \mathcal{M}, Y^j \subsetneq \mathcal{M}, M^{\mathit{R}(j)}, M^{\mathit{RO}(j)} \Bigl)$ such that (1) $X^j \cup Y^j = M^{S(j)} \cup M^{\mathit{SO}(j)}$, and (2) $X^j \cap Y^j = \emptyset$. 
Then, $\mathcal{B}'$ is a $k$-round behavior of $p_i$.
\end{lemma}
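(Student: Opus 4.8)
The plan is to follow the template of the proof of \Cref{lemma:behavior_receive_omitted}, verifying each of the seven defining conditions of a $k$-round behavior for the tuple $\mathcal{B}'$. The only condition demanding genuine work is condition~(1): that each $\mathcal{FR}^j$ is a $j$-round fragment of $p_i$. For this I would apply \Cref{lemma:fragment_send_transfer} with $\mathcal{F}^j$ in the role of the fragment $\mathcal{FR}$ and $\mathcal{FR}^j$ in the role of the candidate tuple $\mathcal{FR}'$. Its five hypotheses are discharged directly from the fact that $\mathcal{F}^j$ is already a fragment of $p_i$ together with the two standing assumptions $X^j \cup Y^j = M^{S(j)} \cup M^{\mathit{SO}(j)}$ and $X^j \cap Y^j = \emptyset$: hypothesis~(i) follows from the round condition (condition~3) of the fragment $\mathcal{F}^j$ applied to the set $M^{S(j)} \cup M^{\mathit{SO}(j)}$; (ii) is the disjointness assumption; and (iii), (iv), (v) follow, respectively, from the sender condition (6), the no-self-message condition (8), and the distinct-receiver condition (9) of $\mathcal{F}^j$, again since $X^j \cup Y^j$ is exactly $M^{S(j)} \cup M^{\mathit{SO}(j)}$. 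Hence every $\mathcal{FR}^j$ is a $j$-round fragment of $p_i$, so condition~(1) holds for $\mathcal{B}'$.

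The remaining conditions follow from the observation that passing from $(M^{S(j)}, M^{\mathit{SO}(j)})$ to $(X^j, Y^j)$ alters neither the states $s^j$, nor the received-message sets $M^{R(j)}$, nor the union $M^{S(j)} \cup M^{\mathit{SO}(j)}$. Conditions~(2), (5), and~(6) speak only about the states $s^1, \dots, s^k$ (which initial state is taken, constancy of the proposal bit, and the ``decide once and stick'' requirement on the decision bits), and these are untouched, so they transfer verbatim from $\mathcal{B}$ to $\mathcal{B}'$. Conditions~(3) and~(4) require $X^1 \cup Y^1$ to equal $\mathcal{M}_i^0$ or $\mathcal{M}_i^1$ according to $s^1$, which holds because $X^1 \cup Y^1 = M^{S(1)} \cup M^{\mathit{SO}(1)}$ and $\mathcal{B}$ already satisfies the corresponding condition. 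Finally, for condition~(7) I would note that for every $j \in [1, k-1]$ we have $\mathcal{A}(s^j, M^{R(j)}) = (s^{j+1}, M^{S(j+1)} \cup M^{\mathit{SO}(j+1)}) = (s^{j+1}, X^{j+1} \cup Y^{j+1})$, so the deterministic-transition requirement is preserved.

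I do not anticipate a real obstacle: the lemma is a bookkeeping statement asserting that the split of a process's outgoing messages into ``successfully sent'' and ``send-omitted'' may be rearranged arbitrarily (preserving the union and the disjointness) without affecting whether the resulting tuple is a valid behavior, precisely because the state-transition function $\mathcal{A}$ is fed only the union $M^{S} \cup M^{\mathit{SO}}$ and never the individual parts. The single point that needs care is to derive the five premises of \Cref{lemma:fragment_send_transfer} from the fragment axioms of $\mathcal{F}^j$ rather than re-proving them from scratch, exactly as the proof of \Cref{lemma:behavior_receive_omitted} handles its own invocation of \Cref{lemma:fragment_receive_omitted}.
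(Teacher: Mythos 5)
Your proposal is correct and follows essentially the same route as the paper's proof: invoke \Cref{lemma:fragment_send_transfer} for condition (1), and observe that conditions (2)--(7) transfer because the states, the received-message sets, and the union $M^{S(j)} \cup M^{\mathit{SO}(j)} = X^j \cup Y^j$ are all unchanged. Your explicit discharge of the five hypotheses of \Cref{lemma:fragment_send_transfer} from the fragment axioms is slightly more detailed than the paper, which simply cites that lemma.
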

\begin{proof}
Since $\mathcal{B}$ is a behavior of $p_i$, $\mathcal{F}^1, \cdots, \mathcal{F}^k$ are fragments of $p_i$.
Thus, for every $j \in [1, k]$, $\mathcal{FR}^j$ is a $j$-round fragment of $p_i$ due to \Cref{lemma:fragment_send_transfer}, which implies that condition (1) holds for $\mathcal{B}'$.
Conditions (3) and (4) hold for $\mathcal{B}'$ as (1) both conditions hold for $\mathcal{B}$, and (2) for every $j \in [1, k]$, $X^j \cup Y^j = M^{S(j)} \cup M^{\mathit{SO}(j)}$.
Similarly, conditions (2), (5) and (6) hold for $\mathcal{B}'$ as, for every $j \in [1, k]$, the first element of $\mathcal{FR}^j$ is identical to the state from $\mathcal{F}^j$.
Finally, condition (7) holds for $\mathcal{B}'$: first, for every $j \in [1, k]$, $X^j \cup Y^j = M^{S(j)} \cup M^{\mathit{SO}(j)}$ and the first and the fourth elements of $\mathcal{FR}^j$ are identical to the first and the fourth elements of $\mathcal{F}^j$; second, condition (7) holds for $\mathcal{B}$.
\end{proof}

\subsubsection{Executions.} \label{subsubsection:executions}
A \emph{$k$-round execution} $\mathcal{E}$, for some $k \in \mathbb{N} \cup \{+ \infty\}$, is a tuple $\mathcal{E} = [\mathcal{F} \subsetneq \Pi, \mathcal{B}_1,  ..., \mathcal{B}_n]$ such that the following guarantees hold:
\begin{compactitem}
    \item \emph{Faulty processes:} $\mathcal{F}$ is a set of $|\mathcal{F}| \leq t$ processes.
    
    \item \emph{Composition:} For every $j \in [1, n]$, $\mathcal{B}_j$ is a $k$-round behavior of process $p_j$.

    \item \emph{Send-validity:} If there exists a message $m$, where $p_s = m.\mathsf{sender}$, $p_r = m.\mathsf{receiver}$ and $j = m.\mathsf{round}$, such that $m \in \mathsf{sent}(\mathcal{B}_s, j)$, then the following holds: $m \in \mathsf{received}(\mathcal{B}_r, j)$ or $m \in \mathsf{receive\_omitted}(\mathcal{B}_r, j)$.
    That is, if a message is (successfully) sent, the message is either received or receive-omitted in the same round.

    \item \emph{Receive-validity:} If there exists a message $m$, where $p_s = m.\mathsf{sender}$, $p_r = m.\mathsf{receiver}$ and $j = m.\mathsf{round}$, such that $m \in \mathsf{received}(\mathcal{B}_r, j) \cup \fmsgro{\mathcal{B}_r}{j}$, then $m \in \mathsf{sent}(\mathcal{B}_s, j)$.
    That is, if a message is received or receive-omitted, the message is (successfully) sent in the same round.

    \item \emph{Omission-validity:} If there exists a process $p_i$ and $j \in [1, k]$ such that (1) $\mathsf{send\_omitted}(\mathcal{B}_i, j) \neq \emptyset$, or (2) $\mathsf{receive\_omitted}(\mathcal{B}_i, j) \neq \emptyset$, then $p_i \in \mathcal{F}$.
    That is, if a process commits an omission fault, the process belongs to $\mathcal{F}$.
\end{compactitem}
If $k = +\infty$, we say that $\mathcal{E}$ is an \emph{infinite execution}.

\subsection{Preliminary Lemmas} \label{subsection:prelimariny_lemmas_appendix}

We start by defining the $\mathsf{swap\_omission}$ procedure (\Cref{algorithm:swap}).


\begin{algorithm}
\caption{Procedure $\mathsf{swap\_omission}$}
\label{algorithm:swap}
\begin{algorithmic} [1]
\footnotesize
\State \textbf{procedure} $\mathsf{swap\_omission}(\mathsf{Execution} \text{ } \mathcal{E} = [\mathcal{F}, \mathcal{B}_1, ..., \mathcal{B}_n], p_i \in \Pi)$:
\State \hskip2em \textbf{let} $M \gets \mathsf{all\_receive\_omitted}(\mathcal{B}_i)$ \BlueComment{$M$ contains all messages which are receive-omitted by $p_i$} \label{line:M_swap}
\State \hskip2em \textbf{let} $\mathcal{F}' \gets \emptyset$ \BlueComment{new set of faulty processes}
\State \hskip2em \textbf{for each} $p_z \in \Pi$:
\State \hskip4em $\textbf{let}$ $\mathcal{B}_z = \langle \mathcal{FR}_z^1, ..., \mathcal{FR}_z^k \rangle$, for some $k \in \mathbb{N} \cup \{+ \infty\}$
\State \hskip4em \textbf{for each} $j \in [1, k]$:
\State \hskip6em \textbf{let} $\mathit{sent}_z \gets \{m \in \mathcal{M} \,|\, m \in M \land m.\mathsf{round} = j \land m.\mathsf{sender} = p_z \}$ \BlueComment{messages from $M$ sent by $p_z$} \label{line:update_sent_z}
\State \hskip6em \textbf{let} $\mathcal{FR}_z^j = (s^j, M^{S(j)}, M^{\mathit{SO}(j)}, M^{R(j)}, M^{\mathit{RO}(j)})$ \BlueComment{old fragment}
\State \hskip6em \textbf{let} $\mathcal{FR}^j \gets (s^j, M^{S(j)} \setminus \mathit{sent}_z, M^{\mathit{SO}(j)} \cup \mathit{sent}_z, M^{R(j)}, M^{\mathit{RO}(j)} \setminus{M})$ \label{line:sent_swap} \BlueComment{new fragment}
\State \hskip6em \textbf{if} $(M^{\mathit{SO}(j)} \cup \mathit{sent}_z) \cup (M^{\mathit{RO}(j)} \setminus{M}) \neq \emptyset$: \label{line:check_swap} \BlueComment{check for an omission fault}
\State \hskip8em $\mathcal{F}' \gets \mathcal{F}' \cup \{p_z\}$ \BlueComment{$p_z$ is faulty}
\State \hskip4em \textbf{let} $\mathcal{B}_z' \gets \langle \mathcal{FR}^1, ..., \mathcal{FR}^k \rangle$
\State \hskip2em \textbf{return} $[\mathcal{F}', \mathcal{B}_1', ..., \mathcal{B}_n']$
\end{algorithmic}
\end{algorithm}

Intuitively, $\mathsf{swap\_omission}(\mathcal{E}, p_i)$, for some execution $\mathcal{E}$ and process $p_i$, constructs an execution $\mathcal{E}'$ in which receive-omission faults of process $p_i$ are ``swapped'' for send-omission faults of other processes.
The following lemma proves that, if some preconditions are true, $\mathcal{E}'$ is indeed an execution and it satisfies certain properties.

\begin{lemma}\label{lemma:switch_omission}
Let $\mathcal{E} = [\mathcal{F}, \mathcal{B}_1, ..., \mathcal{B}_n]$ be any $k$-round ($k \in \mathbb{N} \cup \{+ \infty \}$) execution.
Moreover, let $[\mathcal{F}', \mathcal{B}_1', ..., \mathcal{B}_n'] \gets \mathsf{swap\_omission}(\mathcal{E}, p_i)$, for some process $p_i$.
Let the following hold:
\begin{compactitem}
    \item $|\mathcal{F}'| \leq t$; and

    \item $\mathsf{all\_send\_omitted}(\mathcal{B}_i) = \emptyset$; and

    \item there exists a process $p_h \in \Pi \setminus{\mathcal{F}}$ such that $p_h \neq p_i$ and $\mathsf{all\_sent}(\mathcal{B}_h) \cap \mathsf{all\_receive\_omitted}(\mathcal{B}_i) = \emptyset$.
\end{compactitem}
Then, (1) $[\mathcal{F}', \mathcal{B}_1', ..., \mathcal{B}_n']$ is a $k$-round execution, (2) $\mathcal{E}$ and $[\mathcal{F}', \mathcal{B}_1', ..., \mathcal{B}_n']$ are indistinguishable to every process $p_z \in \Pi$,  (3) $p_i \notin \mathcal{F}'$, and (4) $p_h \notin \mathcal{F}'$. 
\end{lemma}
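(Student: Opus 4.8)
The plan is to verify the four assertions in order, leaning on the single structural observation that $\mathsf{swap\_omission}$ alters each behavior $\mathcal{B}_z$ in only two ways: in every round $j$ it moves the messages of $\mathit{sent}_z$ (those of $M = \mathsf{all\_receive\_omitted}(\mathcal{B}_i)$ sent by $p_z$ in round $j$) from the ``successfully sent'' component to the ``send-omitted'' component, and it deletes from the ``receive-omitted'' component every message lying in $M$; the state $s^j$ and the received set $M^{R(j)}$ are left untouched. First I would collect three elementary facts. (i)~Every message in $M$ has receiver $p_i$; hence, since no process sends to itself, $\mathit{sent}_z = \emptyset$ whenever $p_z = p_i$, and for $p_z \neq p_i$ the deletion step is vacuous (no message $p_z$ receive-omits lies in $M$). (ii)~By definition $M$ contains all of $p_i$'s receive-omitted messages, and by hypothesis $\mathsf{all\_send\_omitted}(\mathcal{B}_i) = \emptyset$, so after the procedure $p_i$ has no omitted messages at all. (iii)~By \emph{Receive-validity} of $\mathcal{E}$, every message $p_i$ receive-omits is successfully sent by its sender, so $\mathit{sent}_z \subseteq \mathsf{sent}(\mathcal{B}_z, j)$ in each round — i.e.\ the move is a genuine repartition of $\mathsf{sent}(\mathcal{B}_z,j) \cup \mathsf{send\_omitted}(\mathcal{B}_z,j)$.

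For assertion~(1) I would check the five defining conditions of a $k$-round execution. \emph{Faulty processes} ($|\mathcal{F}'| \le t$) is a hypothesis. \emph{Composition}: each $\mathcal{B}_z'$ is obtained from $\mathcal{B}_z$ by first applying \Cref{lemma:behavior_send_transfer} with $X^j = \mathsf{sent}(\mathcal{B}_z,j)\setminus\mathit{sent}_z$ and $Y^j = \mathsf{send\_omitted}(\mathcal{B}_z,j)\cup\mathit{sent}_z$ (a valid partition by fact~(iii)), and then applying \Cref{lemma:behavior_receive_omitted} with $X^j = \mathsf{receive\_omitted}(\mathcal{B}_z,j)\setminus M$, whose five side conditions hold trivially since $X^j$ is a subset of an already-valid receive-omitted set and the received set is unchanged. \emph{Send-validity}: a message successfully sent in the output was successfully sent in $\mathcal{E}$ and does not belong to $M$; by \emph{Send-validity} of $\mathcal{E}$ it was received or receive-omitted there, and, not being in $M$, it remains received or receive-omitted in the output. \emph{Receive-validity}: a message received or receive-omitted in the output was so in $\mathcal{E}$, hence (by \emph{Receive-validity} of $\mathcal{E}$) successfully sent in $\mathcal{E}$; it does not belong to $M$ — either its receiver is not $p_i$ or it was explicitly excluded — so it survives the repartition and is still successfully sent. \emph{Omission-validity} holds by construction, as line~\ref{line:check_swap} inserts $p_z$ into $\mathcal{F}'$ precisely when $p_z$ retains an omitted message in some round.

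For assertion~(2), indistinguishability to every $p_z$, it suffices that both the proposal (recorded in $s^1$) and the per-round received set $M^{R(j)}$ of every process are literally unchanged by the procedure. For assertions~(3) and~(4) I would show the test on line~\ref{line:check_swap} fails in every round: for $p_i$ this is fact~(ii) together with $\mathit{sent}_z = \emptyset$ for $p_z = p_i$ from fact~(i); for $p_h$, \emph{Omission-validity} of $\mathcal{E}$ gives that the correct process $p_h$ has no omitted messages in $\mathcal{E}$, and the hypothesis $\mathsf{all\_sent}(\mathcal{B}_h)\cap M = \emptyset$ gives $\mathit{sent}_z = \emptyset$ for $p_z = p_h$ in every round, so $p_h$ has no omitted messages in the output and is never added to $\mathcal{F}'$.

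The main obstacle is the bookkeeping underlying \emph{Send-validity} and \emph{Receive-validity}: one must track that the only messages removed from the ``sent'' side are exactly those of $M$, that these are precisely the messages $p_i$ — and no other process — stops receive-omitting, and that the case split $p_z = p_i$ versus $p_z \neq p_i$ together with fact~(iii) closes every gap; applying the two behavior-preservation lemmas in the correct order (send side first, receive side second) is a minor additional care.
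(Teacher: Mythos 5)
Your proof is correct and follows essentially the same route as the paper's: verify the five execution guarantees (using \Cref{lemma:behavior_send_transfer} and \Cref{lemma:behavior_receive_omitted} for composition and the ``$m \notin M$'' observation for send- and receive-validity), note that states and received sets are untouched for indistinguishability, and show the omitted sets of $p_i$ and $p_h$ are empty so the check at line~\ref{line:check_swap} never fires. Your explicit fact~(iii), that receive-validity of $\mathcal{E}$ makes the sent/send-omitted move a genuine repartition, is a detail the paper leaves implicit but which is indeed needed to apply \Cref{lemma:behavior_send_transfer}.
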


\begin{proof}
To prove the lemma, we first prove that all guarantees that an execution needs to satisfy (see \Cref{subsubsection:executions}) are indeed satisfied for the tuple $[\mathcal{F}', \mathcal{B}_1', ..., \mathcal{B}_n']$.
\begin{compactitem}
    \item \emph{Faulty processes:} Follows from the precondition of the lemma.

    \item \emph{Composition:} As $\mathcal{E}$ is a $k$-round execution, $\mathcal{B}_i$ is a $k$-round behavior of every process $p_i$.
    Therefore, for every process $p_i$, $\mathcal{B}'_i$ is a $k$-round behavior of $p_i$ due to \cref{lemma:behavior_receive_omitted,lemma:behavior_send_transfer}.
    
    \item \emph{Send-validity:}
    Consider any message $m$, where $p_s = m.\mathsf{sender}$, $p_r = m.\mathsf{receiver}$ and $j = m.\mathsf{round}$, such that $m$ is sent in $\mathcal{B}'_s$.
    Note that $m \in \fmsgs{\mathcal{B}_s}{j}$ (as no new sent messages are added to $\mathcal{B}_s'$ at line~\ref{line:sent_swap}).
    Therefore, $m \in \mathsf{sent}(\mathcal{B}_s', j)$ and $m \in \fmsgr{\mathcal{B}_r}{j} \cup \fmsgro{\mathcal{B}_r}{j}$ (due to the send-validity property of $\mathcal{E}$). 
    As $m$ is sent in $\mathcal{B}'_s$ (i.e., $m \in M^{S(j)}$ at process $p_s$; line~\ref{line:sent_swap}), $m \notin M$. 
    Thus, $m$ is not excluded from $M^{\mathit{RO}(j)}$ at  process $p_r$ (line~\ref{line:sent_swap}), which implies $m \in M^{R(j)} \cup (M^{\mathit{RO}(j)} \setminus{M})$ at process $p_r$.
    Thus, send-validity holds.

    \item \emph{Receive-validity:}
    Consider any message $m$, where $p_s = m.\mathsf{sender}$, $p_r = m.\mathsf{receiver}$ and $j = m.\mathsf{round}$, such that $m$ is received or receive-omitted in $\mathcal{B}_r'$.
    As $m$ is received or receive-omitted in $\mathcal{B}_r'$, $m$ is received or receive-omitted in $\mathcal{B}_r$ (as no new received or receive-omitted messages are added to $\mathcal{B}_r'$ at line~\ref{line:sent_swap}).
    Moreover, $m \in \mathsf{received}(\mathcal{B}_r, j) \cup \mathsf{receive\_omitted}(\mathcal{B}_r, j)$ (as $\mathcal{B}_r$ is $k$-round behavior of $p_r$), which then implies that $m \in \mathsf{sent}(\mathcal{B}_s, j)$ (as $\mathcal{E}$ satisfies receive-validity).
    Furthermore, $m \notin M$; otherwise, $m$ would not be received nor receive-omitted in $\mathcal{B}_r'$.
    Therefore, $m$ is not excluded from $M^{S(j)}$ at process $p_s$ (line~\ref{line:sent_swap}), which proves that receive-validity is satisfied.
    
    \item \emph{Omission-validity:} Follows directly from the check at line~\ref{line:check_swap}.
\end{compactitem}
As all guarantees are satisfied, $[\mathcal{F}', \mathcal{B}_1', ..., \mathcal{B}_n']$ is a $k$-round execution, which proves the first statement of the lemma.

Second, we prove the indistinguishability statement for every process $p_z \in \Pi$.
The $\mathsf{swap\_omission}$ procedure (\Cref{algorithm:swap}) ensures that $\mathsf{received}(\mathcal{B}_z', j) = \mathsf{received}(\mathcal{B}_z, j)$ (line~\ref{line:sent_swap}), for every round $j \in [1, k]$.
Moreover, for every round $j \in [1, k]$, $\mathsf{state}(\mathcal{B}_z', j) = \mathsf{state}(\mathcal{B}_z, j)$ and $\mathsf{sent}(\mathcal{B}_z', j) \cup \mathsf{send\_omitted}(\mathcal{B}_z', j) = \mathsf{sent}(\mathcal{B}_z, j) \cup \mathsf{send\_omitted}(\mathcal{B}_z, j)$ (line~\ref{line:sent_swap}).
Therefore, the second statement of the lemma holds.

Third, we prove that $p_i \notin \mathcal{F}'$.
As no process sends messages to itself, $\mathit{sent}_i = \emptyset$ (line~\ref{line:update_sent_z}) in every round $j \in [1, k]$.
Hence, $\mathsf{all\_send\_omitted}(\mathcal{B}'_i) = \emptyset$ (line~\ref{line:sent_swap}).
Moreover, $M = \mathsf{all\_receive\_omitted}(\mathcal{B}_i) = \bigcup\limits_{j \in [1, k]} M^{\mathit{RO}(j)}$ (line~\ref{line:M_swap}).
Therefore, $\mathsf{all\_receive\_omitted}(\mathcal{B}'_i) = \emptyset$, which implies that the third statement of the lemma holds.

Finally, we prove that $p_h \notin \mathcal{F}'$.
As $p_h \notin \mathcal{F}$, $\mathsf{all\_send\_omitted}(\mathcal{B}_h) = \mathsf{all\_receive\_omitted}(\mathcal{B}_h) = \emptyset$.
As $\mathsf{all\_receive\_omitted}(\mathcal{B}_h) = \emptyset$, $\mathsf{all\_receive\_omitted}(\mathcal{B}_h') = \emptyset$ (line~\ref{line:sent_swap}).
Moreover, $\mathit{sent}_h = \emptyset$ (line~\ref{line:update_sent_z}) in every round $j \in [1, k]$.
Therefore, $\mathsf{all\_send\_omitted}(\mathcal{B}_h') = \emptyset$ (line~\ref{line:sent_swap}).
Hence, $p_h \notin \mathcal{F}'$, which concludes the proof of the lemma.
\end{proof}

\Cref{algorithm:merge} defines the $\mathsf{merge}$ procedure which constructs a new execution from two mergeable ones; recall that mergeable executions are defined by \Cref{definition:mergeable}.
The following lemma proves that the result of the $\mathsf{merge}$ procedure (\Cref{algorithm:merge}) is an execution that is indistinguishable from the original one and satisfies some important properties.

\begin{algorithm}
\caption{Procedure $\mathsf{merge}$}
\label{algorithm:merge}
\begin{algorithmic} [1]
\footnotesize
\State \textbf{procedure} $\mathsf{merge}(\mathsf{Execution} \text{ } \mathcal{E}_0^{B(k_1)} = [B, \mathcal{B}_1, ..., \mathcal{B}_n], \mathsf{Execution} \text{ } \mathcal{E}_b^{C(k_2)} = [C, \mathcal{B}_1', ..., \mathcal{B}'_n])$:
\State \hskip2em \textbf{assert} ($\mathcal{E}_0^{B(k_1)}$ and $\mathcal{E}_b^{C(k_2)}$ are mergeable executions)
\State \hskip2em \textbf{let} $\mathit{sent} \gets \bigcup\limits_{p_i \in A \cup B} \mathcal{M}_i^0 \cup \bigcup\limits_{p_i \in C} \mathcal{M}_i^b$ \BlueComment{messages sent in the first round} 

\State \hskip2em \textbf{let} $s_i \gets 0_i$, for every process $p_i \in A \cup B$ \BlueComment{the initial state of processes in $A \cup B$}
\State \hskip2em \textbf{let} $\mathit{sent}_i \gets \mathcal{M}_i^0$, for every process $p_i \in A \cup B$ \BlueComment{the initial messages sent by processes in $A \cup B$}

\State \hskip2em \textbf{let} $s_i \gets b_i$, for every process $p_i \in C$ \BlueComment{the initial state of processes in $C$}
\State \hskip2em \textbf{let} $\mathit{sent}_i \gets \mathcal{M}_i^b$, for every process $p_i \in C$ \BlueComment{the initial messages sent by processes in $C$}

\State \hskip2em \textbf{for each} $j \geq 1$: \label{line:for_loop_merge}
\State \hskip4em \textbf{for each} $p_i \in \Pi$:
\State \hskip6em \textbf{let} $\mathit{to}_i \gets \{m \,|\, m \in \mathit{sent} \land m.\mathsf{receiver} = p_i\}$ \label{line:to_r} \BlueComment{messages sent in the round $j$ to $p_i$}
\State \hskip6em \textbf{let} $\mathit{received}_i \gets \emptyset$

\State \hskip6em \textbf{if} $p_i \in A$:
\State \hskip8em \textbf{let} $\mathcal{FR}_i^j = (s_i, \mathit{sent}_i, \emptyset, \mathit{to}_i, \emptyset)$ \label{line:fragment_a_merge}
\State \hskip8em $\mathit{received}_i \gets \mathit{to}_i$
\State \hskip6em \textbf{else:}
\State \hskip8em \textbf{if} $p_i \in B$: $\mathit{received}_i \gets \mathsf{received}(\mathcal{B}_i, j)$ \BlueComment{receive messages from $\mathcal{B}_i$} \label{line:received_b}
\State \hskip8em \textbf{else:} $\mathit{received}_i \gets \mathsf{received}(\mathcal{B}_i', j)$ \BlueComment{receive messages from $\mathcal{B}'_i$} \label{line:received_c}
\State \hskip8em \textbf{let} $\mathcal{FR}_i^j = (s_i, \mathit{sent}_i, \emptyset, \mathit{received}_i, \mathit{to}_i \setminus{\mathit{received}_i})$ \label{line:fragment_bc_merge}

\State \hskip6em $(s_i, \mathit{sent}_i) \gets \mathcal{A}(s_i, \mathit{received}_i)$ \BlueComment{compute new state and newly sent messages} \label{line:merge_state_machine}
\State \hskip4em $\mathit{sent} \gets \bigcup\limits_{p_i \in \Pi} \mathit{sent}_i$ \label{line:update_sent_2} \BlueComment{update sent messages}

\State \hskip2em \textbf{for each} $p_i \in \Pi$:
\State \hskip4em \textbf{let} $\mathcal{B}_i^* = \langle \mathcal{FR}_i^1, \mathcal{FR}_i^2, ... \rangle
$
\State \hskip2em \textbf{return} $[B \cup C, \mathcal{B}_1^*, ..., \mathcal{B}_n^*]$
\end{algorithmic}
\end{algorithm}

\begin{lemma} \label{lemma:merge_guarantees}
Let executions $\mathcal{E}_0^{B(k_1)}$ ($k_1 \in \mathbb{N}$) and $\mathcal{E}_b^{C(k_2)}$ ($b \in \{0, 1\}$, $k_2 \in \mathbb{N}$) be mergeable.
Then, (1) $\mathcal{E}^* = \mathsf{merge}(\mathcal{E}_0^{B(k_1)}, \mathcal{E}_b^{C(k_2)})$ is an infinite execution, (2) $\mathcal{E}^*$ is indistinguishable from $\mathcal{E}_0^{B(k_1)}$ (resp., $\mathcal{E}_b^{C(k_2)}$) to every process $p_B \in B$ (resp., $p_C \in C$), and (3) group $B$ (resp., $C$) is isolated from round $k_1$ (resp., $k_2$) in $\mathcal{E}^{*}$.
\end{lemma}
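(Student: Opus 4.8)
The plan is to establish all three conclusions by a single induction on the round index $j \geq 1$, maintaining the following invariant on the local variables $s_i,\mathit{sent}_i$ of \Cref{algorithm:merge} at the start of the $j$-th iteration of the loop at \cref{line:for_loop_merge}: (a) for every $p_B \in B$, $(s_{p_B},\mathit{sent}_{p_B}) = (\fstate{\mathcal{B}_{p_B}}{j},\fmsgs{\mathcal{B}_{p_B}}{j})$, where $\mathcal{B}_{p_B}$ is $p_B$'s behavior in $\mathcal{E}_0^{B(k_1)}$; (b) the symmetric statement for every $p_C \in C$ with respect to its behavior in $\mathcal{E}_b^{C(k_2)}$; and (c), relevant only in the second case of \Cref{definition:mergeable} where $b = 0$, if $j \leq R := \min(k_1,k_2)$ then $(s_i,\mathit{sent}_i)$ coincides, for every $p_i \in \Pi$, with the state and round-$j$ sends of $p_i$ in the fully-correct execution $\mathcal{E}_0$. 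The base case ($j = 1$) is immediate from the initialisations preceding the loop. Invariants (a) and (b) propagate unconditionally: since $\mathcal{E}_0^{B(k_1)}$ is a valid execution, $\mathcal{B}_{p_B}$ is a valid behaviour, so \cref{line:merge_state_machine} applied to $\mathit{received}_{p_B} \gets \fmsgr{\mathcal{B}_{p_B}}{j}$ (\cref{line:received_b}) reproduces the next state and next sends of $\mathcal{B}_{p_B}$, using additionally that an isolated group never send-omits, so $\mathit{sent}_{p_B}$ captures all of $\mathcal{B}_{p_B}$'s round-$(j{+}1)$ output. Invariant (c) propagates for $j < R$: in those rounds neither $B$ nor $C$ yet receive-omits, so in $\mathcal{E}^*$ every process receives precisely $\mathit{to}_i$, which equals what it receives in $\mathcal{E}_0$; hence all states and sends stay synchronised with $\mathcal{E}_0$.

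The crux is the claim that \Cref{algorithm:merge} produces a genuine execution, which, given that $A$ is trivially fault-free and that $B,C$ never send-omit and lie in the declared faulty set $B \cup C$ (with $|B \cup C| = t/2 \leq t$), reduces to \emph{Receive-validity}: every message the procedure feeds to $p_B$, namely each $m \in \fmsgr{\mathcal{B}_{p_B}}{j}$, must actually be sent in round $j$ of $\mathcal{E}^*$, i.e.\ $\fmsgr{\mathcal{B}_{p_B}}{j} \subseteq \mathit{to}_{p_B}$ (and symmetrically for $C$). This is exactly the argument sketched in the proof of \Cref{lemma:either_or}, which I would carry out with the following two-case split. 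If $j \geq k_1$, then by \Cref{definition:isolation} the sender $p_m$ of $m$ lies in $B$ (otherwise $p_B$ would have receive-omitted $m$ in $\mathcal{E}_0^{B(k_1)}$); invariant (a) then makes $p_m$'s round-$j$ sends in $\mathcal{E}^*$ identical to those in $\mathcal{E}_0^{B(k_1)}$, so $m$ is sent in $\mathcal{E}^*$. If instead $j < k_1$, we cannot be in the first case of \Cref{definition:mergeable} (it would force $k_1 = 1$ and hence $j < 1$), so $b = 0$ and $j < k_1 \leq R + 1$, i.e.\ $j \leq R$; since $\mathcal{E}_0^{B(k_1)}$ still agrees with $\mathcal{E}_0$ on round $j$ (isolation of $B$ has not started), $m$ is sent by $p_m$ in round $j$ of $\mathcal{E}_0$, and invariant (c) transfers this send to $\mathcal{E}^*$. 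The two cases exhaust all $j$ because $k_1 \leq R + 1$; the argument for $C$ is symmetric, using invariant (b) and $k_2 \leq R + 1$.

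With Receive-validity in hand, the rest is bookkeeping against the definitions of \Cref{subsection:computational_model}: \emph{Send-validity} holds because every sent message with receiver $p_r$ lies in $\mathit{to}_{p_r}$, which \cref{line:fragment_a_merge,line:fragment_bc_merge} split into the received and receive-omitted parts; \emph{Omission-validity} holds since every process with a nonempty omission set lies in $B \cup C$; and \emph{Composition} holds because each $\mathcal{B}_{p_B}^*$ (resp.\ $\mathcal{B}_{p_C}^*$) differs from the valid behaviour $\mathcal{B}_{p_B}$ (resp.\ from $p_C$'s behaviour in $\mathcal{E}_b^{C(k_2)}$) only in its receive-omitted sets, which, thanks to Receive-validity, still satisfy the conditions of \cref{lemma:behavior_receive_omitted} (using once more that $B$ does not send-omit in $\mathcal{E}_0^{B(k_1)}$), while each $\mathcal{B}_{p_A}^*$ is a legal behaviour as it is generated by chaining $\mathcal{A}$'s deterministic transition from the initial state $0_{p_A}$, with $\mathcal{A}$'s transition preserving the proposal and any decision. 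This gives conclusion (1). Conclusion (2) is then immediate: by invariant (a), $p_B$ has the same state in every round of $\mathcal{E}^*$ and of $\mathcal{E}_0^{B(k_1)}$ and, by \cref{line:received_b}, the same received set, and it proposes $0$ in both; symmetrically for $p_C$ with $\mathcal{E}_b^{C(k_2)}$. Conclusion (3) follows by combining conclusion (2) and invariants (a),(c) with \Cref{definition:isolation}: in $\mathcal{E}^*$, $p_B$ send-omits nothing and receive-omits a round-$j$ message exactly when its sender is outside $B$ and $j \geq k_1$ (for $j \geq k_1$ this is forced since a message from $B$ received by $p_B$ in $\mathcal{E}_0^{B(k_1)}$ is also received in $\mathcal{E}^*$; for $j < k_1$ the set of messages reaching $p_B$ equals that of $\mathcal{E}_0$ by invariant (c), so nothing is omitted).

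The step I expect to be the main obstacle is the Receive-validity claim of the second paragraph, specifically ensuring that the two cases ``$j \geq k_1$'' and ``$j < k_1 \Rightarrow j \leq R$'' really cover every round: this is precisely where the one-round slack $|k_1 - k_2| \leq 1$ in \Cref{definition:mergeable} is used, and where one must check that in $\mathcal{E}^*$ group $A$'s (and the non-isolated group's) behaviour has not yet diverged from the common fault-free prefix $\mathcal{E}_0$ by the round where the isolation boundaries of $B$ and $C$ fall. Everything else is a routine, if somewhat lengthy, verification of the fragment, behaviour, and execution conditions introduced in \Cref{subsection:computational_model}.
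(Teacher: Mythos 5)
Your proposal is correct and follows essentially the same route as the paper's own proof: verifying the five execution guarantees of \Cref{subsubsection:executions} with Receive-validity as the crux, handled by the same case split between rounds $j \geq k_1$ (where \Cref{definition:isolation} forces the sender into $B$) and rounds $j < k_1$ (where the sends still agree with the common fault-free prefix, which is exactly where the $|k_1-k_2|\leq 1$ slack of \Cref{definition:mergeable} is consumed), followed by the same derivation of indistinguishability and isolation from the matching state/received/sent sets. Your explicit inductive invariants (a)--(c) merely make precise what the paper asserts in prose, so no substantive difference remains.
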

\begin{proof}
Let $\mathcal{E}^* = [B \cup C, \mathcal{B}_1^*, ..., \mathcal{B}_n^*]$.
Let $\mathcal{E}_0^{B(k_1)} = [B, \mathcal{B}_1, ..., \mathcal{B}_n]$ and $\mathcal{E}_b^{C(k_2)} = [C, \mathcal{B}_1', ..., \mathcal{B}_n']$. 
To prove the lemma, we first prove that all guarantees from \Cref{subsubsection:executions} are satisfied by $\mathcal{E}^*$:
\begin{compactitem}
    \item \emph{Faulty processes:} As $\mathcal{F}^* = B \cup C$, $|\mathcal{F}^*| = \frac{t}{4} + \frac{t}{4} \leq t$.

    \item \emph{Composition:} For each process $p_i \in \Pi$, we construct $\mathcal{B}_i^*$ by following the valid transitions of the algorithm $\mathcal{A}$ (line~\ref{line:merge_state_machine}).
    Thus, for each process $p_i \in \Pi$, $\mathcal{B}_i^*$ is an infinite behavior of $p_i$.

    \item \emph{Send-validity:} Consider any message $m$, where $p_s = m.\mathsf{sender}$, $p_r = m.\mathsf{receiver}$ and $j = m.\mathsf{round}$, such that $m$ is sent $\mathcal{B}_s^*$.
    As $m.\mathsf{round} = j$, $m \in \mathit{sent}_s$ in the $j$-th iteration of the for loop at line~\ref{line:for_loop_merge}.
    Therefore, $m \in \mathit{to}_r$ (line~\ref{line:to_r}) in the $j$-th iteration of the for loop at line~\ref{line:for_loop_merge}.
    Hence, $m \in \mathsf{received}(\mathcal{B}_r^*, j) \cup \mathsf{receive\_omitted}(\mathcal{B}_r^*, j)$ (line~\ref{line:fragment_a_merge} or line~\ref{line:fragment_bc_merge}).

    \item \emph{Receive-validity:} Consider any message $m$, where $p_s = m.\mathsf{sender}$, $p_r = m.\mathsf{receiver}$ and $j = m.\mathsf{round}$, such that $m$ is received or receive-omitted in $\mathcal{B}^*_r$.
    We distinguish three cases:
    \begin{compactitem}
        \item Let $p_r \in A$.
        In this case, $m \in \mathit{to}_r$ (line~\ref{line:to_r}) in the $j$-th iteration of the for loop at line~\ref{line:for_loop_merge}.
        Therefore, $m \in \mathit{sent}_s$ in the $j$-th iteration of the for loop at line~\ref{line:for_loop_merge}.
        Hence, $m \in \mathsf{sent}(\mathcal{B}_s^*, j)$ (line~\ref{line:fragment_a_merge} or line~\ref{line:fragment_bc_merge}).

        \item Let $p_r \in B$.
        We further distinguish two scenarios:
        \begin{compactitem}
            \item Let $m$ be received in $\mathcal{B}_r^*$.
            In this case, $m \in \mathsf{received}(\mathcal{B}_r, j)$ (line~\ref{line:received_b}).
            As $\mathcal{E}_0^{B(k_1)}$ satisfies receive-validity, $m \in \mathsf{sent}(\mathcal{B}_s, j)$.
            If $j < k_1$, then $m \in \mathsf{sent}(\mathcal{B}_s^*, j)$ as, until (and excluding) round $k_1$, all processes send the same messages as in $\mathcal{E}_0^{B(k_1)}$.
            Otherwise, $s \in B$ (as $B$ is isolated from round $k_1$ in $\mathcal{E}_0^{B(k_1)}$), which implies that $m \in \mathsf{sent}(\mathcal{B}_s^*, j)$.

            \item Let $m$ be receive-omitted in $\mathcal{B}_r^*$.
            In this case, $m \in to_r$ (line~\ref{line:to_r}) in the $j$-th iteration of the for loop at line~\ref{line:for_loop_merge}.
            Hence, $\mathit{m} \in \mathit{sent}_s$ in the $j$-th iteration of the for loop at line~\ref{line:for_loop_merge}, which implies that $m \in \mathsf{sent}(\mathcal{B}_s^*, j)$.
        \end{compactitem}

        \item Let $p_r \in C$.
        There are two scenarios:
        \begin{compactitem}
            \item Let $m$ be received in $\mathcal{B}_r^*$.
            In this case, $m \in \mathsf{received}(\mathcal{B}_r', j)$ (line~\ref{line:received_c}).
            As $\mathcal{E}_b^{C(k_2)}$ satisfies receive-validity, $m \in \mathsf{sent}(\mathcal{B}_s', j)$.
            If $j < k_2$, then $m \in \mathsf{sent}(\mathcal{B}_s^*, j)$ as, until (and excluding) round $k_2$, all processes send the same messages as in $\mathcal{E}_b^{C(k_2)}$.
            Otherwise, $s \in C$ (as $C$ is isolated from round $k_2$ in $\mathcal{E}_b^{C(k_2)}$), which implies that $m \in \mathsf{sent}(\mathcal{B}_s^*, j)$.

            \item Let $m$ be receive-omitted in $\mathcal{B}_r^*$.
            In this case, $m \in to_r$ (line~\ref{line:to_r}) in the $j$-th iteration of the for loop at line~\ref{line:for_loop_merge}.
            Hence, $\mathit{m} \in \mathit{sent}_s$ in the $j$-th iteration of the for loop at line~\ref{line:for_loop_merge}, which implies that $m \in \mathsf{sent}(\mathcal{B}_s^*, j)$.
        \end{compactitem}
    \end{compactitem}

    \item \emph{Omission-validity:} Only processes in $B \cup C$ (potentially) receive-omit some messages.
    As $\mathcal{F}^* = B \cup C$, omission-validity is satisfied.
\end{compactitem}

Second, we prove the indistinguishability statement. Let $p_i \in B$. 
The $\mathsf{merge}$ procedure (\Cref{algorithm:merge}) ensures that $\mathsf{received}(\mathcal{B}_i^*, j) = \mathsf{received}(\mathcal{B}_i, j)$, for every round $j \geq 1$.
Moreover, for every round $j \geq 1$, $\mathsf{state}(\mathcal{B}_i^*, j) = \mathsf{state}(\mathcal{B}_i, j)$ and $\mathsf{sent}(\mathcal{B}_i^*, j) \cup \mathsf{send\_omitted}(\mathcal{B}_i^*, j) = \mathsf{sent}(\mathcal{B}_i, j) \cup \mathsf{send\_omitted}(\mathcal{B}_i, j)$.
The symmetric argument holds if $p_i \in C$.
Hence the indistinguishability statement holds.

Finally, every process $p_B \in B$ (resp., $p_C \in C$) exhibits the same behavior (except for potentially new receive-omitted messages) in $\mathcal{E}^*$ as in $\mathcal{E}_0^{B(k_1)}$ (resp., $\mathcal{E}_b^{C(k_2)}$).
Therefore, group $B$ (resp., $C$) is isolated from round $k_1$ (resp., $k_2$) in $\mathcal{E}^*$.
\end{proof}

\subsection{\Cref{lemma:lower_bound_helper,lemma:either_or}} 
\label{appendix:formal_lower_bound} 
\label{subsection:proofs_appendix}

First, we formally prove \Cref{lemma:lower_bound_helper}.

\genericisolation*
\begin{proof}
Let $\mathcal{E} = [\mathcal{F}, \mathcal{B}_1, ..., \mathcal{B}_n]$, where $\mathcal{F} = Y \cup Z$.
For every process $p_i \in Y$, we define $\mathcal{M}_{X \to p_i}$:
\begin{equation*}
    \mathcal{M}_{X \to p_i} = \{m \in \mathsf{all\_receive\_omitted}(\mathcal{B}_i) \,|\, m.\mathsf{sender} \in X\}.
\end{equation*}
For every set $Y'' \subseteq Y$, let $\mathcal{M}_{X \to Y''} = \bigcup\limits_{p \in Y''} \mathcal{M}_{X \to p}$.
As correct processes (i.e., processes from group $X$) send fewer than $\frac{t^2}{32}$ messages in $\mathcal{E}$, $|\mathcal{M}_{X \to Y}| < \frac{t^2}{32}$.
Therefore, there does not exist a set $Y^* \subseteq Y$ of $|Y^*| \geq \frac{|Y|}{2}$ processes such that, for every process $p_{Y^*} \in Y^*$, $|\mathcal{M}_{X \to p_{Y^*}}| \geq \frac{t}{2}$.
This implies that there exists a set $Y' \subseteq Y$ of $|Y'| > \frac{|Y|}{2}$ processes such that, for every process $p_{Y'} \in Y'$, $|\mathcal{M}_{X \to p_{Y'}}| < \frac{t}{2}$.

Fix any process $p_{Y'} \in Y'$.
By contradiction, suppose that $p_{Y'}$ does not decide $b_X$ in $\mathcal{E}$.
We define a set of processes $\mathcal{S}$:
\begin{equation*}
    \mathcal{S} = \{ p_s \in \Pi \,|\,\exists m \in \mathsf{all\_receive\_omitted}(\mathcal{B}_{Y'}) : m.\mathsf{sender} = p_s \}.
\end{equation*}
Note that $|\mathcal{S} \cap X| < \frac{t}{2}$ (since $|\mathcal{M}_{X \to p_{Y'}}| < \frac{t}{2}$) and $\mathcal{S} \subsetneq X \cup Z$.
Let $\mathcal{E}' = [F', \mathcal{B}_1', ..., \mathcal{B}_n'] = \mathsf{swap\_omission}(\mathcal{E}, p_{Y'})$.
By \Cref{lemma:switch_omission}, $\mathcal{E}'$ is an execution that is indistinguishable from $\mathcal{E}$ to every process.
Moreover, \Cref{lemma:switch_omission} proves that $p_{Y'} \notin \mathcal{F}'$ (since $p_{Y'}$ does not commit any send omission fault) and that there exists a process $p_X \in X \setminus \mathcal{S}$ (since $X \setminus \mathcal{S} \neq \emptyset$) such that $p_X \notin \mathcal{F}'$.
Importantly, $p_X$ decides $b_X$ in $\mathcal{E}'$.
Therefore, $\mathcal{E}'$ violates either \emph{Termination} (if $p_{Y'}$ does not decide) or \emph{Agreement} (if $p_{Y'}$ decides $1 - b_X$), which concludes the proof.
\end{proof}

Lastly, we prove \Cref{lemma:either_or}.

\either*
\begin{proof}
For $\mathcal{A}$ to satisfy \emph{Termination} and \emph{Agreement}, all processes from group $A$ decide $b_1$ (resp., $b_2$) in $\mathcal{E}_0^{B(k_1)}$ (resp., $\mathcal{E}_b^{C(k_2)}$).
Given the partition $(A \cup C, B, \emptyset)$ of $\Pi$ and the execution $\mathcal{E}_0^{B(k_1)}$, \Cref{lemma:lower_bound_helper} proves that there exists a set $B' \subseteq B$ of more than $\frac{|B|}{2}$ processes such that every process $p_{B'} \in B'$ decides $b_1$ in $\mathcal{E}_0^{B(k_1)}$.
Similarly, given the partition $(A \cup B, C, \emptyset)$ of $\Pi$ and the execution $\mathcal{E}_b^{C(k_2)}$, \Cref{lemma:lower_bound_helper} proves that there exists a set $C' \subseteq C$ of more than $\frac{|C|}{2}$ processes such that every process $p_{C'} \in C'$ decides $b_2$ in $\mathcal{E}_b^{C(k_2)}$.

Let $\mathcal{E} = \mathsf{merge}(\mathcal{E}_0^{B(k_1)}, \mathcal{E}_b^{C(k_2)})$.
By \Cref{lemma:merge_guarantees}, $\mathcal{E}$ is an infinite execution such that group $B$ (resp., $C$) is isolated from round $k_1$ (resp., $k_2$) and no process $p_B \in B$ (resp., $p_C \in C$) distinguishes $\mathcal{E}$ from $\mathcal{E}_0^{B(k_1)}$ (resp., $\mathcal{E}_b^{C(k_2)}$).
Therefore, all processes from $B'$ (resp., $C'$) decide $b_1$ (resp., $b_2$) in $\mathcal{E}$.
Let $b_A$ be the decision of processes from group $A$ in $\mathcal{E}$; such a decision must exist as $\mathcal{A}$ satisfies \emph{Termination} and \emph{Agreement} and processes from group $A$ are correct in $\mathcal{E}$.
Given the partition $(A, B, C)$ of $\Pi$ and the newly constructed execution $\mathcal{E}$, \Cref{lemma:lower_bound_helper} proves that $b_1 = b_A$.
Similarly, given the partition $(A, C, B)$ of $\Pi$ and the execution $\mathcal{E}$, \Cref{lemma:lower_bound_helper} shows that $b_2 = b_A$.
As $b_1 = b_A$ and $b_A = b_2$, $b_1 = b_2$, which concludes the proof.
\end{proof}

\section{Proof of Lemma~\ref{lemma:canonical_containment}} \label{section:containment_proof}

This section provides a proof of \Cref{lemma:canonical_containment} (introduced in \Cref{subsection:reduction}).

\containment*
\begin{proof}
We prove the lemma by contradiction.
Hence, let $v' \in \mathcal{V}_O$ be decided in $\mathcal{E}$, and let $v' \notin \bigcap\limits_{c' \in \mathit{Cnt}(c)} \mathit{val}(c')$.
Therefore, there exists an input configuration $c' \in \mathcal{I}$ such that (1) $c \sqsupseteq c'$, and (2) $v' \notin \mathit{val}(c')$.
Observe that $\pi(c') \subsetneq \pi(c)$.
Let $p_i$ be any process such that $p_i \in \pi(c')$; note that such a process exists as $|\pi(c')| \geq n - t$ and $n - t > 0$.
If $\mathcal{E}$ is an infinite execution, then $\mathcal{E}_i \gets \mathcal{E}$; otherwise, let $\mathcal{E}_i$ be any infinite continuation of $\mathcal{E}$ such that $\mathit{Correct}_{\mathcal{A}}(\mathcal{E}_i) = \mathit{Correct}_{\mathcal{A}}(\mathcal{E})$.
Note that $c = \mathsf{input\_conf}(\mathcal{E}_i)$ and $p_i$ decides $v'$ in $\mathcal{E}_i$ (to satisfy \emph{Termination} and \emph{Agreement}).

Consider now another infinite execution $\mathcal{E}' \in \mathit{execs}(\mathcal{A})$ which is identical to $\mathcal{E}_i$, except that only processes in $\pi(c')$ are correct in $\mathcal{E}'$.
As $c \sqsupseteq c'$, each process which belongs to $\pi(c')$ has identical proposals in $c'$ and $c$, which implies that $c' = \mathsf{input\_conf}(\mathcal{E}')$.
Moreover, observe that (1) $p_i \in \mathit{Correct}_{\mathcal{A}}(\mathcal{E}')$ (as $p_i \in \pi(c')$), and (2) $p_i$ decides $v'$ in $\mathcal{E}'$ as $\mathcal{E}'$ and $\mathcal{E}_i$ are indistinguishable to $p_i$.
Thus, we reach a contradiction as a correct process (namely, $p_i$) decides value $v' \notin \mathit{val}(c')$ in execution $\mathcal{E}'$ which corresponds to $c'$, thus violating $\mathit{val}$.
Consequently, the lemma holds.
\end{proof}
\section{Proof of the Reduction (Algorithm~\ref{algorithm:reduction})} \label{section:reduction_proof}

In this section, we prove the correctness of the reduction (\Cref{algorithm:reduction}) from weak consensus to any solvable non-trivial agreement problem $\mathcal{P}$.
First, we prove that $v_0' \neq v_1'$ (see \Cref{table:notation_reduction}).
Recall that $v_0'$ (resp., $v_1'$) is decided in the fully correct infinite execution $\mathcal{E}_0$ (resp., $\mathcal{E}_1$) of $\mathcal{A}$ such that $\mathsf{input\_conf}(\mathcal{E}_0) = c_0$ (resp., $\mathsf{input\_conf}(\mathcal{E}_1) = c_1$), where $\mathcal{A}$ is the fixed algorithm which solves $\mathcal{P}$.

\begin{lemma} \label{lemma:reduction_helper}
$v_0' \neq v_1'$.
\end{lemma}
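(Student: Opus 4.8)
The statement $v_0' \neq v_1'$ follows directly from \Cref{lemma:canonical_containment} applied to the execution $\mathcal{E}_1$, together with the defining property of $c_1^*$. First I would invoke the fact (recorded in \Cref{table:notation_reduction}) that $c_1 \in \mathcal{I}_n$ satisfies $c_1 \sqsupseteq c_1^*$, so $c_1^* \in \mathit{Cnt}(c_1)$. Since $\mathcal{E}_1$ is an infinite execution with $\mathsf{input\_conf}(\mathcal{E}_1) = c_1$ and $\mathcal{A}$ satisfies \emph{Termination}, some correct process decides some value in $\mathcal{E}_1$; by \emph{Agreement}, that value is the well-defined $v_1'$ from the table. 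Applying \Cref{lemma:canonical_containment} to $\mathcal{E}_1$ and $c_1$ then yields $v_1' \in \bigcap_{c' \in \mathit{Cnt}(c_1)} \mathit{val}(c') \subseteq \mathit{val}(c_1^*)$, because $c_1^* \in \mathit{Cnt}(c_1)$.

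The second ingredient is the definition of $c_1^*$: it is chosen precisely so that $v_0' \notin \mathit{val}(c_1^*)$ (such a $c_1^*$ exists because $\mathcal{P}$ is non-trivial, i.e.\ no value lies in $\bigcap_{c \in \mathcal{I}} \mathit{val}(c)$). Combining the two facts, $v_1' \in \mathit{val}(c_1^*)$ while $v_0' \notin \mathit{val}(c_1^*)$, forces $v_0' \neq v_1'$. That is the whole argument; there is no real obstacle here, since all the heavy lifting is done by \Cref{lemma:canonical_containment} and by the care taken in setting up $c_0$, $c_1$, $c_1^*$ in \Cref{table:notation_reduction}. The only point worth stating explicitly is the set inclusion $\bigcap_{c' \in \mathit{Cnt}(c_1)} \mathit{val}(c') \subseteq \mathit{val}(c_1^*)$, which holds simply because $c_1^*$ is one of the sets being intersected.

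\begin{proof}
By construction (see \Cref{table:notation_reduction}), $c_1 \in \mathcal{I}_n$ satisfies $c_1 \sqsupseteq c_1^*$, hence $c_1^* \in \mathit{Cnt}(c_1)$.
Moreover, $\mathcal{E}_1$ is an infinite execution of $\mathcal{A}$ with $\mathsf{input\_conf}(\mathcal{E}_1) = c_1$, and $v_1'$ is the value decided in $\mathcal{E}_1$ (which exists as $\mathcal{A}$ satisfies \emph{Termination} and \emph{Agreement}).
Therefore, \Cref{lemma:canonical_containment} implies that
\[
v_1' \in \bigcap\limits_{c' \in \mathit{Cnt}(c_1)} \mathit{val}(c') \subseteq \mathit{val}(c_1^*),
\]
where the inclusion holds because $c_1^* \in \mathit{Cnt}(c_1)$.
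On the other hand, $c_1^*$ is chosen so that $v_0' \notin \mathit{val}(c_1^*)$; such an input configuration exists since $\mathcal{P}$ is non-trivial.
As $v_1' \in \mathit{val}(c_1^*)$ and $v_0' \notin \mathit{val}(c_1^*)$, we conclude that $v_0' \neq v_1'$.
\end{proof}
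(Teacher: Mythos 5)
Your proof is correct and follows exactly the same route as the paper: apply \Cref{lemma:canonical_containment} to $\mathcal{E}_1$ to get $v_1' \in \mathit{val}(c_1^*)$ via $c_1 \sqsupseteq c_1^*$, and conclude from $v_0' \notin \mathit{val}(c_1^*)$. You simply spell out the intermediate set inclusion that the paper leaves implicit.
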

\begin{proof}
Due to \Cref{lemma:canonical_containment}, $v_1'$ is admissible according to $c_1^*$ (as $c_1 \sqsupseteq c_1^*$; see \Cref{table:notation_reduction}).
As $v_0'$ is not admissible according to $c_1^*$, $v_0' \neq v_1'$.
\end{proof}

We are ready to prove the correctness of our reduction (\Cref{algorithm:reduction}).

\begin{lemma}
\Cref{algorithm:reduction} is a correct weak consensus algorithm with the same message complexity as $\mathcal{A}$.
\end{lemma}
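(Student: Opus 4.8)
The plan is to verify three things: (i) \Cref{algorithm:reduction} satisfies \emph{Termination}, (ii) it satisfies \emph{Agreement}, and (iii) it satisfies \emph{Weak Validity}; the message complexity claim is then immediate since the reduction sends no messages of its own and merely relays calls to and from $\mathcal{A}$, so every execution of the reduction has exactly the message complexity of the corresponding execution of $\mathcal{A}$.

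For \emph{Termination} and \emph{Agreement}: in any execution of \Cref{algorithm:reduction}, every correct process invokes $\mathcal{A}.\mathsf{propose}(\cdot)$ exactly once (line~\ref{line:wbc_propose_c1} or line~\ref{line:wbc_propose_c2}), so the execution induces a genuine execution of $\mathcal{A}$. Since $\mathcal{A}$ solves $\mathcal{P}$, it satisfies \emph{Termination} and \emph{Agreement}; hence every correct process eventually receives $\mathcal{A}.\mathsf{decide}(\mathit{decision})$ for a common value $\mathit{decision} \in \mathcal{V}_O$, and then triggers $\mathsf{decide}(0)$ or $\mathsf{decide}(1)$ depending solely on whether $\mathit{decision} = v_0'$ (lines~\ref{line:wbc_decide_v0}--\ref{line:wbc_decide_1}). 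Because this test is deterministic and applied to the same $\mathit{decision}$ at every correct process, all correct processes decide the same bit from weak consensus, giving \emph{Agreement}; and since $\mathcal{A}$ terminates, so does the reduction, giving \emph{Termination}.

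For \emph{Weak Validity}: consider the execution where all $n$ processes are correct and all propose the same bit $b \in \{0,1\}$ to weak consensus. If $b = 0$, then by line~\ref{line:wbc_propose_c1} every process $p_i$ invokes $\mathcal{A}.\mathsf{propose}(\mathsf{proposal}(c_0[i]))$, so the resulting execution of $\mathcal{A}$ is a fully-correct execution corresponding to the input configuration $c_0$; since $\mathcal{A}$ is deterministic and $\mathcal{E}_0$ is defined as the infinite execution of $\mathcal{A}$ with $\mathsf{input\_conf}(\mathcal{E}_0) = c_0$, $\mathcal{A}$ decides $v_0'$, and therefore every correct process triggers $\mathsf{decide}(0)$ at line~\ref{line:wbc_decide_0}. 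Symmetrically, if $b = 1$, every process invokes $\mathcal{A}.\mathsf{propose}(\mathsf{proposal}(c_1[i]))$ (line~\ref{line:wbc_propose_c2}), $\mathcal{A}$ decides $v_1'$, and since $v_1' \neq v_0'$ by \Cref{lemma:reduction_helper}, the test at line~\ref{line:wbc_decide_v0} fails and every correct process triggers $\mathsf{decide}(1)$ at line~\ref{line:wbc_decide_1}. In both cases the sole admissible decision $b$ is the one decided, so \emph{Weak Validity} holds.

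The only genuinely load-bearing point is the strict inequality $v_0' \neq v_1'$, which is exactly \Cref{lemma:reduction_helper} (itself a consequence of \Cref{lemma:canonical_containment} applied to $c_1 \sqsupseteq c_1^*$ with $v_0' \notin \mathit{val}(c_1^*)$); everything else is a routine determinism/indistinguishability argument. I would also remark that the reduction triggers no communication, so the message complexity of each execution of \Cref{algorithm:reduction} equals that of the underlying execution of $\mathcal{A}$, whence the two algorithms have the same message complexity — completing the proof.
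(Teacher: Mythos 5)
Your proof is correct and follows essentially the same route as the paper's: \emph{Termination} and \emph{Agreement} are inherited directly from $\mathcal{A}$, \emph{Weak Validity} is established by the same two-case analysis of the fully correct all-$0$ and all-$1$ executions (using determinism to identify them with $\mathcal{E}_0$ and $\mathcal{E}_1$, and \Cref{lemma:reduction_helper} for $v_0' \neq v_1'$), and the message-complexity claim follows because the reduction adds no communication. No gaps.
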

\begin{proof}
\emph{Termination} and \emph{Agreement} of \Cref{algorithm:reduction} follow directly from the properties of $\mathcal{A}$.
Moreover, the message complexity of \Cref{algorithm:reduction} is identical to that of $\mathcal{A}$ as \Cref{algorithm:reduction} introduces no additional communication.
Thus, it is left to prove \emph{Weak Validity}, which we do by analyzing two fully correct executions in which all processes propose the same value:
\begin{compactitem}
    \item Assume that every process is correct and proposes $0$ to weak consensus (line~\ref{line:wbc_propose}).
    Therefore, every process proposes its proposal from the input configuration $c_0 \in \mathcal{I}_n$ (of $\mathcal{P}$) to the underlying algorithm $\mathcal{A}$ (line~\ref{line:wbc_propose_c1}).
    As every fully correct execution is completely determined by the proposal of processes, $\mathcal{A}$ exhibits the execution $\mathcal{E}_0$ which decides $v_0'$.
    Thus, all processes decide $0$ from weak consensus (line~\ref{line:wbc_decide_0}).

    \item Assume that every process is correct and proposes $1$ to weak consensus (line~\ref{line:wbc_propose}).
    Hence, every process proposes its proposal from the input configuration $c_1 \in \mathcal{I}_n$ (of $\mathcal{P}$) to the underlying algorithm $\mathcal{A}$ (line~\ref{line:wbc_propose_c2}).
    Hence, $\mathcal{A}$ exhibits the execution $\mathcal{E}_1$ which decides $v_1'$.
    By \Cref{lemma:reduction_helper}, $v_1' \neq v_0'$, which implies that all processes decide $1$ from weak consensus (line~\ref{line:wbc_decide_1}).
\end{compactitem}
As \emph{Termination}, \emph{Agreement} and \emph{Weak Validity} are proven to be satisfied, the lemma holds.
\end{proof}

\end{document}